\tikzset{
modal/.style={>=stealth’,shorten >=1pt,shorten <=1pt,auto,node distance=1.5cm,
semithick},
world/.style={circle,draw,minimum size=0.5cm,fill=gray!15},
point/.style={circle,draw,inner sep=0.5mm,fill=black},
reflexive above/.style={->,loop,looseness=7,in=120,out=60},
reflexive below/.style={->,loop,looseness=20,in=240,out=0},
reflexive left/.style={->,loop,looseness=7,in=150,out=210},
reflexive right/.style={->,loop,looseness=7,in=30,out=330}}
\newtheorem{thm}{Theorem}
\newtheorem{theorem}[thm]{Theorem}
\newtheorem{corollary}[thm]{Corollary}
\newtheorem{lemma}[thm]{Lemma}
\newtheorem{proposition}[thm]{Proposition}
\newtheorem{definition}[thm]{Definition}
\newtheorem{example}[thm]{Example}
\newcommand{\cA}{\mathcal{A}}
\newcommand{\cB}{\mathcal{B}}
\newcommand{\cL}{\mathcal{L}}
\newcommand{\cN}{\mathcal{N}}
\newcommand{\cS}{\mathcal{S}}
\newcommand{\cT}{\mathcal{T}}
\newcommand{\cX}{\mathcal{X}}
\newcommand{\aK}{\Box}
\newcommand{\fC}{\mathfrak{C}}
\newcommand{\Intr}[1]{\mathit{Int} #1 }
\newcommand{\Cl}[1]{\mathit{Cl} #1 }
\newcommand{\Model}{(X, \cT, v)}
\newcommand{\Lang}{\cL}
\newcommand{\LangB}{\cL_B}
\newcommand{\LangKK}{\cL_{K,\aK}}
\newcommand{\LangKKB}{\cL_{K,\Box,B}}
\newcommand{\LogKK}{\mathsf{EL}_{K,\Box}}
\newcommand{\LogB}{\mathsf{SEL}_{K,\Box,B}}
\newcommand{\wLogB}{\mathsf{EL}_{K,\Box,B}}
\newcommand{\imp}{\rightarrow}
\newcommand{\proves}{\vdash}
\newcommand{\T}{\top}
\newcommand{\M}{\hat{K}}
\newcommand{\aM}{\Diamond}
\newcommand{\MB}{\hat{B}}
\renewcommand{\phi}{\varphi}
\newcommand{\br}[1]{[\![ #1]\!]}
\newcommand{\mysetminusD}{\hbox{\tikz{\draw[line width=0.6pt,line cap=round] (3pt,0) -- (0,6pt);}}}
\newcommand{\mysetminusT}{\mysetminusD}
\newcommand{\mysetminusS}{\hbox{\tikz{\draw[line width=0.45pt,line cap=round] (2pt,0) -- (0,4pt);}}}
\newcommand{\mysetminusSS}{\hbox{\tikz{\draw[line width=0.4pt,line cap=round] (1.5pt,0) -- (0,3pt);}}}
\newcommand{\mysetminus}{\mathbin{\mathchoice{\mysetminusD}{\mysetminusT}{\mysetminusS}{\mysetminusSS}}}
\newcommand{\defin}[1]{\textbf{#1}}
\newcommand{\lthen}{\rightarrow}
\newcommand{\liff}{\leftrightarrow}
\newcommand{\falsum}{\bot}
\newcommand{\val}[1]{[\![ #1 ]\!]}
\newcommand{\sval}[1]{\| #1 \|}
\newcommand{\aval}[1]{[\kern-0.25em( #1 )\kern-0.25em]}
\renewcommand{\phi}{\varphi}
\newcommand{\rimp}{\Rightarrow}
\newcommand{\commentout}[1]{}
\renewcommand{\S}{\mathcal{S}}
\newcommand{\X}{\mathcal{X}}
\renewcommand{\L}{\mathcal{L}}
\renewcommand{\int}{\mathit{int}}
\newcommand{\cl}{\mathit{cl}}
\newcommand{\down}{{\downarrow}}
\newcommand{\amods}{\mathrel{\kern.2em|\kern-0.2em{\approx}\kern.2em}}
\newcommand{\notamods}{\mathrel{\kern.2em|\kern-0.2em{\not\approx}\kern.2em}}
\newcommand{\fullv}[1]{}
\newcommand{\shortv}[1]{#1}
\newcommand{\ayComment}[1]{}
\newcommand{\aybuke}[1]{{\color{magenta}{Aybuke: #1}}}
\newcommand{\draft}[1]{{\color{red}[\textsc{#1}]}}
\title{Logic and Topology for Knowledge, Knowability, and Belief \\  {\small Extended Abstract}}
\author{
Adam Bjorndahl
\institute{Carnegie Mellon University\\
      Pittsburgh, USA}
	\email{abjorn@andrew.cmu.edu}
	\and
Ayb\"{u}ke \"{O}zg\"{u}n
\institute{LORIA, CNRS-Univesit\'{e} de Lorraine \\ Nancy, France\\ 
ILLC, University of Amsterdam \\ Amsterdam, the Netherlands  }
	\email{a.ozgun@uva.nl}
}
\begin{document}

\maketitle

\begin{abstract}
In recent work, Stalnaker proposes a logical framework in which belief is realized as a weakened form of knowledge \cite{StalnakerDB}.
Building on Stalnaker's core insights,
and using frameworks developed in \cite{bjorndahl} and \cite{wollicpaper},
we employ \emph{topological} tools to refine and, we argue, improve on this analysis.
The structure of topological subset spaces allows for a natural distinction between what is \emph{known} and (roughly speaking) what is \emph{knowable}; we argue that the foundational axioms of Stalnaker's system rely intuitively on \textit{both} of these notions.
More precisely, we argue that the plausibility of the principles Stalnaker proposes relating knowledge and belief relies on a subtle equivocation between an ``evidence-in-hand'' conception of knowledge and a weaker ``evidence-out-there'' notion of what \textit{could come to be known}.
Our analysis leads to a trimodal logic of knowledge, knowability, and belief interpreted in topological subset spaces in which belief is definable in terms of \textit{knowledge and knowability}. We provide a sound and complete axiomatization for this logic as well as its uni-modal belief fragment. We then consider weaker logics that preserve suitable translations of Stalnaker's postulates, yet do not allow for any reduction of belief. We propose novel topological semantics for these irreducible notions of belief, generalizing our previous semantics, and provide sound and complete axiomatizations for the corresponding logics.
\end{abstract}



\commentout{
\begin{abstract}
In this work, we study topological  subset space semantics for knowledge, knowability and (evidence-based) belief. We start our investigation by carefully revising Stalnaker's bi-modal epistemic-doxastic system  \cite{StalnakerDB} by  distinguishing what is \emph{known} from what is \emph{knowable} based on the evidence possessed by the agent. Our analysis of his system leads to a trimodal logic of knowledge, knowability and belief where belief is defined in terms of the aforementioned epistemic notions. We propose topological semantics for this system in the style of subset space semantics and provide soundness and completeness results for the corresponding logic and its uni-modal belief fragment. Our definition of belief within this system depends on some arguable principles that do not occur in Stalnaker's original logic. We therefore generalize this setting to study weaker logics that omit these additional principles. We propose more general and novel topological semantics for weaker notions of belief that are not definable in terms of knowledge and knowability, and provide sound and complete axiomatizations for the corresponding logics.
\end{abstract}
}

\section{Introduction}

Epistemology has long been concerned with the relationship between knowledge and belief. There is a long tradition of attempting to strengthen the latter to attain a satisfactory notion of the former: belief might be improved to true belief, to ``justified'' true belief, to ``correctly justified'' true belief \cite{Clark}, to ``undefeated justified'' true belief \cite{lehrer,lehrerpaxson,klein,klein2}, and so on (see, e.g, \cite{sep-knowledge-analysis,rott} for a survey).
There has also been some interest in reversing this project---deriving belief from knowledge---or, at least, putting ``knowledge first'' \cite{Williamson}.
In this spirit, Stalnaker has proposed
a framework in which belief is realized as a weakened form of knowledge \cite{StalnakerDB}. More precisely, beginning with a logical system in which both belief and knowledge are represented as primitives, Stalnaker formalizes some natural-seeming relationships between the two, and proves on the basis of these relationships that belief can be \textit{defined} out of knowledge.

This project is of both conceptual and technical interest. Philosophically speaking, it provides a new perspective from which to investigate knowledge, belief, and their interplay. Mathematically, it offers a potential route by which to represent belief in formal systems that are designed to handle only knowledge. Both these themes underlie the present work. Building on Stalnaker's core insights, we employ \emph{topological} tools to refine and, we argue, improve on Stalnaker's original system.


Our work brings together two distinct lines of research.
Stalnaker's epistemic-doxastic axioms have motivated and inspired several prior topological proposals for the semantics of belief \cite{Ozgun13, loripaper, BBOSTbiLLC,wollicpaper}, including most recently and most notably a proposal by Baltag et al.~\cite{wollicpaper} that is essentially recapitulated in our strongest logic for belief (Section \ref{sec:rvs}). Our development of this logic, however (as well as the new, weaker logics we study in Section \ref{sec:wea}), relies crucially on a semantic framework defined in recent work by Bjorndahl \cite{bjorndahl} that distinguishes what is \emph{known} from (roughly speaking) what is \emph{knowable}.

\commentout{
Prior to the current work, Stalnaker's epistemic-doxastic logic has motivated and inspired several proposals of topological semantics for belief topological proposals for the semantics of belief \cite{Ozgun13, loripaper, BBOSTbiLLC, wollicpaper}.  Most recently, Baltag et al.~\cite{wollicpaper} proposed a topological semantics for knowledge and belief using dense open sets, meant to formalise Stalnaker's original system (presented in Section \ref{sec:stl}), in an evidential setting.  In this paper, we adopt their belief semantics based on a new modal language and semantic framework, which is an extension of Bjorndahl's logic of knowledge and knowability \cite{bjorndahl}. To this end, our current project is strongly related to both Bjorndahl's epistemic setting based on topological subset spaces \cite{bjorndahl} and the topological belief semantics of Baltag et al \cite{wollicpaper}.  We therefore discuss their role in developing the current setting in somewhat greater detail.

In recent work, Bjorndahl employs topological subset space models to provide a superior semantics for public announcements, making crucial use of the fact that these models
allow for a natural distinction between what is \emph{known} and (roughly speaking) what is \emph{knowable} \cite{bjorndahl}.
In this paper,
}

We argue that the foundational axioms of Stalnaker's system rely intuitively on \textit{both} of these notions at various points.
More precisely, we argue that the plausibility of the principles Stalnaker proposes relating knowledge and belief relies on a subtle equivocation between an ``evidence-in-hand'' conception of knowledge and a weaker ``evidence-out-there'' notion of what \textit{could come to be known}.
As such, we find it quite natural to study Stalnaker's principles in the richer semantic setting developed in \cite{bjorndahl},
which is based on \emph{topological subset spaces}, a class of epistemic models of growing interest in recent years \cite{moss92,dabrowski,bjorndahl,eumas,HvD-TARK15}. These models support a careful reworking of Stalnaker's system in a manner that respects the distinction described above,
yielding
a trimodal logic of knowledge, knowability, and belief that is our main object of study.

Subset spaces have been employed in the representation of a variety of epistemic notions, including knowledge, learning, and public announcement (see, e.g., \cite{moss92,heinemann08,baskent12,baskent11,HvD-SSL,wang13,agotnes13,heinemann10}),
but to the best of our knowledge this paper contains the first formalization of \textit{belief} in subset space semantics.
Stalnaker's original system is an extension of the basic logic of knowledge $\mathsf{S4}$;
belief emerges as a standard $\mathsf{KD45}$ modality, as it is often assumed to be, while knowledge turns out to satisfy the stronger but somewhat less common $\mathsf{S4.2}$ axioms.
Our system, by contrast, is an extension of the basic \textit{bimodal} logic of knowledge-and-knowability introduced in \cite{bjorndahl};
belief is similarly $\mathsf{KD45}$, while knowledge is $\mathsf{S5}$ and knowability is $\mathsf{S4}$; thus, our approach preserves what are arguably the desirable properties of belief while cleanly dividing ``knowledge'' into two conceptually distinct and familiar logical constructs.

In Stalnaker's system, belief can be defined in terms of knowledge; in our system, we prove that belief can be defined in terms of \textit{knowledge and knowability} (Proposition \ref{pro:eqv}).
This yields a purely topological interpretation of belief that coincides with that previously proposed by Baltag et al.~\cite{wollicpaper}:
roughly speaking, while knowledge is interpreted (as usual) as ``truth in all possible alternatives'', belief becomes ``truth in \textit{most} possible alternatives'', with the meaning of ``most'' cashed out topologically.
The conceptual underpinning of this interpretation of belief as developed by Baltag et al., and its connection to the present work, is discussed further in Section \ref{sec:dis}.


In this richer topological setting, the translation of Stalnaker's postulates do not in themselves entail that belief is reducible to knowledge (or even knowledge-and-knowability): our characterization of belief in these terms relies on two additional principles we call ``weak factivity'' and ``confident belief''. This motivates the study of weaker logical systems obtained by rejecting one or both of these principles. We initiate the investigation of these systems by proposing novel topological semantics that aim to capture the corresponding, irreducible notions of belief.

This rest of the paper is organized as follows.
In Section \ref{sec:kkb} we present Stalnaker's original system, motivate our objections to it, and introduce the formal logical framework that supports our revision. In Section \ref{sec:rvs} we present our revised system, explore its relationship to Stalnaker's system, and prove an analogue to Stalnaker's characterization result: belief can be defined out of knowledge \textit{and knowability}.%
\shortv{
We also establish that our system is sound and complete with respect to the class of topological subset models, and that the pure logic of belief it embeds is axiomatized by the standard $\mathsf{KD45}$ system.
}%
\fullv{
In Section \ref{sec:sac} we show that our system is sound and complete with respect to the class of topological subset models; we also establish that the pure logic of belief it embeds is axiomatized by the standard $\mathsf{KD45}$ system.
}%
In Section \ref{sec:wea} we investigate weaker logics as discussed above and develop the semantic tools needed to interpret belief in this more general context; we also provide soundness and completeness results for each of these logics. Section \ref{sec:dis} concludes.%
\shortv{
Due to length restrictions, several of the longer proofs are omitted from the main body;
they can all be found in the full version of this paper at \url{https://arxiv.org/abs/1612.02055}.
}

\section{Knowledge, Knowability, and Belief} \label{sec:kkb}

Given unary modalities $\star_{1}, \ldots, \star_{k}$, let $\L_{\star_{1}, \ldots, \star_{k}}$ denote the propositional language recursively generated by
$$
\phi ::= p \, | \, \lnot \phi \, | \, \phi \land \psi \, | \, \star_{i} \phi,
$$
where $p \in \textsc{prop}$, the (countable) set of \emph{primitive propositions}, and $1 \leq i \leq k$. Our focus in this paper is the trimodal language $\L_{K,\Box,B}$ and various fragments thereof, where we read $K\phi$ as ``the agent knows $\phi$'', $\Box \phi$ as ``$\phi$ is knowable'' or ``the agent could come to know $\phi$'', and $B\phi$ as ``the agent believes $\phi$''. The Boolean connectives $\lor$, $\lthen$ and $\liff$ are defined as usual, and $\falsum$ is defined as an abbreviation for $p \land \lnot p$. We also employ $\M$ as an abbreviation for $\neg K \neg$, $\aM$ for $\neg\aK\neg$, and $\MB$ for $\neg B\neg$. 
\begin{table}[htp]
\begin{center}
\begin{tabularx}{\textwidth}{>{\hsize=.6\hsize}X>{\hsize=1.3\hsize}X>{\hsize=1.1\hsize}X}
\toprule
(K$_{\star}$) & $\proves \star(\phi \imp \psi) \imp (\star\phi \imp \star\psi)$ & Distribution\\
(D$_{\star}$) & $\proves \star\phi \imp \lnot\star\lnot\phi$ & Consistency\\
(T$_{\star}$) & $\proves \star\phi \imp \phi$ & Factivity\\
(4$_{\star}$) & $\proves \star\phi \imp \star\star\phi$ & Positive introspection\\
(.2$_{\star}$) & $\proves \lnot\star\lnot\star\phi \imp \star\lnot\star\lnot\phi$ & Directedness \\
(5$_{\star}$) & $\proves \lnot\star\phi \imp \star\lnot\star\phi$ & Negative introspection\\
(Nec$_{\star}$) & from $\proves \phi$ infer $\proves \star \phi$ & Necessitation\\
\bottomrule
\end{tabularx}
\end{center}
\caption{Some axiom schemes and a rule of inference for $\star$} \label{tbl:axs}
\end{table}%

Let $\mathsf{CPL}$ denote an axiomatization of classical propositional logic. Then, following standard naming conventions, we define the following logical systems:
$$
\begin{array}{rcl}
\mathsf{K}_{\star} & = & \mathsf{CPL} \textrm{ + (K$_{\star}$) + (Nec$_{\star}$)}\\
\mathsf{S4}_{\star} & = & \mathsf{K}_{\star} \textrm{ + (T$_{\star}$) + (4$_{\star}$)}\\
\mathsf{S4.2}_{\star} & = & \mathsf{S4}_{\star} \textrm{ + (.2$_{\star}$)}\\
\mathsf{S5}_{\star} & = & \mathsf{S4}_{\star} \textrm{ + (5$_{\star}$)}\\
\mathsf{KD45}_{\star} & = & \mathsf{K}_{\star} \textrm{ + (D$_{\star}$) + (4$_{\star}$) + (5$_{\star}$)}.
\end{array}
$$

\subsection{Stalnaker's system} \label{sec:stl}

Stalnaker \cite{StalnakerDB} works with the language $\L_{K,B}$, augmenting the logic $\mathsf{S4}_{K}$ with the additional axioms schemes presented in Table \ref{tbl:stl}.
\begin{table}[htp]
\begin{center}
\begin{tabularx}{\textwidth}{>{\hsize=.6\hsize}X>{\hsize=1.3\hsize}X>{\hsize=1.1\hsize}X}
\toprule
(D$_{B}$) & $\proves B\phi \imp \lnot B\lnot\phi$ & Consistency of belief\\
(sPI) & $\proves B\phi \imp KB\phi$ & Strong positive introspection\\
(sNI) & $\proves \lnot B\phi \imp K\lnot B\phi$ & Strong negative introspection\\
(KB) & $\proves K\phi \imp B\phi$ & Knowledge implies belief\\
(FB) & $\proves B\phi \imp BK\phi$ & Full belief\\
\bottomrule
\end{tabularx}
\end{center}
\caption{Stalnaker's additional axiom schemes}\label{tbl:stl}
\end{table}%
Let $\mathsf{Stal}$ denote this combined logic. Stalnaker proves that this system yields the pure belief logic $\mathsf{KD45}_{B}$; moreover, he shows that $\mathsf{Stal}$ proves the following equivalence: $B\phi \liff \M K\phi$. Thus, belief in this system is reducible to knowledge; every formula of $\L_{K,B}$ can be translated into a provably equivalent formula in $\L_{K}$. Stalnaker also shows that although only the $\mathsf{S4}_{K}$ system is assumed for knowledge, $\mathsf{Stal}$ actually derives the stronger system $\mathsf{S4.2}_{K}$.

What justifies the assumption of these particular properties of knowledge and belief?
It is, of course, possible to object to any of them (including the features of knowledge picked out by the system $\mathsf{S4}_{K}$); however, in this paper we focus on the relationships expressed in (KB) and (FB). That knowing implies believing is widely taken for granted---loosely speaking, it corresponds to a conception of knowledge as a special kind of belief. Full belief,\footnote{Stalnaker calls this property ``strong belief'' but we, following \cite{loripaper,jplpaper}, adopt the term ``full belief'' instead.} on the other hand, may seem more contentious; this is because it is keyed to a rather strong notion of belief. The English verb ``to believe'' has a variety of uses that vary quite a bit in the nature of the attitude ascribed to the subject. For example, the sentence, ``I believe Mary is in her office, but I'm not sure'' makes a clearly possibilistic claim, whereas, ``I believe that nothing can travel faster than the speed of light'' might naturally be interpreted as expressing a kind of certainty. It is this latter sense of belief that Stalnaker seeks to capture: belief as \textit{subjective certainty}. On this reading, (FB) essentially stipulates that being certain is not subjectively distinguishable from knowing: an agent who feels certain that $\phi$ is true also feels certain that she \textit{knows} that $\phi$ is true.

At a high level, then, each of (KB) and (FB) have a certain plausibility. Crucially, however, we contend that their \textit{joint} plausibility is predicated on an abstract conception of knowledge that permits a kind of equivocation. In particular, tension between the two emerges when knowledge is interpreted more concretely in terms of what is justified by a body of evidence.

Consider the following informal account of knowledge: an agent \emph{knows} something just in case it is entailed by the available evidence. To be sure, this is still vague since we have not yet specified what ``evidence'' is or what ``available'' means (we return to formalize these notions in Section \ref{section:subspace}). But it is motivated by a very commonsense interpretation of knowledge, as for example in a card game when one player is said to \textit{know} their opponent is not holding two aces on the basis of the fact that they are themselves holding three aces.

Even at this informal level, one can see that something like this conception of knowledge lies at the root of the standard \emph{possible worlds semantics} for epistemic logic. Roughly speaking, such semantics work as follows: each world $w$ is associated with a set of \emph{accessible} worlds $R(w)$, and the agent is said to \emph{know $\phi$ at $w$} just in case $\phi$ is true at all worlds in $R(w)$. A standard intuition for this interpretation of knowledge is given in terms of evidence: the worlds in $R(w)$ are exactly those compatible with the agent's evidence at $w$, and so the agent knows $\phi$ just in case the evidence rules out all not-$\phi$ possibilities. Suppose, for instance, that you have measured your height and obtained a reading of 5 feet and 10 inches $\pm 1$ inch. With this measurement in hand, you can be said to \textit{know} that you are less than 6 feet tall, having ruled out the possibility that you are taller.

Call this the \emph{evidence-in-hand} conception of knowledge. Observe that it fits well with the (KB) principle: evidence-in-hand that entails $\phi$ should surely also cause you to believe $\phi$. On the other hand, it does not sit comfortably with (FB): presumably you can be (subjectively) certain of $\phi$ without simultaneously being certain that you currently have evidence-in-hand that guarantees $\phi$, lest we lose the distinction between belief and knowledge.\footnote{This assumes, roughly speaking, that evidence-in-hand is ``transparent'' in the sense that the agent cannot be mistaken about what evidence she has or what it entails. A model rich enough to represent this kind of uncertainty about evidence might therefore be of interest; we leave the development of such a model to other work.} However, the intuition for (FB) can be recovered by shifting the meaning of ``available evidence'' to a weaker existential claim: that \textit{there is} evidence entailing $\phi$---even if you don't happen to personally have it in hand at the moment. This corresponds to a transition from the known to the knowable. On this account, (FB) is recast as ``If you are certain of $\phi$, then you are certain that there is evidence entailing $\phi$'', a sort of dictum of responsible belief: do not believe anything unless you think you could come to know it. Returning to (KB), on the other hand, we see that it is not supported by this weaker sense of evidence-availability: the fact that you could, in principle, discover evidence entailing $\phi$ should not in itself imply that you believe $\phi$.

This way of reconciling Stalnaker's proposed axioms with an evidence-based account of knowledge---namely, by carefully distinguishing between knowledge and knowability---is the focus of the remainder of this paper. We begin by defining a class of models rich enough to interpret both of these modalities at once.

\subsection{Topological subset models}\label{section:subspace}

A \defin{subset space} is a pair $(X,\S)$ where $X$ is a nonempty set of \emph{worlds} and $\S \subseteq 2^{X}$ is a collection of subsets of $X$. A \defin{subset model} $\X = (X,\S,v)$ is a subset space $(X,\S)$ together with a function $v: \textsc{prop} \to 2^{X}$ specifying, for each primitive proposition $p \in \textsc{prop}$, its \emph{extension} $v(p)$.

Subset space semantics interpret formulas not at worlds $x$ but at \emph{epistemic scenarios} of the form $(x,U)$, where $x \in U \in \S$. Let $ES(\X)$ denote the collection of all such pairs in $\X$. Given an epistemic scenario $(x,U) \in ES(\X)$, the set $U$ is called its \emph{epistemic range}; intuitively, it represents the agent's current information as determined, for example, by the measurements she has taken. We interpret $\L_{K}$ in $\X$ as follows:
$$
\begin{array}{lcl}
(\X,x,U) \models p & \textrm{ iff } & x \in v(p)\\
(\X,x,U) \models \lnot \phi & \textrm{ iff } & (\X,x,U) \not\models \phi\\
(\X,x,U) \models \phi \land \psi & \textrm{ iff } & (\X,x,U) \models \phi \textrm{ and } (\X,x,U) \models \psi\\
(\X,x,U) \models K \phi & \textrm{ iff } & (\forall y \in U)((\X,y,U) \models \phi).
\end{array}
$$
Thus, knowledge is cashed out as truth in all epistemically possible worlds, analogously to the standard semantics for knowledge in relational models. A formula $\phi$ is said to be \defin{satisfiable in $\X$} if there is some $(x,U) \in ES(\X)$ such that $(\X,x,U) \models \phi$, and \defin{valid in $\X$} if for all $(x,U) \in ES(\X)$ we have $(\X,x,U) \models \phi$. The set
$$\val{\phi}_{\X}^{U} = \{x \in U \: : \: (\X,x,U) \models \phi\}$$   
is called the \defin{extension of $\phi$ under $U$}. We sometimes drop mention of the subset model $\X$ when it is clear from context.

Subset space models are well-equipped to give an account of evidence-based knowledge and its \textit{dynamics}. Elements of $\cS$ can be thought of as potential pieces of evidence, while the epistemic range $U$ of an epistemic scenario $(x, U)$ corresponds to the ``evidence-in-hand'' by means of which the agent's knowledge is evaluated. This is made precise in the semantic clause for $K\phi$, which stipulates that the agent knows $\phi$ just in case $\phi$ is entailed by the evidence-in-hand.

In this framework, stronger evidence corresponds to a smaller epistemic range, and whether a given proposition can come to be known corresponds (roughly speaking) to whether there exists a sufficiently strong piece of (true) evidence that entails it. This notion is naturally and succinctly formalized \emph{topologically}.

A \defin{topological space} is a pair $(X,\cT)$ where $X$ is a nonempty set and $\cT \subseteq 2^{X}$ is a collection of subsets of $X$ that covers $X$ and is closed under finite intersections and arbitrary unions. The collection $\cT$ is called a \emph{topology on $X$} and elements of $\cT$ are called \emph{open} sets. In what follows we assume familiarity with basic topological notions; for a general introduction to topology we refer the reader to \cite{dugundji,engelking}.

A \defin{topological subset model} is a subset model $\X = (X,\cT,v)$ in which $\cT$ is a topology on $X$. Clearly every topological space is a subset space.
But topological spaces possess additional structure that enables us to study the kinds of epistemic dynamics we are interested in. More precisely, we can capture a notion of knowability via the following definition: for $A \subseteq X$, say that $x$ lies in the \defin{interior} of $A$ if there is some $U \in \cT$ such that $x \in U \subseteq A$. The set of all points in the interior of $A$ is denoted $\int(A)$; it is easy to see that $\int(A)$ is the largest open set contained in $A$.  Given an epistemic scenario $(x,U)$ and a primitive proposition $p$, we have $x \in \int(\val{p}^{U})$ precisely when there is some evidence $V \in \cT$ that is true at $x$ and that entails $p$.
We therefore interpret the extended language $\L_{K,\Box}$ that includes the ``knowable'' modality in $\X$ via the additional recursive clause
$$
\begin{array}{lcl}
(\X,x,U) \models \Box \phi & \textrm{ iff } & x \in \int(\val{\phi}^{U}).
\end{array}
$$
The formula $\Box\varphi$ thus represents knowability as a restricted existential claim over the set $\cT$ of available pieces of evidence. The dual modality correspondingly satisfies
$$
\begin{array}{lcl}
(\X,x,U) \models \Diamond \phi & \textrm{ iff } & x \in \cl(\val{\phi}^{U}),
\end{array}
$$
where $\cl$ denotes the topological closure operator.\footnote{It is not hard to see that $\val{\Box \phi}^{U} = \int(\val{\phi}^{U})$ as one might expect; however, since the closure of $\val{\phi}^{U}$ need not be a subset of $U$, we have $\val{\Diamond \phi}^{U} = \cl(\val{\phi}^{U}) \cap U$.} Since the formula $\Box \lnot \phi$ reads as ``the agent could come to know that $\phi$ is false'', one intuitive reading of its negation, $\Diamond \phi$, is ``$\phi$ is unfalsifiable''.


It is worth noting that the intuition behind reading $\Box \phi$ as ``$\phi$ is knowable'' can falter when $\phi$ is itself an epistemic formula. In particular, if $\phi$ is the Moore sentence $p \land \lnot K p$, then $K \phi$ is not satisfiable in any subset model, so in this sense $\phi$ can never be known; nonetheless, $\Box \phi$ \textit{is} satisfiable. Loosely speaking, this is because our language abstracts away from the implicit temporal dimension of knowability; $\Box \phi$ might be more accurately glossed as ``one could come to know what $\phi$ \textit{used to express} (before you came to know it)''.\footnote{This reading suggests a strong link to \emph{conditional belief modalities}, which are meant to capture an agent's revised beliefs about how the world was before learning the new information. More precisely, a conditional belief formula $B^{\phi}\psi$ is read as ``if the agent would learn $\varphi$, then she would come
to believe that $\psi$ was the case (before the learning)'' \cite[p. 14]{QualitativeBR}. Borrowing this interpretation, we might say that $\Box\varphi$ represents hypothetical, conditional knowledge of $\varphi$ where the condition consists in having some piece of evidence $V$ entailing $\varphi$ as evidence-in-hand: ``if the agent were to have $V$ as evidence-in-hand, she would know $\varphi$ was the case (before having had the evidence)''.}
Since primitive propositions do not change their truth value based on the agent's epistemic state, this subtlety is irrelevant for propositional knowledge and knowability. For the purposes of this paper, we opt for the simplified ``knowability'' gloss of the $\Box$ modality, and leave further investigation of this subtlety to future work.

\section{Stalnaker's System Revised} \label{sec:rvs}

Like Stalnaker, we augment a basic logic of knowledge with some additional axiom schemes that speak to the relationship between belief and knowledge. Unlike Stalnaker, however, we work with the language $\L_{K,\Box,B}$ and take as our ``basic logic of knowledge'' the system
$$
\begin{array}{rcl}
\LogKK & = & \mathsf{S5}_{K} + \mathsf{S4}_{\Box} \textrm{ + (KI)},
\end{array}
$$
where (KI) denotes the axiom scheme $K \phi \imp \Box \phi$. As noted in Section \ref{sec:stl}, the evidence-in-hand conception of knowledge captured by the semantics for $K$ is based on the premise that evidence-in-hand is completely transparent to the agent. That is, the agent is aware that she has the evidence she does and of what it entails and does not entail. In this sense, the agent is fully introspective with regard to the evidence-in-hand, and as such, $K$ naturally emerges as an $\mathsf{S5}$-type modality.

The system $\LogKK$ was defined by Bjorndahl \cite{bjorndahl} and shown to be exactly the logic of topological subset spaces.


\begin{theorem}[\cite{bjorndahl}] \label{theorem:bjo}
$\LogKK$ is a sound and complete axiomatization of $\LangKK$ with respect to the class of all topological subset spaces: for every $\phi \in \L_{K,\Box}$, $\phi$ is provable in $\LogKK$ if and only if $\phi$ is valid in all topological subset models.
\end{theorem}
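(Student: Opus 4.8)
The plan is a standard two-part argument: an essentially routine soundness check, and a completeness proof that passes through the canonical relational model and then converts it to a topological subset model by the Alexandrov construction. The organizing observation is that, since the epistemic range of a scenario $(x,U)$ is never shrunk by either modality, a topological subset model behaves at each fixed range $U$ — in particular at the range $X$ — exactly like the topological space $U$ (with its subspace topology), read with $K$ as the universal modality over $U$ and $\Box$ as the interior operator. Thus $\LogKK$ is really the bimodal logic of the interior operator together with the universal modality, and the techniques below are the usual ones for that logic.

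\emph{Soundness.} For $(\X,x,U)$ the clause for $K$ is the universal modality over $U$ and the clause for $\Box$ is $A \mapsto \int(A)$ on subsets of $U$ (the subspace interior, which agrees with the ambient one since $U$ is open). Hence $K$ validates $\mathsf{S5}$ (reflexivity since $x \in U$; transitivity and Euclideanness since the truth value of $K\phi$ at $(y,U)$ does not depend on $y$), $\Box$ validates $\mathsf{S4}$ (from $\int(A) \sse A$, $\int\int(A) = \int(A)$, $\int(A\inter B) = \int(A)\inter\int(B)$, and $\int(U) = U$), and (KI) holds because $(x,U) \models K\phi$ forces $\val{\phi}^U = U$, whence $x \in U = \int(U) = \int(\val{\phi}^U)$. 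Distribution and necessitation for both operators are immediate, so every theorem of $\LogKK$ is valid in all topological subset models.

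\emph{Completeness.} Given a $\LogKK$-consistent $\phi$, extend $\lnot\phi$ to a maximal consistent set $\Gamma$ and form the canonical relational model with canonical accessibility relations $R_K$ and $R_\Box$. As usual $R_K$ is an equivalence relation (from (T$_K$), (4$_K$), (5$_K$)) and $R_\Box$ is a preorder (from (T$_\Box$), (4$_\Box$)). The essential use of (KI) is that it forces $R_\Box \sse R_K$ in the canonical model: if $\Delta R_\Box \Theta$ and $K\psi \in \Delta$, then $\Box\psi \in \Delta$ by (KI), hence $\psi \in \Theta$. Therefore the $R_K$-equivalence class $W_\Gamma$ of $\Gamma$ is closed under both relations, so passing to the submodel generated by $\Gamma$ (whose carrier is $W_\Gamma$, and which preserves the truth of all formulas at its points) we obtain a model on $W_\Gamma$ in which $R_K$ is the total relation and $R_\Box$ is still a preorder. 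Now topologize: let $\cT$ be the Alexandrov topology on $W_\Gamma$ whose open sets are the $R_\Box$-up-sets; a direct check shows $\int(A) = \{\Delta : R_\Box(\Delta) \sse A\}$ for every $A \sse W_\Gamma$, so the topological interior coincides with the relational $\Box$. Let $v$ restrict the canonical valuation and put $\X = (W_\Gamma,\cT,v)$. Since $W_\Gamma$ is itself open, $(\Delta,W_\Gamma)$ is an epistemic scenario for each $\Delta \in W_\Gamma$, and I would prove the truth lemma — for $\Delta \in W_\Gamma$, $(\X,\Delta,W_\Gamma)\models\psi$ iff $\psi \in \Delta$ — by induction on $\psi$: atoms and Booleans are immediate; for $K$ one uses that $R_K$ is total on $W_\Gamma$ together with the existence lemma; and for $\Box$, the induction hypothesis gives $\val{\psi}^{W_\Gamma} = \{\Theta : \psi \in \Theta\}$, so $\Delta \in \int(\val{\psi}^{W_\Gamma})$ iff $R_\Box(\Delta) \sse \{\Theta : \psi \in \Theta\}$ iff $\Box\psi \in \Delta$. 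Evaluating at $\Gamma$ yields $(\X,\Gamma,W_\Gamma)\models\lnot\phi$, so $\phi$ is not valid.

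\emph{Main obstacle.} There is no deep obstacle once the reduction above is recognized; all the ingredients — canonical model, generated submodel, Alexandrov representation of $\mathsf{S4}$ — are textbook. The one point deserving care is that (KI) is exactly what licenses restricting to a single $R_K$-cluster: without $R_\Box \sse R_K$ the class $W_\Gamma$ need not be $R_\Box$-closed and the $\Box$-step of the truth lemma would fail. The remaining care is merely the bookkeeping that the fixed epistemic range causes no trouble, since $W_\Gamma$ is the whole — hence open — carrier of the model being built.
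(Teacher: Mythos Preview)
Your argument is correct. Note, however, that the paper does not prove this theorem itself---it is quoted from \cite{bjorndahl}---so there is no in-paper proof to compare against directly. From the canonical-model construction given later for the extended system $\wLogB$ (which the authors explicitly describe as an extension of the one in \cite{bjorndahl}), one can infer that Bjorndahl's original completeness proof differs from yours in its choice of topology: the canonical model there is built on the full set $X^c$ of maximal consistent sets, with the topology generated by the basis $\{[x] \cap \widehat{\Box\varphi} : x \in X^c,\ \varphi \in \LangKK\}$, and the truth lemma is established at scenarios of the form $(x,[x])$. You instead restrict immediately to a single $R_K$-cluster and equip it with the Alexandrov topology of $R_\Box$. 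Both routes are valid; yours makes the $\Box$-step of the truth lemma purely mechanical (interior coincides with the relational box by construction), whereas the $\widehat{\Box\varphi}$-basis approach needs a short compactness argument in that step but retains the full canonical carrier---which is exactly what the paper later exploits when it grafts on the semantics for $B$.

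One small slip: you write ``given a $\LogKK$-consistent $\phi$, extend $\lnot\phi$ to a maximal consistent set''. If $\phi$ is consistent, $\lnot\phi$ need not be. You mean: given $\phi$ not provable, $\{\lnot\phi\}$ is consistent, and one extends that. The remainder of the argument and the conclusion ``$\phi$ is not valid'' match this intended reading.
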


We strengthen $\LogKK$ with the additional axiom schemes given in Table \ref{tbl:axi}.
\begin{table}[htp]
\begin{center}
\begin{tabularx}{\textwidth}{>{\hsize=.6\hsize}X>{\hsize=1.3\hsize}X>{\hsize=1.1\hsize}X}
\toprule
(K$_{B}$) & $\proves B(\phi \imp \psi) \imp (B\phi \imp B\psi)$ & Distribution of belief\\
(sPI) & $\proves B\phi \imp KB\phi$ & Strong positive introspection\\
(KB) & $\proves K\phi \imp B \phi$ & Knowledge implies belief\\
(RB) & $\proves B\phi \imp B\Box\phi$ & Responsible belief\\
(wF) & $\proves B \phi \imp \Diamond \phi$ & Weak factivity\\
(CB) & $\proves B(\Box \phi \lor \Box \lnot \Box \phi)$ & Confident belief\\
\bottomrule
\end{tabularx}
\end{center}
\caption{Additional axioms schemes for $\LogB$} \label{tbl:axi}
\end{table}%
Let $\LogB$ denote the resulting logic.
Schemes
(sPI) and (KB) occur here just as they do in Stalnaker's original system (Table \ref{tbl:stl}), and though (K$_{B}$) is not an axiom of $\mathsf{Stal}$, it is derivable in that system. The remaining axioms involve the $\Box$ modality and thus cannot themselves be part of Stalnaker's system; however, if we forget the distinction between $\Box$ and $K$ (and between $\Diamond$ and $\hat{K}$), all of them do hold in $\mathsf{Stal}$, as made precise in Proposition \ref{pro:for}.

\begin{proposition} \label{pro:for}
Let $t\colon \L_{K,\Box,B} \to \L_{K,B}$ be the map that replaces each instance of $\Box$ with $K$. Then for every $\phi$ that is an instance of an axiom scheme from Table \ref{tbl:axi}, we have $\proves_{\mathsf{Stal}} t(\phi)$.
\end{proposition}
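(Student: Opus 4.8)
The plan is to argue scheme by scheme. Since $t$ commutes with the Boolean connectives and with $B$ and $K$, it suffices to show, for each of the six schemes in Table~\ref{tbl:axi}, that every $\L_{K,B}$-instance of its $t$-image is provable in $\mathsf{Stal}$. Four of the six are essentially immediate: (K$_B$) contains no occurrence of $\Box$, so $t$ fixes it, and it is a theorem of $\mathsf{Stal}$ because (as Stalnaker shows) $\mathsf{Stal}$ derives the normal system $\mathsf{KD45}_{B}$ for belief; likewise (sPI) and (KB) contain no $\Box$ and are literally among Stalnaker's axioms (Table~\ref{tbl:stl}); and $t$ carries (RB), namely $B\phi \imp B\Box\phi$, to $B\phi \imp BK\phi$, which is exactly Stalnaker's (FB).

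This leaves the two schemes that genuinely mention $\Diamond$. For (wF) we have $t(B\phi \imp \Diamond\phi) = (B\phi \imp \M\phi)$, i.e. $B\phi \imp \lnot K\lnot\phi$. Instantiating (KB) at $\lnot\phi$ and contraposing gives $\lnot B\lnot\phi \imp \lnot K\lnot\phi$; chaining this with the consistency axiom (D$_B$), $B\phi \imp \lnot B\lnot\phi$, yields the desired implication.

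For (CB), $t(B(\Box\phi \lor \Box\lnot\Box\phi)) = B(K\phi \lor K\lnot K\phi)$, and the key move is a classical case split on $B\phi$. If $B\phi$, then (FB) gives $BK\phi$, and since $K\phi$ tautologically entails $K\phi \lor K\lnot K\phi$, normality of $B$ (via the derivable (K$_B$) and (Nec$_B$)) delivers $B(K\phi \lor K\lnot K\phi)$. If $\lnot B\phi$, then (sNI) gives $K\lnot B\phi$; applying (Nec$_K$) and (K$_K$) to the contrapositive of (KB), $\lnot B\phi \imp \lnot K\phi$, yields $K\lnot B\phi \imp K\lnot K\phi$, hence $K\lnot K\phi$; combining this with (4$_K$) and (KB) gives $BK\lnot K\phi$, and one more appeal to normality of $B$ (using $K\lnot K\phi \imp K\phi \lor K\lnot K\phi$) gives $B(K\phi \lor K\lnot K\phi)$. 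Putting the two cases together finishes the proof.

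I expect (CB) to be the only real obstacle: because $\mathsf{Stal}$'s knowledge modality is only $\mathsf{S4.2}$ and not $\mathsf{S5}$, the disjunction $K\phi \lor K\lnot K\phi$ is \emph{not} itself a theorem of $\mathsf{Stal}$, so the claim must be obtained under the belief operator, which is precisely what the case split above exploits. Everything else is routine manipulation of Stalnaker's axioms together with the normality of $K$ and $B$.
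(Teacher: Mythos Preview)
Your proof is correct. For the four ``easy'' schemes you match the paper exactly, and your direct arguments for (wF) and (CB) go through as written; in particular, the case split on $B\phi$ for (CB) is sound, and your use of (sNI), the contrapositive of (KB), (4$_K$), and normality of $B$ is impeccable.

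The paper takes a different route for the two nontrivial schemes: rather than work directly from Stalnaker's axioms, it invokes the Stalnaker reduction $B\psi \liff \M K\psi$. For (wF) this gives $B\phi \lthen \M K\phi \lthen \M\phi$ via (T$_K$); for (CB) it reduces $B(K\phi \lor K\lnot K\phi)$ to the $\mathsf{S4}_K$-theorem $\M K(K\phi \lor K\lnot K\phi)$. Your approach has the virtue of being entirely self-contained---it uses only the listed axioms of $\mathsf{Stal}$ together with normality of $K$ and $B$, and never appeals to the derived equivalence or to a separate $\mathsf{S4}$ fact. The paper's approach is terser but leans on two results that are themselves nontrivial (especially the $\mathsf{S4}$ theorem, which requires the observation that $\M K\phi \lor K\lnot K\phi$ is $\mathsf{S4}$-valid). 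Either argument is perfectly acceptable here.
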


\begin{proof}
This is trivial for (sPI), (KB), and (RB).
The scheme
(K$_{B}$) follows immediately from the fact that $\mathsf{Stal}$ validates $\mathsf{KD45}_{B}$. After applying $t$, (wF) becomes $B \phi \lthen \M \phi$, which follows easily from the fact that $\proves_{\mathsf{Stal}} B \phi \liff \M K \phi$. Finally, under $t$, (CB) becomes $B(K \phi \lor K \lnot K \phi)$, which follows directly from the aforementioned equivalence together with the fact that $\proves_{\mathsf{S4}_{K}} \M K (K \phi \lor K \lnot K \phi)$.
\end{proof}

Thus, modulo the distinction between knowledge and knowablity, we make no assumptions beyond what follows from Stalnaker's own stipulations. Of course, the distinction between knowledge and knowability is crucial for us. Responsible belief differs from full belief in that $K$ is replaced by $\Box$, exactly as motivated in Section \ref{sec:stl}; it says that if you are sure of $\phi$, then you must also be sure that there is some evidence that entails $\phi$. Weak factivity and confident belief do not directly correspond to any of Stalnaker's axioms, but they are necessary to establish an analogue of Stalnaker's reduction of belief to knowledge (Proposition \ref{pro:eqv}). Of course, one need not adopt these principles; indeed, rejecting them allows one to assent to the spirit of Stalnaker's premises without committing oneself to his conclusion that belief can be defined out of knowledge (or knowability).
We return in Section \ref{sec:wea} to consider weaker logics that omit one or both of (wF) and (CB).

Weak factivity can be understood, given (KI), as a strengthening of the formula $B \phi \lthen \M \phi$ (which is provable in $\mathsf{Stal}$). Intuitively, it says that if you are certain of $\phi$, then $\phi$ must be compatible with all the available evidence (in hand or not). Thus, while belief is not required to be factive---you can believe false things---(wF) does impose a weaker kind of connection to the world---you cannot believe \textit{knowably} false things.

Confident belief expresses a kind of faith in the justificatory power of evidence. Consider the disjunction $\Box \phi \lor \Box \lnot \Box \phi$, which says that $\phi$ is either knowable or, if not, that you could come to know that it is unknowable. This is equivalent to the negative introspection axiom for the $\Box$ modality, and does not hold in general; topologically speaking, it fails at the boundary points of the extension of $\Box \phi$---where no measurement can entail $\phi$ yet every measurement leaves open the possibility that some further measurement will. What (CB) stipulates is that the agent is sure that they are not in such a ``boundary case''---that every formula $\phi$ is either knowable or, if not, knowably unknowable.

Stalnaker's reduction of belief to knowledge has an analogue in this setting: every formula in $\L_{K,\Box,B}$ is provably equivalent to a formula in $\L_{K,\Box}$ via the following equivalence.

\begin{proposition} \label{pro:eqv}
The formula $B\varphi\leftrightarrow K\aM\aK\varphi$ is provable in $\LogB$.
\end{proposition}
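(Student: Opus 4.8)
The plan is to prove the two implications $B\varphi \imp K\aM\aK\varphi$ and $K\aM\aK\varphi \imp B\varphi$ separately, using the axioms of $\LogB$ together with the known properties $\mathsf{S5}_K$, $\mathsf{S4}_\Box$, (KI), and the fact (inherited from Stalnaker's argument, reworked here) that the belief fragment validates $\mathsf{KD45}_B$. First I would establish the easy direction-by-direction facts: from (KI) one gets $\aM\varphi \imp \M\varphi$ (contrapositive of $K\neg\varphi \imp \aK\neg\varphi$), and from $\mathsf{S5}_K$ one has the usual $\M K$ collapse behaviour; these will be the workhorses for pushing modalities around.

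For the direction $B\varphi \imp K\aM\aK\varphi$: starting from $B\varphi$, apply (RB) to get $B\aK\varphi$, and iterate/combine with (CB). The key move is that (CB) gives $B(\aK\varphi \vel \aK\neg\aK\varphi)$, which together with $B\aK\varphi$ and the $\mathsf{KD45}_B$-distribution lets one derive $B\aK\varphi$ in a form that is ``confident'' — I expect to need to show $B\varphi \imp B\aK\aK\varphi$ or similar, essentially that belief already ``sees'' the negative-introspective behaviour of $\Box$ on $\varphi$. Then (wF) applied to the appropriate formula yields $\aM(\cdot)$, i.e.\ $\M\aK\varphi$-type information, and (sPI) together with $\mathsf{S5}_K$ (negative introspection for $K$, so $\M\psi \imp K\M\psi$ when $\psi$ is suitably introspective) lifts this to $K\M\aK\varphi = K\aM\aK\varphi$. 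Concretely: from $B\varphi$ infer $B\aK\varphi$ (RB); using (CB) and (K$_B$), $B\aK\varphi$ upgrades to something like $B\aK\aK\varphi$; apply (wF) to obtain $\aM\aK\varphi$; finally use $B\varphi \imp KB\varphi$ (sPI) plus the fact that $\aM\aK\varphi$ is $\aK$-introspective (it is of the form $\neg\aK\neg\aK\varphi$, and $\mathsf{S4}_\Box$ plus the derived $\mathsf{S4.2}$-style directedness controls its introspection) to conclude $K\aM\aK\varphi$. The delicate bookkeeping is getting the modalities to line up so that negative introspection of $K$ can be applied legitimately.

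For the converse $K\aM\aK\varphi \imp B\varphi$: by (KB), $K\aM\aK\varphi \imp B\aM\aK\varphi$, so it suffices to show $B\aM\aK\varphi \imp B\varphi$, i.e.\ (using $\mathsf{KD45}_B$) it suffices to show $\proves_{\LogB} \aM\aK\varphi \imp \varphi$ is \emph{not} available in general — so instead one argues inside the belief operator. The plan is: since belief is a normal $\mathsf{KD45}_B$ modality, $B\aM\aK\varphi$ and $B\varphi \liff B\varphi$ reduce the goal to showing that $\LogB$ proves $B\aM\aK\varphi \imp B\varphi$ directly. Here I would use (wF) in the contrapositive — $\neg\aM\varphi \imp \neg B\varphi$, equivalently $\aK\neg\varphi \imp \neg B\varphi$ — combined with (CB) and (sPI)/(sNI)-type introspection to push everything under $B$: inside the scope of $B$, the agent is confident that $\Box$ is negatively introspective (by (CB)), so inside $B$ the formula $\aM\aK\varphi$ simplifies, via $\mathsf{S4.2}_\Box$-reasoning available relative to the confident-belief assumption, to $\aK\varphi$ or even to $\varphi$; then (wF) and factivity-of-$\Box$ close the gap. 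In short: from $B\aM\aK\varphi$, use (CB) to get $B(\aK\varphi \vel \aK\neg\aK\varphi)$ and combine (via (K$_B$)) to conclude $B\aK\varphi$, whence $B\varphi$ using (wF)'s contrapositive together with consistency of belief to rule out $B\aK\neg\varphi$.

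The main obstacle I anticipate is the careful interleaving of the $\Box$-modality's weaker ($\mathsf{S4}$, not $\mathsf{S5}$) logic with the stronger introspection available for $K$ and for $B$: the identity $\aM\aK\varphi = \neg\aK\neg\aK\varphi$ is exactly the formula whose validity is governed by negative introspection for $\Box$, which fails in general but is ``believed'' by the agent via (CB). So the crux is to show that, \emph{under the hypothesis that we are inside a belief context}, one may reason as though $\Box$ were $\mathsf{S5}$ — and dually, that $K\aM\aK\varphi$ already encodes enough to recover $B\varphi$. Making these ``inside-$B$'' and ``inside-$K$'' simplifications rigorous, rather than hand-wavy, while only invoking the listed axioms and the derived $\mathsf{KD45}_B$/$\mathsf{S4.2}_\Box$ facts, is where the real work lies; everything else is routine normal-modal-logic manipulation.
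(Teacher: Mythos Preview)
Your proposal identifies the right axioms for each direction and has the correct high-level shape, but the execution is over-complicated and contains one genuine gap, one unnecessary detour, and one circularity risk.

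\textbf{Forward direction.} You correctly get $B\varphi \to \aM\aK\varphi$ from (RB) followed by (wF). The gap is the lift to $K\aM\aK\varphi$: you propose to use ``the fact that $\aM\aK\varphi$ is $\aK$-introspective'' together with negative introspection of $K$. But $\aM\aK\varphi$ is a $\Box$-formula, not a $K$-formula, so no introspection principle of the system gives $\aM\aK\varphi \to K\aM\aK\varphi$ directly. The paper's move is much simpler and avoids this entirely: once $\vdash B\varphi \to \aM\aK\varphi$ is a \emph{theorem}, apply (Nec$_K$) and (K$_K$) to obtain $\vdash KB\varphi \to K\aM\aK\varphi$, then chain with (sPI). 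No introspection of $\aM\aK\varphi$ is needed, and (CB) plays no role in this direction.

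\textbf{Backward direction.} Your key step is right: from $B\aM\aK\varphi$ (via (KB)) and $B(\aK\varphi \vee \aK\neg\aK\varphi)$ (which is (CB)), propositional reasoning under $B$ via (K$_B$) yields $B\aK\varphi$, since $\aM\aK\varphi = \neg\aK\neg\aK\varphi$ kills the second disjunct. The paper packages this as the propositional observation $(\aK\varphi \vee \aK\neg\aK\varphi) \to (\aM\aK\varphi \to \varphi)$, necessitated into $B$ and then combined with (CB) and (K$_B$). Your final step, however, is off: you invoke (wF) and consistency of belief to pass from $B\aK\varphi$ to $B\varphi$, but all that is needed is (T$_\Box$) under $B$, i.e.\ $\vdash \aK\varphi \to \varphi$, hence $\vdash B\aK\varphi \to B\varphi$ via (Nec$_K$), (KB), (K$_B$).

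\textbf{Circularity.} You repeatedly appeal to ``the derived $\mathsf{KD45}_B$ facts,'' but in the paper $\mathsf{KD45}_B$ is established \emph{after} this proposition, and its proof (specifically (5$_B$)) uses this very equivalence. Fortunately you do not actually need $\mathsf{KD45}_B$: (K$_B$) is an axiom of $\LogB$, and Nec$_B$ follows immediately from (Nec$_K$) and (KB). Likewise there are no ``derived $\mathsf{S4.2}_\Box$ facts'' in this system; $\Box$ is only $\mathsf{S4}$. Strip these appeals out and the argument stands.
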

\begin{proof}
We present an abridged derivation:
\vspace{-3mm}

\ayComment{
\begin{table}[htp]
\begin{tabularx}{1.29\textwidth}{>{\hsize=0.7\hsize}X>{\hsize=0.4\hsize}X>{\hsize=0.7\hsize}X>{\hsize=0.9\hsize}X>{\hsize=0.4t\hsize}X}
1. $B\varphi \rightarrow \aM\aK\varphi$ & RB, wF & 7.  $B(\Box \phi \lor \Box \lnot \Box \phi) \lthen B(\aM \aK \phi \lthen \phi)$ & Nec$_{K}$, KB, K$_{B}$\\
2. $KB\varphi\rightarrow K\aM\aK\varphi$ & Nec$_K$, K$_{K}$  & 8.  $B(\aM \aK \phi \lthen \phi)$ & $\mathsf{CPL}$: 5, 7\\
3. $B\varphi\rightarrow KB\varphi$ & sPI &9.  $B \aM \aK \phi \lthen B \phi$ & K$_{B}$\\
4. $B\varphi \rightarrow K\aM\aK\varphi$ & $\mathsf{CPL}$: 2, 3 & 10.  $K\aM\aK\varphi \rightarrow B\aM\aK\varphi$ & KB\\
5. $B(\Box \phi \lor \Box \lnot \Box \phi)$ & CB & 11.  $K\aM\aK\varphi \rightarrow B\varphi$ & $\mathsf{CPL}$: 9, 10\\
6.  $(\Box \phi \lor \Box \lnot \Box \phi) \lthen (\aM \aK \phi \lthen \phi)$ & T$_{\Box}$, $\mathsf{CPL}$ & 12.  $B\varphi\leftrightarrow K\aM\aK\varphi$ & $\mathsf{CPL}$: 4, 11. \hfill\qed\qedhere
\end{tabularx}
\end{table}}

\begin{table}[htp]
\begin{tabularx}{.75\textwidth}{>{\hsize=.1\hsize}X>{\hsize=1.5\hsize}X>{\hsize=1.0\hsize}X>{\hsize=1.0\hsize}X}
1. & $B\varphi \rightarrow \aM\aK\varphi$ & (RB), (wF)\\
2. & $KB\varphi\rightarrow K\aM\aK\varphi$ & (Nec$_K$), (K$_{K}$)\\
3. & $B\varphi\rightarrow KB\varphi$ & (sPI)\\
4. & $B\varphi \rightarrow K\aM\aK\varphi$ & $\mathsf{CPL}$: 2, 3\\
5. & $B(\Box \phi \lor \Box \lnot \Box \phi)$ & (CB)\\
6. & $(\Box \phi \lor \Box \lnot \Box \phi) \lthen (\aM \aK \phi \lthen \phi)$ & (T$_{\Box}$), $\mathsf{CPL}$\\
7. & $B(\Box \phi \lor \Box \lnot \Box \phi) \lthen B(\aM \aK \phi \lthen \phi)$ & (Nec$_{K}$), (KB), (K$_{B}$)\\
8. & $B(\aM \aK \phi \lthen \phi)$ & $\mathsf{CPL}$: 5, 7\\
9. & $B \aM \aK \phi \lthen B \phi$ & (K$_{B}$)\\
10. & $K\aM\aK\varphi \rightarrow B\aM\aK\varphi$ & (KB)\\
11. & $K\aM\aK\varphi \rightarrow B\varphi$ & $\mathsf{CPL}$: 9, 10\\
12. & $B\varphi\leftrightarrow K\aM\aK\varphi$ & $\mathsf{CPL}$: 4, 11. \hfill\qed\qedhere
\end{tabularx}
\end{table}
\end{proof}

Thus, rather than being identified with the ``epistemic possibility of knowledge'' \cite{StalnakerDB} as in Stalnaker's framework, to believe $\phi$ in this framework is to know that the knowability of $\phi$ is unfalsifiable. This is a bit of a mouthful, so consider for a moment the meaning of the subformula $\Diamond \Box \phi$: in the informal language of evidence, this says that every piece of evidence is compatible not only with the truth of $\phi$, but with the knowability of $\phi$. In other words: no possible measurement can rule out the prospect that some further measurement will definitively establish $\phi$. To believe $\phi$, according to Proposition \ref{pro:eqv}, is to know this.

This equivalence also tells us exactly how to extend topological subset space semantics to the language $\LangKKB$:
$$
\begin{array}{lcl}
(\X,x,U) \models B \phi & \textrm{ iff } & (\X,x,U) \models K \Diamond \Box \phi\\
& \textrm{ iff } & (\forall y \in U)(y \in \cl(\int(\val{\phi}^{U})))\\
& \textrm{ iff } & U \subseteq \cl(\int(\val{\phi}^{U})).
\end{array}
$$
Thus, the agent believes $\phi$ at $(x,U)$ just in case the interior of $\val{\phi}^{U}$ is dense in $U$. The collection of sets that have dense interiors on $U$ forms a filter,\footnote{A nonempty collection of subsets forms a filter if it is closed under taking supersets and finite intersections.} making it a good mathematical notion of \textit{largeness}: sets with dense interior can be thought of as taking up ``most'' of the space. This provides an appealing intuition for the semantics of belief that runs parallel to that for knowledge: the agent \textit{knows} $\phi$ at $(x,U)$ iff $\phi$ is true at \textit{all} points in $U$, whereas the agent \textit{believes} $\phi$ at $(x,U)$ iff $\phi$ is true at \textit{most} points in $U$.

As mentioned in the introduction,
this interpretation of belief as ``truth at most points'' (in a given domain) was first studied
by Baltag et al.~as a \emph{topologically natural, evidence-based} notion of belief \cite{wollicpaper}.
Though their motivation and conceptual underpinning differ from ours,
the semantics for belief we have derived in this section
essentially coincide with
those given in \cite{wollicpaper}.  We discuss this connection further in Section \ref{sec:dis}.


\shortv{
\subsection{Technical results}
Let (EQ) denote the scheme $B\varphi\leftrightarrow K\aM\aK\varphi$. It turns out that this equivalence is not only provable in $\LogB$, but in fact it characterizes $\LogB$ as an extension of $\LogKK$. To make this precise, let
$$
\begin{array}{rcl}
\LogKK^{+} & = & \LogKK \textrm{ + (EQ)}.
\end{array}
$$
We then have:
\begin{proposition}
$\LogKK^{+}$ and $\LogB$ prove the same theorems.
\end{proposition}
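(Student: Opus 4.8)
The plan is to prove the two inclusions $\LogKK^{+}\subseteq\LogB$ and $\LogB\subseteq\LogKK^{+}$ separately. The first is immediate: $\LogB$ is an extension of $\LogKK$ and, by Proposition~\ref{pro:eqv}, proves the scheme (EQ); since (EQ) is the only axiom of $\LogKK^{+}$ not already in $\LogKK$, every axiom of $\LogKK^{+}$ is a theorem of $\LogB$, and the two systems share the same rules of inference, so an induction on derivation length gives $\LogKK^{+}\subseteq\LogB$. For the converse it suffices to derive each of the six schemes of Table~\ref{tbl:axi} inside $\LogKK^{+}$, since the remaining axioms and rules of $\LogB$ are exactly those of $\LogKK$, hence of $\LogKK^{+}$.

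To do this I would use (EQ) to eliminate $B$. As $K$, $\Box$ and $B$ are all normal modalities, provable equivalence is a congruence in $\LogKK^{+}$, so replacing each (non-nested) occurrence of a subformula $B\chi$ by $K\Diamond\Box\chi$ turns each scheme $\alpha$ of Table~\ref{tbl:axi} into a $\LogKK^{+}$-provably-equivalent scheme $\alpha^{*}$ of the fragment $\LangKK$; concretely, $\alpha^{*}$ is $K\phi\to K\Diamond\Box\phi$ for (KB), $K\Diamond\Box\phi\to KK\Diamond\Box\phi$ for (sPI), $K\Diamond\Box\phi\to K\Diamond\Box\Box\phi$ for (RB), $K\Diamond\Box\phi\to\Diamond\phi$ for (wF), $K\Diamond\Box(\phi\to\psi)\to(K\Diamond\Box\phi\to K\Diamond\Box\psi)$ for (K$_{B}$), and $K\Diamond\Box(\Box\phi\lor\lnot\Box\lnot\Box\phi)$---more precisely $K\Diamond\Box(\Box\phi\lor\Box\lnot\Box\phi)$---for (CB). It then remains only to prove each $\alpha^{*}$ in $\LogKK$. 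Four of these are routine: $(\mathrm{sPI})^{*}$ is an instance of (4$_{K}$); $(\mathrm{RB})^{*}$ follows from (4$_{\Box}$) by monotonicity of $\Diamond$ and $K$; $(\mathrm{wF})^{*}$ follows from (T$_{K}$) and (T$_{\Box}$); and $(\mathrm{KB})^{*}$ follows from (KI), (4$_{K}$), (Nec$_{K}$), (K$_{K}$) and the dual of (T$_{\Box}$), passing from $\phi$ through $\Box\phi$ to $\Diamond\Box\phi$ under $K$.

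The two remaining shadows need a little more care, and $(\mathrm{K}_{B})^{*}$ is the crux. For $(\mathrm{CB})^{*}$, by (Nec$_{K}$) it suffices to prove $\Diamond\Box(\Box\phi\lor\Box\lnot\Box\phi)$ already in $\mathsf{S4}_{\Box}$: writing $\beta:=\lnot\Box\phi\land\Diamond\Box\phi$ for the negation of $\Box\phi\lor\Box\lnot\Box\phi$, one checks $\Diamond\beta\leftrightarrow\beta$ (using $\Box\phi\leftrightarrow\Box\Box\phi$ and $\Diamond\Diamond\Box\phi\leftrightarrow\Diamond\Box\phi$) while $\Box\beta\to(\lnot\Diamond\Box\phi\land\Diamond\Box\phi)$ is inconsistent, so $\lnot\Box\Diamond\beta$ is a theorem, and this unwinds to the desired formula. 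For $(\mathrm{K}_{B})^{*}$ the obstacle---which I expect to be the main one---is that one \emph{cannot} simply distribute the outer $K$ over the conjunction $K\Diamond\Box(\phi\to\psi)\land K\Diamond\Box\phi$ and then appeal to an inner implication: $\Diamond\Box(\phi\to\psi)\land\Diamond\Box\phi\to\Diamond\Box\psi$ is not valid in topological subset spaces (the closure of an intersection can be strictly smaller than the intersection of the closures), so this route discards the global density information carried by the $K$. Instead I would exploit that $K$ is global: from $K\Diamond\Box\phi$ derive $\Box K\Diamond\Box\phi$ (via (4$_{K}$) and (KI)) and import it \emph{inside} the $\Diamond$ produced by the other conjunct, using the normal-modal theorems $\Diamond P\land\Box Q\to\Diamond(P\land Q)$ and $\Box P\land\Diamond Q\to\Diamond(P\land Q)$, with (4$_{\Box}$) collapsing the resulting $\Diamond\Diamond$; carrying this out under an outer $K$ (legitimate since $K\Diamond\Box(\phi\to\psi)\to KK\Diamond\Box(\phi\to\psi)$, and likewise for $\phi$) and then invoking $\Box(\phi\to\psi)\land\Box\phi\to\Box\psi$ yields $K\Diamond\Box\psi$. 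Alternatively, both $(\mathrm{K}_{B})^{*}$ and $(\mathrm{CB})^{*}$ can be obtained at one stroke from Theorem~\ref{theorem:bjo}, since each is a formula of $\LangKK$ that is easily seen to be valid in every topological subset space. With all six shadows established, $\LogB\subseteq\LogKK^{+}$, which completes the proof.
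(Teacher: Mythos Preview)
Your proposal is correct and follows essentially the same strategy as the paper: both directions rely on Proposition~\ref{pro:eqv}, and the nontrivial inclusion $\LogB\subseteq\LogKK^{+}$ is obtained by using (EQ) to translate each axiom of Table~\ref{tbl:axi} into a formula of $\LangKK$ and then establishing that translation in $\LogKK$. The only difference is cosmetic: the paper handles this last step uniformly via Theorem~\ref{theorem:bjo}, verifying semantic validity of each $e(\phi)$ in topological subset models, whereas you supply direct syntactic derivations (and mention the semantic route as an alternative for (K$_B$) and (CB)); your syntactic arguments for these two are sound, if terse.
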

From this it is not hard to establish soundness and completeness of $\LogB$:
\begin{theorem}
$\LogB$ is a sound and complete axiomatization of $\LangKKB$ with respect to the class of topological subset models: for every $\phi \in \LangKKB$, $\phi$ is valid in all topological subset models if and only if $\phi$ is provable in $\LogB$.
\end{theorem}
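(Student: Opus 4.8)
The plan is to route everything through the reduction equivalence. By the preceding Proposition, $\LogB$ and $\LogKK^{+}$ (the extension of $\LogKK$ by the scheme (EQ), i.e.\ $B\varphi\leftrightarrow K\Diamond\Box\varphi$) prove exactly the same theorems, so it suffices to show that $\LogKK^{+}$ is sound and complete for $\LangKKB$ with respect to topological subset models. Soundness is then almost immediate: every theorem of $\LogKK$ is valid in all topological subset models by Theorem~\ref{theorem:bjo}, modus ponens and the necessitation rules preserve validity, and the only new axiom, (EQ), is valid because the semantic clause adopted for $B$ is literally that $(\X,x,U) \models B\phi$ iff $(\X,x,U) \models K\Diamond\Box\phi$. (If one instead wanted soundness of $\LogB$ verified from its axioms in Table~\ref{tbl:axi} directly, the only nonroutine case is (CB), which reduces to the topological fact that for an open set $U$ and any open $A \subseteq U$ the set $A \cup (U \setminus \cl(A))$ is open and dense in $U$; the remaining schemes follow from $\int(A) \subseteq A$, from $\val{\Box\phi}^{U}$ being open, and from the fact that the sets with dense interior in $U$ form a filter containing $U$.)

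For completeness, first define by structural recursion a translation $(\cdot)^{\circ} : \LangKKB \to \L_{K,\Box}$ that is the identity on primitive propositions, commutes with $\lnot$, $\land$, $K$ and $\Box$, and sets $(B\psi)^{\circ} = K \Diamond \Box (\psi^{\circ})$. I would then prove two facts by induction on $\phi$. First, $\proves_{\LogKK^{+}} \phi \liff \phi^{\circ}$: the Boolean and the $K,\Box$ cases use replacement of provable equivalents, available because $K$ and $\Box$ are normal in $\LogKK$; for the case $\phi = B\psi$ one combines the inductive hypothesis $\proves \psi \liff \psi^{\circ}$ --- which, again by normality of $K$ and $\Box$ (hence monotonicity of $\Diamond$) together with two applications of (EQ), yields $\proves B\psi \liff B\psi^{\circ}$ --- with a third application of (EQ) identifying $B\psi^{\circ}$ with $K\Diamond\Box\psi^{\circ} = (B\psi)^{\circ}$. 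Second, for every topological subset model $\X$ and $(x,U) \in ES(\X)$ one has $(\X,x,U) \models \phi$ iff $(\X,x,U) \models \phi^{\circ}$; the only nontrivial case, $\phi = B\psi$, is exactly the displayed semantic clause for $B$ together with the inductive hypothesis.

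With these in hand, suppose $\phi \in \LangKKB$ is valid in all topological subset models. By the second fact, $\phi^{\circ}$ is valid in all topological subset models; since $\phi^{\circ} \in \L_{K,\Box}$, Theorem~\ref{theorem:bjo} gives $\proves_{\LogKK} \phi^{\circ}$, hence $\proves_{\LogKK^{+}} \phi^{\circ}$, and then the first fact gives $\proves_{\LogKK^{+}} \phi$, i.e.\ $\proves_{\LogB} \phi$. The converse implication is just soundness. The only step carrying any real content is the inductive verification that $B$ respects provable equivalence in $\LogKK^{+}$ (so that the first fact goes through), which rests on the normality of $K$ and $\Box$ inherited from $\LogKK$; everything else is bookkeeping or a direct appeal to Theorem~\ref{theorem:bjo} and the preceding Proposition.
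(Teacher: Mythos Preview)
Your proof is correct and follows essentially the same strategy as the paper: reduce to $\LogKK^{+}$, translate $B$ away via the map $e$ (your $(\cdot)^{\circ}$), and invoke Theorem~\ref{theorem:bjo} for the $\L_{K,\Box}$-fragment. The paper argues the contrapositive for completeness and, instead of your explicit semantic-equivalence induction (your ``second fact''), simply cites soundness of $\LogB$ to transfer a countermodel for $e(\phi)$ back to $\phi$; but this is the same content packaged differently, since soundness of (EQ) is precisely the base of your induction.
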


Much work in belief representation takes the logical principles of $\mathsf{KD45}_{B}$ for granted (see, e.g., \cite{sep-logic-epistemic,Baltag08epistemiclogic,DELbook}). An important feature of $\LogB$ is that it \textit{derives} these principles:
\begin{proposition} \label{pro:emb}
For every $\phi \in \LangB$, if $\proves_{\mathsf{KD45}_{B}} \phi$, then $\proves_{\LogB} \phi$.
\end{proposition}

In fact, $\mathsf{KD45}_{B}$ is not merely derivable in our logic---it completely characterizes belief as interpreted in topological models. That is, $\mathsf{KD45}_{B}$ proves exactly the validities expressible in the language $\LangB$:
\begin{theorem}
$\mathsf{KD45}_{B}$ is a sound and complete axiomatization of $\LangB$ with respect to the class of all topological subset spaces: for every $\phi \in \LangB$, $\phi$ is provable in $\mathsf{KD45}_{B}$ if and only if $\phi$ is valid in all topological subset models.
\end{theorem}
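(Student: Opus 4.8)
The plan is to establish soundness and completeness separately; soundness follows quickly from results already in hand, and the substance is in completeness. For soundness, Proposition~\ref{pro:emb} gives $\proves_{\mathsf{KD45}_{B}}\phi \Rightarrow \proves_{\LogB}\phi$, and $\LogB$ is sound for the class of topological subset models; since on formulas of $\LangB$ the $\LogB$-semantics reduces exactly to the belief clause $(\X,x,U)\models B\phi$ iff $U\subseteq\cl(\int(\val{\phi}^{U}))$, soundness of $\mathsf{KD45}_{B}$ over $\LangB$ is immediate. (One can also verify the $\mathsf{KD45}_{B}$ axioms directly: the family of subsets of $U$ whose interior is dense in $U$ is a proper filter — proper since $x\in U$ — which handles (K$_{B}$), (D$_{B}$) and (Nec$_{B}$), while $\val{B\phi}^{U}$ is always either $U$ or $\emptyset$, since density of $\int(\val{\phi}^{U})$ in $U$ is independent of the evaluation point, which handles (4$_{B}$) and (5$_{B}$).)

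For completeness I argue the contrapositive: if $\phi\in\LangB$ is not a theorem of $\mathsf{KD45}_{B}$, I produce a topological subset model in which it is refuted. By the classical Kripke completeness of $\mathsf{KD45}$ there is a relational model $M=(W,R,V)$ with $R$ serial, transitive and Euclidean, and a world $w_{0}$ with $M,w_{0}\not\models\phi$. Passing to the submodel of $M$ generated by $w_{0}$, I may assume $W$ is its carrier; then, using seriality, transitivity and the Euclidean property, there is a single nonempty set $C\subseteq W$ with $R(w)=C$ for \emph{every} $w\in W$ — the ``root plus one universal cluster'' shape of rooted $\mathsf{KD45}$ models.

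Now define a topological subset model $\X=(X,\cT,v)$ by $X=W$, $v(p)=V(p)$, and $\cT=\{A : A\subseteq C\}\cup\{X\}$. One checks this is a topology, in which every point of $C$ is isolated and the only open set containing a point of $X\setminus C$ is $X$ itself. The key computation is that for any $A\subseteq X$ one has $\int(A)=A$ if $A=X$ and $\int(A)=A\cap C$ otherwise, so that $\int(A)$ is dense in $X$ if and only if $C\subseteq A$.

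Finally I show, by induction on $\psi\in\LangB$, that $(\X,x,X)\models\psi$ iff $M,x\models\psi$ for every $x\in X$. As $\LangB$ contains neither $K$ nor $\Box$, the epistemic range stays fixed at $X$ throughout evaluation, so the atomic and Boolean cases are immediate and only $\psi=B\chi$ is nontrivial: $(\X,x,X)\models B\chi$ iff $\int(\val{\chi}^{X})$ is dense in $X$ iff $C\subseteq\val{\chi}^{X}$ iff $M,w'\models\chi$ for all $w'\in C=R(x)$ (by the inductive hypothesis) iff $M,x\models B\chi$. Taking $x=w_{0}$ and $\psi=\phi$ shows $\phi$ is refuted in a topological subset model, hence not valid, which is what was needed. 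The one step that takes a little care — more bookkeeping than genuine difficulty — is nailing down the exact shape of rooted $\mathsf{KD45}$ models and then choosing a topology whose ``dense interior'' filter reproduces precisely the resulting constant belief relation; once the topology $\{A : A\subseteq C\}\cup\{X\}$ is identified, the rest is routine.
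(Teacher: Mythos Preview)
Your proof is correct and follows essentially the same strategy as the paper: soundness via Proposition~\ref{pro:emb} and soundness of $\LogB$, completeness by reducing to Kripke completeness of $\mathsf{KD45}_{B}$, exploiting the brush shape of (point-generated) serial--transitive--Euclidean frames, and building a topology on the brush so that ``$\int(A)$ dense in the epistemic range'' coincides with ``$C\subseteq A$''.

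The one noteworthy difference is your choice of topology. The paper uses the Alexandroff topology $\cT_{R^{+}}$ generated by the sets $R^{+}(x)=\{x\}\cup C$, so the nonempty opens are exactly the \emph{supersets} of $C$; you instead take the opens to be the \emph{subsets} of $C$ together with $X$, making $C$ discrete and every point outside $C$ have $X$ as its only neighbourhood. Both topologies yield the same density criterion $C\subseteq A$ and hence the same truth lemma for $\LangB$. Your topology is arguably the simpler choice for this particular argument (the interior and density calculations are one-liners), while the paper's choice connects to the standard Alexandroff correspondence between preorders and topologies and so fits into a broader framework already in the literature. A further minor packaging difference: you pass to the submodel generated by $w_{0}$ up front, obtaining a single brush and using $X$ itself as the epistemic range, whereas the paper keeps the full belief frame, decomposes it as a disjoint union of brushes (Lemma~\ref{lem:dis}), and uses the $\sim$-class $[x]$ as the epistemic range; these are equivalent moves.
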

Soundness follows easily from the above. The proof of completeness is more involved;
it can be found in the full version of this paper.
}

\fullv{
\section{Soundness and Completeness} \label{sec:sac}

In this section we establish soundness and completeness of the logic $\LogB$ of knowledge, knowability, and belief presented above; we also consider the purely doxastic fragment $\LangB$ of the full language and show that it is axiomatized by the standard $\mathsf{KD45}_{B}$ system.

\subsection{Soundness and Completeness of $\LogB$}

Let (EQ) denote the scheme $B\varphi\leftrightarrow K\aM\aK\varphi$. It turns out that this equivalence is not only provable in $\LogB$, but in fact it characterizes $\LogB$ as an extension of $\LogKK$. To make this precise, let
$$
\begin{array}{rcl}
\LogKK^{+} & = & \LogKK \textrm{ + (EQ)},
\end{array}
$$
and let $e \colon \LangKKB \to \LangKK$ be the map that replaces each instance of $B$ with $K\Diamond\Box$.

\begin{lemma} \label{lem:eqv}
For all $\phi \in \LangKKB$, we have $\proves_{\LogKK^{+}} \phi \liff e(\phi)$.
\end{lemma}

\begin{proposition} \label{pro:sam}
$\LogKK^{+}$ and $\LogB$ prove the same theorems.
\end{proposition}

\begin{proof}
In light of Proposition \ref{pro:eqv}, it suffices to show that $\LogKK^{+}$ proves everything in Table \ref{tbl:axi}. By Lemma \ref{lem:eqv}, then, it suffices to show that for every $\phi$ that is an instance of an axiom scheme from Table \ref{tbl:axi}, we have $\proves_{\LogKK} e(\phi)$. And for this, by Theorem \ref{theorem:bjo}, we need only show that each such $e(\phi)$ is valid in all topological subset models.

Let $\cX = \Model$ be a topological subset model and $(x,U) \in ES(\cX)$.

\begin{itemize}

\item[(K$_{B}$)]
Suppose $(x,U) \models K\Diamond\Box(\phi \lthen \psi)$ and $(x,U) \models K\Diamond\Box\phi$. Then $U \subseteq \cl(\int(\val{\phi \lthen \psi}^{U})) \cap \cl(\int(\val{\phi}^{U}))$. Let $y \in U$ and let $V$ be an open set containing $y$. Then we must have $V \cap \int(\val{\phi \lthen \psi}^{U}) \neq \emptyset$ and so, since this set is also open,
\begin{eqnarray*}
V \cap \int(\val{\phi \lthen \psi}^{U}) \cap \int(\val{\phi}^{U}) & \neq & \emptyset\\
\therefore \qquad V \cap \int(\val{\phi \lthen \psi}^{U} \cap \val{\phi}^{U}) & \neq & \emptyset\\
\therefore \qquad V \cap \int(\val{\psi}^{U}) & \neq & \emptyset,
\end{eqnarray*}
which establishes that $y \in \cl(\int(\val{\psi}^{U}))$. This shows that $U \subseteq \cl(\int(\val{\psi}^{U}))$, and therefore $(x,U) \models K\Diamond\Box\psi$.

\item[(sPI)]
Suppose $(x,U) \models K\Diamond\Box\phi$. Then $U = \val{\Diamond\Box\phi}^{U}$, and so for all $y \in U$ we have $(y,U) \models K\Diamond\Box\phi$. This implies that $U = \val{K\Diamond\Box\phi}^{U}$, hence $(x,U) \models KK\Diamond\Box\phi$.

\item[(KB)]
Suppose $(x,U) \models K\phi$. Then $U = \val{\phi}^{U}$, and so (since $U$ is open), $U \subseteq \cl(\int(\val{\phi}^{U}))$, which implies $(x,U) \models K\Diamond\Box\phi$.

\item[(RB)]
Suppose $(x,U) \models K\Diamond\Box\phi$. Then $U \subseteq \cl(\int(\val{\phi}^{U}))$, so $U \subseteq \cl(\int(\int(\val{\phi}^{U})))$, hence $U = \val{\Diamond\Box\Box\phi}^{U}$, which implies that $(x,U) \models K\Diamond\Box\Box\phi$.

\item[(wF)]
Suppose $(x,U) \models K\Diamond\Box\phi$. Then $x \in U \subseteq \cl(\int(\val{\phi}^{U})) \subseteq \cl(\val{\phi}^{U})$, which implies that $(x,U) \models \Diamond \phi$.

\item[(CB)]
Observe that
$$\val{\Box \phi \lor \Box \lnot \Box \phi}^{U} = \int(\val{\phi}^{U}) \cup \int(X \mysetminus \int(\val{\phi}^{U}))$$
is an open set. Moreover, it is dense in $U$; to see this, let $y \in U$ and let $V$ be an open neighbourhood of $y$. Then either $V \cap \int(\val{\phi}^{U}) \neq \emptyset$ or, if not, $V \subseteq X \mysetminus \int(\val{\phi}^{U})$, hence $V \subseteq \int(X \mysetminus \int(\val{\phi}^{U}))$. We therefore have
$$U \subseteq \cl(\int(\val{\Box \phi \lor \Box \lnot \Box \phi}^{U})),$$
whence $(x,U) \models K\Diamond\Box(\Box \phi \lor \Box \lnot \Box \phi)$. \qedhere

\end{itemize}
\end{proof}

\begin{proposition} \label{pro:snd}
$\LogKK^{+}$ is a sound axiomatization of $\LangKKB$ with respect to the class of topological subset models: for every $\phi \in \LangKKB$, if $\phi$ is provable in $\LogKK^{+}$ then $\phi$ is valid in all topological subset models.
\end{proposition}

\begin{proof}
This follows from the soundness of $\LogKK$ (Theorem \ref{theorem:bjo}) together with the fact that the semantics for the $B$ modality ensures that (EQ) is valid is all topological subset models.
\end{proof}

\begin{corollary} \label{cor:snd}
$\LogB$ is a sound axiomatization of $\LangKKB$ with respect to the class of topological subset models.
\end{corollary}

\begin{proof}
Immediate from Propositions \ref{pro:sam} and \ref{pro:snd}.
\end{proof}

\begin{theorem}
$\LogB$ is a complete axiomatization of $\LangKKB$ with respect to the class of topological subset models: for every $\phi \in \LangKKB$, if $\phi$ is valid in all topological subset models then $\phi$ is provable in $\LogB$.
\end{theorem}

\begin{proof}
We show the contrapositive. Let $\varphi \in \LangKKB$ be such that $\not\proves_{\LogB} \phi$. By Lemma \ref{lem:eqv} and Proposition \ref{pro:sam} we have $\proves_{\LogB} \phi \liff e(\phi)$, and so also $\not\proves_{\LogB} e(\phi)$. Since $e(\phi) \in \LangKK$ and $\LogB$ is an extension of $\LogKK$, we know that $\not\proves_{\LogKK} e(\phi)$. Thus, by Theorem \ref{theorem:bjo}, there exists a topological subset model $\cX$ and $(x,U) \in ES(\cX)$ such that $(\cX, x, U) \not \models e(\phi)$ and so, by the soundness of $\LogB$, we obtain $(\cX, x, U)\not\models\varphi$.
\end{proof}

\subsection{$\mathsf{KD45}_{B}$ and the doxastic fragment $\LangB$}

Much work in belief representation takes the logical principles of $\mathsf{KD45}_{B}$ for granted (see, e.g., \cite{sep-logic-epistemic,Baltag08epistemiclogic,DELbook}). An important feature of $\LogB$ is that it \textit{derives} these principles:
\begin{proposition} \label{pro:emb}
For every $\phi \in \LangB$, if $\proves_{\mathsf{KD45}_{B}} \phi$, then $\proves_{\LogB} \phi$.
\end{proposition}

\begin{proof}
It suffices to show that $\LogB$ derives all the axioms and the rule of inference of $\mathsf{KD45}_{B}$.
(K$_{B}$) is itself an axiom of $\LogB$.
It is not hard to see, using (wF) and $\mathsf{S4}_{\Box}$, that $\proves_{\LogB} \lnot B \falsum$; given this, (D$_{B}$) follows from (K$_{B}$) with $\psi$ replaced by $\falsum$.
(4$_B$) follows easily from (sPI) and (KB).
To derive (5$_B$), first observe that by (5$_{K}$) we have $\proves_{\LogB} \lnot K \Diamond \Box \phi \lthen K \lnot K \Diamond \Box \phi$; from Proposition \ref{pro:eqv} it then follows that $\proves_{\LogB} \lnot B \phi \lthen K \lnot B \phi$, and so from (KB) we can deduce (5$_{B}$).
Lastly, (Nec$_{B}$) follows directly from (Nec$_{K}$) together with (KB).
\end{proof}

In fact, $\mathsf{KD45}_{B}$ is not merely derivable in our logic---it completely characterizes belief as interpreted in topological models. That is, $\mathsf{KD45}_{B}$ proves exactly the validities expressible in the language $\LangB$:

\begin{theorem} \label{theorem:sac}
$\mathsf{KD45}_{B}$ is a sound and complete axiomatization of $\LangB$ with respect to the class of all topological subset spaces: for every $\phi \in \LangB$, $\phi$ is provable in $\mathsf{KD45}_{B}$ if and only if $\phi$ is valid in all topological subset models.
\end{theorem}

Soundness follows immediately from Proposition \ref{pro:emb} together with the soundness of $\LogB$ (Corollary \ref{cor:snd}). The remainder of this section is devoted to developing the tools needed to prove completeness. Our proof relies crucially on the standard Kripke-style interpretation of $\LangB$ in relational models and the completeness results pertaining thereto. We therefore begin with a brief review of these notions (for a more comprehensive overview, we direct the reader to \cite{blackburn01,zak97}).

A \defin{relational frame} is a pair $(X,R)$ where $X$ is a nonempty set and $R$ is a binary relation on $X$. A \defin{relational model} is a relational frame $(X,R)$ equipped with a \emph{valuation} function $v: \textsc{prop} \to 2^{X}$. The language $\LangB$ is interpreted in a relational model $M = (X,R,v)$ by extending the valuation function via the standard recursive clauses for the Boolean connectives together with the following:
$$
\begin{array}{lcl}
(M,x) \models B \phi & \textrm{ iff } & (\forall y\in X)(xRy \ \mbox{implies} \ (M, y)\models \phi).
\end{array}
$$
Let $\sval{\phi}_{M} = \{x \in X \: : \: (M,x) \models \phi\}$. A \defin{belief frame} is a frame $(X,R)$ where $R$ is serial, transitive, and Euclidean.\footnote{A relation is \emph{serial} if $(\forall x)(\exists y)(xRy)$; it is \emph{transitive} if $(\forall x,y,z)((xRy \; \& \; yRz) \rimp xRz)$; it is \emph{Euclidean} if $(\forall x,y,z)((xRy \; \& \; xRz) \rimp yRz)$.}

\begin{theorem} \label{theorem:bel} 
$\mathsf{KD45}_{B}$ is a sound and complete axiomatization of $\LangB$ with respect to the class of belief frames.
\end{theorem}

\begin{proof}
See, e.g., \cite[Chapter 5]{zak97} or \cite[Chapters 2, 4]{blackburn01}.
\end{proof}

A frame $(X,R)$ is called a \defin{brush} if there exists a nonempty subset $C \subseteq X$ such that $R = X \times C$. If such a $C$ exists, clearly it is unique; call it the \emph{final cluster} of the brush. A brush is called a \defin{pin} if $|X \mysetminus C| = 1$. It is not hard to see that every brush is a belief frame. Conversely, the following Lemma shows that every belief frame $(X,R)$ is a disjoint union of brushes.\footnote{A frame $(X,R)$ is said to be a \emph{disjoint union} of frames $(X_{i},R_{i})$ provided the $X_{i}$ partition $X$ and the $R_{i}$ partition $R$.}

\begin{lemma} \label{lem:dis}
Let $(X,R)$ be a belief frame, and define
$$x \sim y \textrm{ iff } (\exists z \in X)(xRz \textrm{ and } yRz).$$
Then $\sim$ is an equivalence relation extending $R$. Moreover, if $[x]$ denotes the equivalence class of $x$ under $\sim$, then $([x],R|_{[x]})$ is a brush, and $(X,R)$ is the disjoint union of all such brushes.
\end{lemma}

\begin{proof}
Reflexivity of $\sim$ follows from seriality of $R$, and symmetry is immediate. To see that $\sim$ is transitive, suppose $x \sim x'$ and $x' \sim x''$. Then there exist $y,z \in X$ such that $xRy$, $x'Ry$, $x'Rz$ and $x''Rz$. Because $R$ is Euclidean, it follows that $yRz$; because $R$ is transitive, we can deduce that $xRz$; it follows that $x \sim x''$. To see that $\sim$ extends $R$, suppose $xRy$. Then because $R$ is Euclidean, we have $yRy$, which implies $x \sim y$.

The fact that $\sim$ is an equivalence relation tells us that the sets $[x]$ partition $X$; furthermore, since $xRy$ implies $[x] = [y]$, we also know that the sets $R|_{[x]}$ partition $R$. Thus $(X,R)$ is the disjoint union of the frames $([x],R|_{[x]})$.

Finally we show that each such frame $([x],R|_{[x]})$ is a brush. Set $C_{x} = \{y \in [x] \: : \: yRy\}$; that $C_{x} \neq \emptyset$ follows easily from $R$ being serial and Euclidean. Let $y \in C_{x}$. Then for all $x' \in [x]$ we have $x' \sim y$, so there is some $z \in X$ with $x'Rz$ and $yRz$; now because $R$ is Euclidean, we can deduce that $zRy$, so by transitivity $x'Ry$. It follows that $[x] \times \{y\} \subseteq R$, hence $[x] \times C_{x} \subseteq R$. On the other hand, if $y \notin C_{x}$, then for every $x' \in [x]$ we have $\lnot(x'Ry)$, or else the Euclidean property would imply $yRy$, a contradiction. Thus, $R|_{[x]} = [x] \times C_{x}$, so $([x],R|_{[x]})$ is a brush with final cluster $C_{x}$.
\end{proof}

\begin{corollary}
$\mathsf{KD45}_{B}$ is a sound and complete axiomatization of $\LangB$ with respect to the class of brushes and with respect to the class of pins.
\end{corollary}

There is a close connection between the relational semantics for $\LangB$ presented above and our topological semantics for this language. For any frame $(X,R)$, let $R^{+}$ denote the \emph{reflexive closure} of $R$:
$$R^{+} = R \cup \{(x,x) \: : \: x \in X\}.$$
Given a transitive frame $(X, R)$, the set $\cB_{R^{+}}=\{R^+(x) \: : \: x \in X\}$ constitutes a topological basis on $X$; denote by $\cT_{R^+}$ the topology generated by $\cB_{R^+}$ (see, e.g., \cite{BvdH13,vanbenthemSPACE} for a more detailed discussion of this construction). It is well-known that $(X, \cT_{R^+})$ is an Alexandroff space and, for every $x \in X$, the set $R^+(x)$ is the smallest open neighborhood of $x$.

\begin{lemma} \label{lem:pre}
Let $(X,R)$ be a belief frame. For each $x \in X$, let $C_{x}$ denote the final cluster of the brush $([x], R|_{[x]})$ as in Lemma \ref{lem:dis}, and let $\int$ and $\cl$ denote the interior and closure operators, respectively, in the topological space $(X, \cT_{R^+})$. Then for all $x \in X$ and every $A \subseteq X$:
\begin{enumerate}
\item \label{lem:pre1}
$[x] \in \cT_{R^+}$, and so $(x, [x]) \in ES(\cX_{M})$;
\item \label{lem:pre2}
$R(x) = C_{x} \in \cT_{R^+}$;
\item \label{lem:pre3}
$\int(A) \cap C_{x} \neq \emptyset$ if and only if $A \supseteq C_{x}$;
\item \label{lem:pre4}
$\cl(A) \supseteq [x]$ if and only if $A \cap C_{x} \neq \emptyset$.
\end{enumerate}
\end{lemma}

\begin{proof}
$ $\newline
\vspace{-5mm}
\begin{enumerate}
\item
This follows from the fact that $y \in [x]$ implies $R^+(y) \subseteq [x]$, which in turn follows from the fact that $\sim$ extends $R$ (Lemma \ref{lem:dis}).
\item
That $R(x) = C_{x}$ follows from the fact that $R|_{[x]} = [x] \times C_{x}$ (Lemma \ref{lem:dis}). To see that $C_{x}$ is open, observe that if $y \in C_{x}$, then $R^{+}(y) = R(y) = C_{y} = C_{x}$.
\item
Since $C_{x}$ is open, it follows immediately that if $A \supseteq C_{x}$ then $\int(A) \supseteq C_{x}$, so in particular $\int(A) \cap C_{x} \neq \emptyset$. Conversely, if $y \in \int(A) \cap C_{x}$ then $R^{+}(y) \subseteq A$, since $R^{+}(y)$ is the smallest open neigbourhood of $y$; therefore, since $R^{+}(y) = R(y) = C_{x}$, we have $A \supseteq C_{x}$.
\item
First suppose that $y \in A \cap C_{x}$ and let $z \in [x]$. By part \ref{lem:pre2}, $R^{+}(z) \supseteq R(z) = C_{x}$, and so since $R^{+}(z)$ is the smallest open neighbourhood of $z$ and $y \in C_{x}$, it follows that $z \in \cl(\{y\}) \subseteq \cl(A)$, hence $[x] \subseteq \cl(A)$. Conversely, suppose that $A \cap C_{x} = \emptyset$. Then since $C_{x}$ is open it follows that $C_{x} \cap \cl(A) = \emptyset$, which shows that $[x] \not\subseteq \cl(A)$. \qedhere
\end{enumerate}
\end{proof}

Given a transitive model $M = (X,R,v)$, let $\cX_{M}$ denote the topological subset model constructed from $M$, namely $(X,\cT_{R^+},v)$.

\begin{lemma} \label{lem:bru}
Let $M = (X, R, v)$ be a belief frame. Then for every formula $\phi \in \LangB$, for every $x \in X$ we have
$$(M, x) \models \phi \ \mbox{ iff } \ (\cX_{M}, x, [x]) \models \phi.$$
\end{lemma}

\begin{proof}
The proof follows by induction on the structure of $\phi$; cases for the primitive propositions and the Boolean connectives are elementary. So assume inductively that the result holds for $\phi$; we must show that it holds also for $B\phi$. Note that the inductive hypothesis implies that $\val{\phi}^{[x]}= \sval{\phi}_M \cap [x]$, since by Lemma \ref{lem:dis}, $y \in [x]$ implies $[y] = [x]$.
\begin{align}
(M, x) \models B \phi & \ \mbox{ iff } \ R(x) \subseteq \|\phi\|_M\notag\\
&  \ \mbox{ iff } \ C_{x} \subseteq \|\phi\|_M \tag{Lemma \ref{lem:pre}.\ref{lem:pre2}}\\
&  \ \mbox{ iff } \ C_{x} \subseteq \sval{\phi}_M \cap [x] \tag{since $C_{x} \subseteq [x]$}\\
&  \ \mbox{ iff } \ C_{x} \subseteq \val{\phi}^{[x]} \tag{inductive hypothesis}\\
&  \ \mbox{ iff } \ \int(\val{\phi}^{[x]}) \cap C_{x} \neq \emptyset \tag{Lemma \ref{lem:pre}.\ref{lem:pre3}}\\
&  \ \mbox{ iff } \ \cl(\int(\val{\phi}^{[x]})) \supseteq [x] \tag{Lemma \ref{lem:pre}.\ref{lem:pre4}}\\
&  \ \mbox{ iff } \ (\cX_{M}, x, [x]) \models B\phi. \notag \hspace{3cm} \qedhere
\end{align}
\end{proof}

Completeness is an easy consequence of this lemma: if $\phi \in \LangB$ is such that $\not\proves_{\mathsf{KD45}_{B}} \phi$, then by Theorem \ref{theorem:bel} there is a belief frame $M$ that refutes $\phi$ at some point $x$. Then, by Lemma \ref{lem:bru}, $\phi$ is also refuted in $\cX_{M}$ at the epistemic scenario $(x, [x])$. This completes the proof of Theorem \ref{theorem:sac}.

}

\section{Weaker Notions of Belief} \label{sec:wea}

In Section \ref{sec:rvs}, we motivated the axioms of our system $\LogB$ in part by the fact that they allowed us to achieve a reduction of belief to knowledge-and-knowability in the spirit of Stalnaker's result. $\LogB$ includes several of Stalnaker original axioms (or modifications thereof), but also two new schemes: weak factivity (wF) and confident belief (CB). As noted, if we forget the distinction between knowledge and knowability, each of these schemes holds in $\mathsf{Stal}$ (Proposition \ref{pro:for}). Nonetheless, in our tri-modal logic these two principles do not follow from the others: one can adopt (our translations of) Stalnaker's original principles while rejecting one or both of (wF) and (CB). In particular, this allows one to essentially accept all of Stalnaker's premises without being forced to the conclusion that belief is reducible to knowledge (or even knowledge-and-knowability). We are therefore motivated to generalize our earlier semantics in order to study weaker logics in which the belief modality is \textit{not} definable and so requires its own semantic machinery.

In this section we do just this: we augment $\LogKK$ with the axiom schemes given in Table \ref{tbl:axi:weak} to form the logic $\wLogB$, and prove that this system is sound and complete with respect to the new semantics defined below. We then consider logics intermediate in strength between $\wLogB$ and $\LogB$---specifically, those obtained by augmenting $\wLogB$ with the axioms (D$_B$) (consistency of belief), (wF), or (CB)---and establish soundness and completeness results for these logics as well.
\begin{table}[htp]
\begin{center}
\begin{tabularx}{\textwidth}{>{\hsize=.6\hsize}X>{\hsize=1.3\hsize}X>{\hsize=1.1\hsize}X}
\toprule
(K$_{B}$) & $\proves B(\phi \imp \psi) \imp (B\phi \imp B\psi)$ & Distribution of belief\\
(sPI) & $\proves B\phi \imp KB\phi$ & Strong positive introspection\\
(KB) & $\proves K\phi \imp B \phi$ & Knowledge implies belief\\
(RB) & $\proves B\phi \imp B\Box\phi$ & Responsible belief\\
\bottomrule
\end{tabularx}
\end{center}
\caption{Additional axiom schemes for $\wLogB$} \label{tbl:axi:weak}
\end{table}%
As before, we rely on topological subset models $\cX=\Model$ for the requisite semantic structure (see Section \ref{section:subspace}); however, we define the evaluation of formulas with respect to \emph{epistemic-doxastic (e-d) scenarios}, which are tuples of the form $(x, U, V)$ where $(x, U)$ is an epistemic scenario, $V \in \mathcal{T}$, and $V\subseteq U$. We call $V$ the \emph{doxastic range}.%
\footnote{If we want to insist on \emph{consistent} beliefs, we should add the axiom (D$_B$): $B \phi \lthen \MB\phi$ (or, equivalently, $\MB\T$) and require that $V \neq \emptyset$. We begin with the more general case, without these assumptions.}

The semantic evaluation for the primitive propositions and the Boolean connectives is defined as usual; for the modal operators, we make use of the following semantic clauses:
$$
\begin{array}{lcl}
(\X,x,U,V) \models K \phi & \textrm{ iff } & U = \val{\phi}^{U,V}\\
(\X,x,U,V) \models \Box \phi & \textrm{ iff } & x \in \int(\val{\phi}^{U,V})\\
(\X,x,U,V) \models B \phi & \textrm{ iff } & V \subseteq \val{\phi}^{U,V},
\end{array}
$$
where $\val{\phi}^{U,V} = \{x \in U \: : \: (\X,x,U,V) \models \phi\}.$

Thus, the modalities $K$ and $\Box$ are interpreted essentially as they were before, while the modality $B$ is rendered as universal quantification over the doxastic range. Intuitively, we might think of $V$ as the agent's ``conjecture'' about the way the world is, typically stronger than what is guaranteed by her evidence-in-hand $U$. On this view, states in $V$ might be conceptualized as ``more plausible'' than states in $U \mysetminus V$ from the agent's perspective, with belief being interpreted as truth in all these more plausible states (see, e.g., \cite{grove88,vanbenthemBR,QualitativeBR,vanbenthem-smets,vanDitmarsch2006} for more details on plausibility models for belief).
Note that we do not require that $x \in V$; this corresponds to the intuition that the agent may have false beliefs. Note also that none of the modalities alter either the epistemic or the doxastic range; this is essentially what guarantees the validity of the strong introspection axioms.%
\footnote{We could, of course, consider even more general semantics that do not validate these axioms, but as our goal here is to understand the role of weak factivity and confident belief in the context of Stalnaker's reduction of belief to knowledge, we leave such investigations to future work.}

In order to distinguish these semantics from those previous, we refer to them as \emph{epistemic-doxastic (e-d) semantics} for topological subset spaces.

\shortv{
\begin{theorem}
$\wLogB$ is a sound and complete axiomatization of $\LangKKB$ with respect to the class of all topological subset spaces under e-d semantics.
\end{theorem}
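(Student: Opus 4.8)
The plan is to prove soundness by inspecting the axioms and completeness by the familiar detour through relational semantics, transferring a refuting Kripke model to a topological subset model carrying the Alexandroff topology --- the same strategy used to prove Theorem~\ref{theorem:bjo} for the $K,\Box$ fragment, now adapted to drag along a doxastic range. For soundness, note first that under e-d semantics $K$ and $\Box$ are interpreted exactly as in $\LogKK$, with $V$ a parameter that no modality alters, so the soundness of $\LogKK$ carries over verbatim; in particular $(\mathrm{KI})$ is valid because $U$ is open, so $U=\val{\phi}^{U,V}$ forces $x\in\int(\val{\phi}^{U,V})$. For the doxastic schemes: $(\mathrm{K}_B)$ holds since $B$ is a normal box quantifying over $V$; $(\mathrm{KB})$ holds since $U=\val{\phi}^{U,V}$ and $V\subseteq U$ give $V\subseteq\val{\phi}^{U,V}$; $(\mathrm{RB})$ holds since $V$ is open, so $V\subseteq\val{\phi}^{U,V}$ entails $V\subseteq\int(\val{\phi}^{U,V})=\val{\Box\phi}^{U,V}$; and $(\mathrm{sPI})$ holds because the truth of $B\phi$ at $(x,U,V)$ does not mention $x$, so $\val{B\phi}^{U,V}$ is either $\emptyset$ or $U$, whence $B\phi\imp KB\phi$ is valid.

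For completeness, I would first read off the first-order frame conditions of the axioms via Sahlqvist correspondence; every axiom of $\wLogB$ is Sahlqvist (with $(\mathrm{K}_B)$ merely normality), so $\wLogB$ is canonical. Writing $R_K,R_\Box,R_B$ for the relations interpreting $K,\Box,B$ in the canonical model $M$, one obtains: $R_K$ is an equivalence relation; $R_\Box$ is a preorder with $R_\Box\subseteq R_K$ (from $(\mathrm{KI})$); $R_B\subseteq R_K$ (from $(\mathrm{KB})$); $R_K\circ R_B\subseteq R_B$ (from $(\mathrm{sPI})$), which together with symmetry of $R_K$ makes $R_B$ constant on $R_K$-classes; and $R_B\circ R_\Box\subseteq R_B$ (from $(\mathrm{RB})$). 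Now, given $\not\proves_{\wLogB}\phi$, choose $w$ with $(M,w)\not\models\phi$ and form the Alexandroff topological subset model $\cX_M=(X,\cT_{R_\Box},v)$, in which $R_\Box(x)$ is the least open neighbourhood of $x$ (a genuine topology since $R_\Box$ is a preorder). The frame conditions are precisely what makes $(x,R_K(x),R_B(x))$ an e-d scenario for each $x$: $x\in R_K(x)$ by reflexivity; $R_K(x)$ is open because $R_\Box(z)\subseteq R_K(z)=R_K(x)$ for $z\in R_K(x)$; $R_B(x)\subseteq R_K(x)$; and $R_B(x)$ is open because $R_B\circ R_\Box\subseteq R_B$ forces $R_\Box(z)\subseteq R_B(x)$ for $z\in R_B(x)$.

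The heart of the argument is then a truth lemma: $(M,x)\models\psi$ iff $(\cX_M,x,R_K(x),R_B(x))\models\psi$ for every $\psi\in\LangKKB$ and every $x$, by induction on $\psi$. The atomic and Boolean cases are immediate. The crux of the modal cases is that, since $R_K$ is an equivalence relation and $R_B$ is constant on its classes, every $y\in R_K(x)$ satisfies $R_K(y)=R_K(x)$ and $R_B(y)=R_B(x)$, so the inductive hypothesis at $y$ is about the very scenario $(\cX_M,y,R_K(x),R_B(x))$; consequently $\val{\psi}^{R_K(x),R_B(x)}=\sval{\psi}_M\cap R_K(x)$. The three modal steps then collapse to relational facts: $K\psi$ holds at the scenario iff $R_K(x)\subseteq\sval{\psi}_M$; $\Box\psi$ holds iff $R_\Box(x)\subseteq\sval{\psi}_M$ (using $R_\Box(x)\subseteq R_K(x)$, so that $x\in\int(\sval{\psi}_M\cap R_K(x))$ amounts to $R_\Box(x)\subseteq\sval{\psi}_M$); and $B\psi$ holds iff $R_B(x)\subseteq\sval{\psi}_M$ (using $R_B(x)\subseteq R_K(x)$) --- each matching $(M,x)\models\psi$. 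Instantiating at $w$ refutes $\phi$ in $\cX_M$ under e-d semantics, which gives completeness.

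The one genuinely delicate point is the packaging of the truth lemma: one must check that the conditions extracted from $(\mathrm{KI})$, $(\mathrm{KB})$, $(\mathrm{sPI})$ and $(\mathrm{RB})$ are jointly \emph{exactly} enough both to certify that each $(x,R_K(x),R_B(x))$ is a legitimate e-d scenario and to make the scenarios attached to the various points of a single $R_K$-class coincide, which is what lets the induction close. The absence of $(\mathrm{D}_B)$ and of a negative-introspection axiom for $B$ causes no trouble: the former simply permits $R_B(x)=\emptyset$, matching the fact that e-d scenarios require neither $V\neq\emptyset$ nor $x\in V$, and the latter is harmless because strong negative introspection for $B$ is in fact derivable in $\wLogB$ from $(\mathrm{sPI})$ together with $\mathsf{S5}_K$. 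Everything else --- Sahlqvist canonicity of $\wLogB$, the fact that Alexandroff spaces are topological spaces, and the routine soundness verifications --- is standard.
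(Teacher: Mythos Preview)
Your argument is correct, but it is organized differently from the paper's. Soundness is handled the same way. For completeness, the paper does \emph{not} detour through a tri-relational Kripke canonical model and the Alexandroff topology of $R_\Box$. Instead it builds the topological subset model directly: on the set $X^c$ of maximal $\wLogB$-consistent sets, it defines only the $K$-equivalence $\sim$ and the $B$-relation $R$, and takes the topology generated by the basis
\[
\cB=\{[x]\cap\widehat{\Box\varphi}:x\in X^c,\varphi\in\LangKKB\}\cup\{R(x)\cap\widehat{\Box\varphi}:x\in X^c,\varphi\in\LangKKB\},
\]
then proves a Truth Lemma of the form $\varphi\in x$ iff $(\cX^c,x,[x],R(x))\models\varphi$. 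No canonical $R_\Box$-relation is ever introduced; the $\Box$-case of the Truth Lemma is argued by hand from the shape of $\cB$, and (RB) is what makes the second family of basic opens behave.

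Your route is more modular: Sahlqvist canonicity gives you the Kripke model for free, the frame conditions you extract are exactly right, and the transfer to the Alexandroff space $(X,\cT_{R_\Box})$ makes all three modal cases of the truth lemma one-liners. The paper's route avoids the Sahlqvist machinery and stays closer to Bjorndahl's original $\LogKK$ completeness proof, at the cost of a more delicate $\Box$-case split over the two kinds of basic opens. Both topologies on the canonical set work; they are not the same topology, but each supports the needed truth lemma. Your parenthetical that (sNI) is already derivable from (sPI) and $\mathsf{S5}_K$ is correct and worth keeping, since it explains why the $R_B$-constancy on $R_K$-classes goes both ways.
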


Call an e-d scenario $(x, U, V)$ \emph{consistent} if $V \neq \emptyset$, and call it \emph{dense} if $V$ is dense in $U$ (i.e., if $U \subseteq \cl(V)$).

\begin{theorem}
$\wLogB + \emph{\textrm{(D$_B$)}}$ is a sound and complete axiomatization of $\LangKKB$ with respect to the class of all topological subset spaces under e-d semantics for consistent e-d scenarios. $\wLogB + \emph{\textrm{(wF)}}$ is a sound and complete axiomatization of $\LangKKB$ with respect to the class of all topological subset spaces under e-d semantics for dense e-d scenarios.
\end{theorem}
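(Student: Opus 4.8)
The plan is to derive both halves of the theorem from the soundness and completeness of $\wLogB$ under e-d semantics (the preceding theorem), handling the two extra axioms in parallel: soundness is a direct computation, while for completeness we rerun the $\wLogB$-completeness construction and check that it can be made to output scenarios of the restricted kind. For soundness, if $(\X,x,U,V)\models B\phi$ and $(\X,x,U,V)\models B\lnot\phi$ then $V\subseteq\val{\phi}^{U,V}\inter\val{\lnot\phi}^{U,V}=\emptyset$, so (D$_{B}$) is valid at every consistent e-d scenario; and if $(\X,x,U,V)\models B\phi$ then $V\subseteq\val{\phi}^{U,V}$, hence $\cl(V)\subseteq\cl(\val{\phi}^{U,V})$, so at a dense scenario $x\in U\subseteq\cl(\val{\phi}^{U,V})$, i.e.\ $(\X,x,U,V)\models\Diamond\phi$, giving validity of (wF). (Every dense scenario is consistent, since $x\in U\subseteq\cl(V)$ forces $V\neq\emptyset$; dually, $\wLogB$ together with (wF) already proves (D$_{B}$), as the instance $B\bot\imp\Diamond\bot$ of (wF) yields $\lnot B\bot$.)

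For completeness we re-examine the proof of the $\wLogB$-completeness theorem. That proof (presumably, in the style of the $\mathsf{KD45}_{B}$ case, via a suitable class of tri-relational frames and the Alexandroff-type topologies they induce) produces, from any $\wLogB$-consistent $\phi$, a topological subset model $\X$ and an e-d scenario $(x,U,V)$ at which $\phi$ holds, where $V$ is essentially the set of states compatible with everything the agent believes at $x$: this set is an open subset of the $K$-cluster $U$, open by (RB), included in $U$ by (KB), and $\sim_{K}$-invariant by (sPI). Two additional observations then suffice. First, modulo $\wLogB$ the axiom (D$_{B}$) is equivalent to $\MB\T$ (one direction is immediate since $\proves B\T$; the other uses (K$_{B}$) and $B$-necessitation to exclude $B\phi\et B\lnot\phi$), so if $\phi$ is consistent with $\wLogB$ plus (D$_{B}$) then the belief-compatible set at $x$ is nonempty, whence $V\neq\emptyset$ and $(x,U,V)$ is consistent. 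Second, if $\phi$ is consistent with $\wLogB$ plus (wF), choose a formula $\chi$ defining $V$ inside the (finite) model; then $B\chi$ holds at $x$, hence by (sPI) at every state of $U$, hence by (wF) $\Diamond\chi$ holds at every state of $U$, i.e.\ $U\subseteq\cl(\val{\chi}^{U,V})=\cl(V)$ and $(x,U,V)$ is dense. In both cases, combined with the soundness established above, this proves the contrapositive of completeness.

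The main obstacle is the density argument for (wF): it hinges on the doxastic range of the $\wLogB$-construction being the extension of a single formula within its cluster, so that the schema (wF) — which literally only constrains closure membership at individual points — can be propagated uniformly over the epistemic range via (sPI). This is routine if the base construction enjoys the finite model property (so that $\chi$ may be taken to be a conjunction over the filtration language); if it does not, one must instead run the closure argument directly, using the canonical $\MB$-relation and the topological specialization preorder in place of $\chi$, which is the one genuinely new bookkeeping step. Everything else — the relational frame conditions axiomatizing $\LogKK$, the openness of $V$ from (RB), and the transfer lemma from relational models to topological subset models — carries over verbatim from the $\wLogB$ completeness proof; one merely checks that adjoining (D$_{B}$) or (wF) conflicts with none of it, which soundness guarantees it cannot.
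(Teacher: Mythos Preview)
Your soundness arguments and the completeness sketch for (D$_B$) are fine and match the paper's approach: soundness is the direct computation you give, and for (D$_B$)-completeness one shows that the canonical doxastic range $R(x)=\{y : (\forall\psi)(B\psi\in x\Rightarrow\psi\in y)\}$ is nonempty (seriality of $R$), which is exactly your ``belief-compatible set is nonempty'' observation.

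The gap is in the (wF)-completeness argument. Your primary route assumes that the doxastic range $V$ is the extension of a single formula $\chi$ inside its epistemic range; you then apply (sPI) and (wF) once to $\chi$. But the paper's completeness proof uses the standard infinite canonical model, where $V=R(x)$ is in general \emph{not} definable by any formula, and you do not establish the finite model property that would salvage this approach. Your fallback (``run the closure argument directly, using the canonical $\hat{B}$-relation and the topological specialization preorder'') is a gesture, not a proof. The paper does something genuinely different: it shows $[x]\subseteq\cl(R(x))$ by checking that every \emph{basic} open neighbourhood $[x]\cap\widehat{\Box\varphi}$ of any $y\in[x]$ meets $R(x)$. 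The point is that these basic opens \emph{are} formula-definable (by $\Box\varphi$), so (wF) can be applied formula-by-formula: if $\{\psi: B\psi\in y\}\cup\{\Box\varphi\}$ were inconsistent, normality of $B$ would give $B\Diamond\lnot\varphi\in y$, and then (wF) plus (4$_{\Box}$) yields $\Diamond\lnot\varphi\in y$, contradicting $\Box\varphi\in y$. So consistency holds, and extending to a maximal consistent set gives a point of $R(y)=R(x)$ inside $\widehat{\Box\varphi}$. In short, the correct move is to apply (wF) once per basic open, not once globally to a putative definition of $V$; that is the missing idea.
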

}
\fullv{
We now show that $\wLogB$ is sound and complete with respect to these semantics.

\begin{theorem}\label{theorem:sound:wEL}
$\wLogB$ is a sound axiomatization of $\LangKKB$ with respect to the class of all topological subset spaces under e-d semantics.
\end{theorem}

\begin{proof}
The validity of the axioms without the modality $B$ follows as in Theorem \ref{theorem:bjo}, since the only difference here lies in the semantic clause for $B$. Let $\cX=\Model$ be a topological subset model, $(x, U, V)$ an e-d scenario, and $\varphi, \psi \in \LangKKB$.

\begin{itemize}
\item[(K$_{B}$)] Suppose $(x, U, V)\models B(\varphi\rightarrow \psi)$ and $(x, U, V)\models B\varphi$. This means $V\subseteq \br{\varphi\rightarrow \psi}^{U, V} = (U\setminus\br{\varphi}^{U, V})\cup \br{\psi}^{U, V}$ and $V\subseteq \br{\varphi}^{U, V}$, from which we obtain $V\subseteq \br{\psi}^{U, V}$, i.e., $(x, U, V)\models B\psi$.

\item [(sPI)] Suppose $(x, U, V)\models B\varphi$. This means $V\subseteq \br{\varphi}^{U, V}$. As such, for every $y \in U$ we have $(y,U,V) \models B\phi$, which
implies that $\br{B\varphi}^{U, V}=U$, so $(x, U, V)\models KB\varphi$.


\item [(KB)] Suppose $(x, U, V)\models K\varphi$. This means $\br{\varphi}^{U, V}=U$. As $V\subseteq U$ (by definition of $(x, U, V)$), we obtain $(x, U, V)\models B\varphi$.

\item [(RB)] Suppose $(x, U, V)\models B\varphi$. This means $V\subseteq \br{\varphi}^{U, V}$. Thus, since $V$ is open, we obtain $V\subseteq  \int(\br{\varphi}^{U, V})$. As $\int(\br{\varphi}^{U, V})=\br{\Box\varphi}^{U, V}$, we have $V\subseteq  \br{\Box\varphi}^{U, V}$, i.e., $(x, U, V)\models B\Box\varphi$. \qedhere

\end{itemize}
\end{proof}

Completeness follows from a fairly straightforward canonical model construction, similar to the completeness proof of $\LogKK$ in \cite{bjorndahl}. Roughly speaking, we extend the canonical model in \cite{bjorndahl} in order to be able to prove the truth lemma for the belief modality $B$.

Let $X^c$ be the set of all maximal $\wLogB$-consistent sets of formulas. Define binary relations $\sim$ and $R$ on $X^c$ by
$$x\sim y \ \mbox{iff} \ (\forall\varphi\in \LangKKB)(K\varphi\in x \ \Leftrightarrow \ K\varphi\in y)\footnote{In fact, this is equivalent to  $(\forall\varphi\in\LangKKB)(K\varphi\in x \ \Rightarrow \ \varphi\in y)$, since $K$ is an $\mathsf{S5}$ modality.}$$ and
$$xR y \ \mbox{iff} \ (\forall\varphi\in \LangKKB)(B\varphi\in x \ \Rightarrow \ \varphi\in y).$$
It is not hard to see that $\sim$ is an equivalence relation, hence, it induces equivalence classes on $X^c$. Let $[x]$ denote the equivalence class of $x$ induced by the relation $\sim$ and let $R(x) = \{y\in X^c \ | \ xRy\}$. Define $\widehat{\varphi}=\{y\in X^c \ | \ \varphi\in y\}$, so $x\in\widehat{\varphi}$ iff $\varphi\in x$.

The axioms of $\wLogB$ that relate $K$ and $B$ induce the following important links between $\sim$ and $R$:

\begin{lemma}\label{lemma:mcs}
For any $x, y\in X^c$, the following holds:
\begin{enumerate}
\item \label{lemma:mcs:1} if $x\sim y$ then $(\forall\varphi\in \LangKKB)(B\varphi\in x \ \mbox{iff} \ B\varphi \in y)$;
\item \label{lemma:mcs:2} if $x\sim y$ then $R(x)=R(y)$;
\item \label{lemma:mcs:3} $R(x)\subseteq [x]$;
\item  \label{lemma:mcs:4} either $R(x)\cap R(y)=\emptyset$ or $R(x)=R(y)$.
\end{enumerate}
\end{lemma}

\begin{proof}
Let $x, y\in X^c$.
\begin{enumerate}
\item Suppose $x\sim y$ and let $\varphi\in\LangKKB$ such that $B\varphi\in x$. By (sPI), we have $KB\varphi\in x$. As $x\sim y$, we have  $KB\varphi\in y$. Thus, by (T$_K$), we conclude $B\varphi\in y$. The other direction follows analogously.

\item Suppose $x\sim y$ and take $z\in R(x)$; let $\varphi\in\LangKKB$ be such that $B\varphi\in y$. Since $x\sim y$, by Lemma \ref{lemma:mcs}.\ref{lemma:mcs:1}, we have $B\varphi\in x$. Therefore,  $z\in R(x)$ implies that $\varphi\in z$. This shows that $z\in R(y)$, hence $R(x)\subseteq R(y)$. The reverse inclusion follows similarly.

\item Let $z\in R(x)$ and $\varphi\in\LangKKB$; we will show that $K\varphi\in x$ iff $K\phi\in z$. Suppose $K\varphi\in x$. Then, by (4$_K$), we have $KK\varphi\in x$. This implies, by (KB), that $BK\varphi\in x$. Hence, since $z\in R(x)$, we obtain $K\varphi\in z$. For the converse, suppose $K\phi\not\in x$, i.e., $\neg K\phi\in x$. Then, by (5$_K$), we have $K\neg K\varphi\in x$. Again by (KB), we obtain $B\neg K\varphi\in x$. Thus, since $z\in R(x)$, we obtain $\neg K\phi\in z$, i.e., $K\phi \not\in z$. We therefore conclude that $z\in [x]$, hence $R(x)\subseteq [x]$.
 

\item Suppose $R(x)\cap R(y)\not=\emptyset$. This means there is $z\in X^c$ such that $z\in R(x)$ and $z\in R(y)$. Then, by Lemma \ref{lemma:mcs}.\ref{lemma:mcs:3}, we have $x\sim z$ and $y\sim z$. Thus, by Lemma \ref{lemma:mcs}.\ref{lemma:mcs:2}, $R(x)=R(z)=R(y)$. \qedhere

\end{enumerate}
\end{proof}

Let $\cT^c$ be the topology on $X^{c}$ generated by the collection
$$\cB=\{[x]\cap \widehat{\Box\varphi} \ | \ x\in X^c, \varphi\in\LangKKB\}\cup \{R(x)\cap \widehat{\Box\varphi} \ | \ x\in X^c, \varphi\in\LangKKB\}.$$
It is not hard to prove that $\cB$ is in fact a basis for $\cT^c$. Define the \emph{canonical model} $\cX^c$ to be the tuple $(X^c, \cT^c, v^c)$, where $v^{c}(p) = \widehat{p}$. Observe that since $\widehat{\Box\T}=X^c$, we have $[x], R(x)\in \cT^c$ for all $x\in X^c$; therefore, by Lemma \ref{lemma:mcs}.\ref{lemma:mcs:3}, for each $x\in X^c$ the tuple $(x, [x], R(x))$ is an e-d scenario.

\begin{lemma}[Truth Lemma] \label{truth:lemma}
For every $\varphi\in\LangKKB$ and for each $x\in X^c$, $$\varphi\in x \ \mbox{iff} \ (\cX^c, x, [x], R(x))\models \varphi.$$
\end{lemma}

\begin{proof}
The proof proceeds as usual by induction on the structure of $\phi$; cases for the primitive propositions and the Boolean connectives are elementary and the case  for $K$ is presented in \cite[Theorem 1, p. 16]{bjorndahl}. So assume inductively that the result holds for $\phi$; we must show that it holds also for $\Box\varphi$ and $B\phi$. 

\begin{itemize}
\item [] Case for $\Box\phi$: 

\begin{itemize}
\item[($\Rightarrow$)] Let $\Box\phi\in x$. Then, observe that  $x\in \widehat{\Box\varphi}\cap [x]\subseteq \{y\in[x] \ | \ \phi\in y\}$ (by $x\in[x]$ and (T$_\Box$)). Since $\widehat{\Box\varphi}\cap [x]$ is open, it follows that 

\begin{equation}\label{eqn:int}
x\in\int \{y\in[x] \ | \ \phi\in y\}
\end{equation} By (IH), we also have 
\begin{align*}
\{y\in[x] \ | \ \phi\in y\} & = \{y\in[x] \ | \ (y, [y], R(y))\models\phi\} \notag\\
&  = \{y\in[x] \ | \ (y, [x], R(x))\models \phi\}  \tag{Lemma  \ref{lemma:mcs}}\\
& = \br{\phi}^{[x], R(x)} \notag
\end{align*}
Therefore, by (\ref{eqn:int}), we conclude that  $x\in \int (\br{\phi}^{[x], R(x)})$, i.e., $(x, [x], R(x))\models \Box\phi$.

\item[($\Leftarrow$)] Now suppose  that $(x, [x], R(x))\models \Box\phi$. This means, by the semantics, that $x\in\int (\br{\phi}^{[x], R(x)})$. As above, this is equivalent to $x\in\int \{y\in[x] \ | \ \phi\in y\}$. It then follows that there exists $U\in \cB$ such that $$x\in U\subseteq \{y\in[x] \ | \ \phi\in y\}.$$  
By definition of $\cB$, the basic open neighbourhood $U$ can be of the following forms:
\begin{enumerate}
\item $U=[z]\cap\widehat{\Box\psi}$, for some $z\in X^c$ and $\psi\in\LangKKB$;
\item $U=R(z)\cap\widehat{\Box\psi}$, for some $z\in X^c$ and $\psi\in\LangKKB$.
\end{enumerate}

However,  since $x\in U$, we can simply replace the above cases by:
\begin{enumerate}
\item $U=[x]\cap\widehat{\Box\psi}$, for some $\psi\in\LangKKB$;
\item $U=R(x)\cap \widehat{\Box\psi}$, for some $\psi\in\LangKKB$, respectively.
\end{enumerate}
The case for $U=[x]\cap\widehat{\Box\psi}$ follows similarly as in \cite[Theorem 1, p. 16]{bjorndahl}. 
We here only prove the case for $U=R(x)\cap \widehat{\Box\psi}$. We therefore have 
\begin{equation}\label{eqn:case2}
x\in R(x)\cap \widehat{\Box\psi}\subseteq \{y\in[x] \ | \ \phi\in y\}
\end{equation}

This means that for every $y\in R(x)$, if $\Box\psi\in y$ then $\varphi\in y$. Thus, we obtain that $\{\chi \ | \ B\chi\in x\}\cup \{\neg(\Box\psi\rightarrow \varphi)\}$ is an inconsistent set. Otherwise, it could be extended to a maximally consistent set $y$ such that $y\in R(x)$, $\Box\psi\in y$ and $\phi\not\in y$, contradicting (\ref{eqn:case2}). Thus, there exists a finite subset $\Gamma\subseteq \{\chi \ | \ B\chi\in x\}$ such that  $$\vdash  \bigwedge_{\chi\in\Gamma}\chi \rightarrow (\Box\psi\rightarrow\varphi),$$ 
which implies by $\mathsf{S4}_\Box$ that 

$$\vdash  \bigwedge_{\chi\in\Gamma}\Box\chi \rightarrow \Box(\Box\psi\rightarrow\varphi).$$ 

Observe that, since $x\in R(x)$, we have $\{\chi \ | \ B\chi\in x\}\subseteq x$. Moreover, by (RB), we also obtain that $\{\Box\chi \ | \ B\chi\in x\}\subseteq\{\chi \ | \ B\chi\in x\}\subseteq x$. We therefore obtain that $\bigwedge_{\chi\in\Gamma}\Box\chi \in x$, thus, that $\Box(\Box\psi\rightarrow\varphi)\in x$. Then, by $\mathsf{S4}_\Box$, we have $\Box\psi\rightarrow \Box\varphi \in x$. As $x\in\widehat{\Box\psi}$, we conclude $\Box\varphi\in x$.

\end{itemize}

\item []Case for $B\phi$:
\begin{itemize}
\item[($\Rightarrow$)] Let $B\varphi\in x$. Then, by defn. of $R$, we have $\varphi\in y$ for all $y\in R(x)$. Then, by (IH), we obtain $(\forall y\in R(x))(y, [y], R(y))\models \varphi$. By Lemma \ref{lemma:mcs}.\ref{lemma:mcs:3}, $y\in R(x)$ implies $x\sim y$. Thus, as $[y]=[x]$ and $R(x)=R(y)$ (Lemma \ref{lemma:mcs}.\ref{lemma:mcs:2}), we obtain, $(\forall y\in R(x))(y, [x], R(x))\models \varphi$. This means, $R(x)\subseteq \br{\varphi}^{[x], R(x)}$, thus, $(x, [x], R(x))\models B\varphi$.

\item[($\Leftarrow$)]  Let $B\varphi\not\in x$. This implies, $\{\psi \ | \ B\psi\in x\}\cup\{\neg\varphi\}$ is consistent. Otherwise, there exists a finite subset $\Gamma \subseteq \{\psi \ | \ B\psi\in x\}$ such that $$\vdash \bigwedge_{\chi\in\Gamma}\chi \rightarrow\varphi.$$ Then, by normality of $B$, $$\vdash  \bigwedge_{\chi\in\Gamma}B\chi \rightarrow B\varphi.$$
Since $B\chi\in x$ for all $\chi\in \Gamma$, we have $B\varphi\in x$, contradicting the fact that $x$ is a consistent set.

Then, by Lindenbaum's Lemma, $\{\psi \ | \ B\psi\in x\}\cup\{\neg\varphi\}$ can be extended to a maximally consistent set $y$. $\neg\varphi\in y$ means that $\varphi\not\in y$. Thus, by IH, $(y, [y], R(y))\not\models\varphi$.  Since $\{\psi \ | \ B\psi\in x\}\subseteq y$, we have $y\in R(x)$. This means, by Lemma \ref{lemma:mcs}.\ref{lemma:mcs:3} and Lemma \ref{lemma:mcs}.\ref{lemma:mcs:2}, $[y]=[x]$ and $R(x)=R(y)$. Therefore,  as $[y]=[x]$ and $R(x)=R(y)$, we have $(y, [x], R(x))\not\models\varphi$.  Thus, $y\in R(x)$ but $y\not\in \br{\varphi}^{[x], R(x)}$ implying that $(x, [x], R(x))\not \models B\varphi$. \qedhere
\end{itemize}

\end{itemize}
\end{proof}

Moreover, Lemma \ref{lemma:mcs}.\ref{lemma:mcs:3} guarantees that the evaluation tuple $(x, [x], R(x))$ is of desired kind (more precisely, the construction of the canonical  model guarantees that $R(x)\in \cT^c$, for all $x\in X^c$ and the aforementioned lemma makes sure that $R(x)\subseteq [x]$.) 

\begin{corollary}\label{cor:comp:wEL}
$\wLogB$ is a complete axiomatization of $\LangKKB$ with respect to the class of all topological subset spaces under e-d semantics.
\end{corollary}

\begin{proof} 
Let $\varphi\in\LangKKB$ such that $\not \vdash_{\wLogB} \varphi$. Then, $\{\neg \varphi\}$ is consistent and can be extended to a maximally consistent set $x\in X^c$. Then, by Lemma \ref{truth:lemma}, we obtain that $(\cX^c, x, [x], R(x))\not\models\varphi$.
\end{proof}

\subsection{Consistent belief and weak factivity}
In this section we consider two intermediate logical systems between $\wLogB$ and $\LogB$ and provide soundness and completeness results for these logics with respect to e-d semantics. Specifically, we consider the extensions of $\wLogB$ by the axioms of consistency of belief (D$_B$) and weak factivity (wF).

One seemingly unorthodox aspect of these extended logics and their corresponding semantics is that the additional axioms do not impose conditions on the structure of the topological subset space models, but they instead enforce conditions on the legitimate e-d scenarios $(x, U, V)$ that render the new axioms sound. We call an e-d scenario  $(x, U, V)$ \emph{consistent} if $V\not =\emptyset$, and it is called \emph{dense} if $V$ is dense in $U$ (i.e., if $U\subseteq cl(V)$). We then show that $\wLogB + \textrm{(D$_B$)}$ and $\wLogB + \textrm{(wF)}$ constitute sound and complete axiomatizations of $\LangKKB$ under the e-d semantics for the appropriate kind of e-d scenarios.


\begin{proposition}
$\wLogB + \textrm{(D$_B$)}$ is a sound axiomatization of $\LangKKB$ with respect to the class of all topological subset spaces under e-d semantics for consistent e-d scenarios. 
\end{proposition}
\begin{proof}
The validity of the axioms of $\wLogB$ follows as in Theorem \ref{theorem:sound:wEL}, we only need to prove the validity of (D$_B$) for consistent e-d scenarios.  Let $\cX=\Model$ be a topological subset model, $(x, U, V)$ a consistent e-d scenario, and $\varphi \in \LangKKB$.

\begin{itemize}
\item[(D$_{B}$)] Suppose $(x, U, V)\models B\varphi$.  This means $V\subseteq \br{\varphi}^{U, V}$. Then, since $V\not =\emptyset$, we have $V\not\subseteq  U\setminus \br{\varphi}^{U, V}$, therefore, $(x, U, V)\models \neg B\neg \varphi$. \qedhere

\end{itemize}

\end{proof}

The completeness proof follows similarly to the completeness proof of $\wLogB$ and the only difference lies in the requirement of a \emph{consistent} e-d scenario in the corresponding Truth Lemma. We therefore only need to prove that the canonical epistemic scenario $(x, [x], R(x))$ of the system $\wLogB + \textrm{(D$_B$)}$ is consistent, i.e., we need to show that $R(x)\not =\emptyset$ for any maximally consistent sets of $\wLogB + \textrm{(D$_B$)}$. The canonical model for the system $\wLogB + \textrm{(D$_B$)}$ is constructed as usual, exactly the same way as the one for $\wLogB$.

\begin{lemma}\label{lemma:serial}
The relation $R$ of the canonical model $\cX^c=(X^c, \cT^c, \nu^c)$  for the system $\wLogB + \textrm{(D$_B$)}$ is serial.
\end{lemma}
\begin{proof}
For any $x\in X^c$, the set $\{\psi \ | \ B\psi\in x\}$ is consistent. Otherwise, there is a finite subset $\Gamma \subseteq \{\psi \ | \ B\psi\in x\}$ and $\varphi\in \{\psi \ | \ B\psi\in x\}$  such that $$\vdash \bigwedge_{\chi\in\Gamma}\chi \rightarrow \neg\varphi.$$ Then, by normality of $B$, $$\vdash  \bigwedge_{\chi\in\Gamma}B\chi \rightarrow B\neg\varphi.$$ Since $B\chi\in x$ for all $\chi\in \Gamma$, we have $B\neg \varphi\in x$. On the other hand, since $B\varphi\in x$ and $\vdash B\varphi\rightarrow \neg B\neg \varphi$ ((D$_B$)-axiom), we obtain $\neg B\neg\varphi\in x$, contradicting the fact that $x$ a maximally consistent set. Therefore, $\{\psi \ | \ B\psi\in x\}$ can be extended to a maximally consistent set $y$ and, since $\{\psi \ | \ B\psi\in x\}\subseteq y$, we have $xRy$.
\end{proof}

\begin{corollary}\label{cor:serial}
Let  $\cX^c=(X^c, \cT^c, \nu^c)$  be the canonical model of the system $\wLogB + \textrm{(D$_B$)}$. Then, for all $x\in X^c$, we have $R(x)\not =\emptyset$. 
\end{corollary}

\begin{proposition}
$\wLogB + \textrm{(D$_B$)}$ is a complete axiomatization of $\LangKKB$ with respect to the class of all topological subset spaces under e-d semantics for consistent e-d scenarios. 
\end{proposition}
\begin{proof}
Follows from Corollary \ref{cor:serial} similarly to the proof  of Corollary \ref{cor:comp:wEL}.
\end{proof}


\begin{proposition}
$\wLogB + \textrm{(wF)}$ is a sound axiomatization of $\LangKKB$ with respect to the class of all topological subset spaces under e-d semantics for dense e-d scenarios. 
\end{proposition}
\begin{proof}
The validity of the axioms of $\wLogB$ follows as in Theorem \ref{theorem:sound:wEL}, we only need to prove the validity of (wF) for dense e-d scenarios.  Let $\cX=\Model$ be a topological subset model, $(x, U, V)$ a dense e-d scenario, and $\varphi \in \LangKKB$.

\begin{itemize}
\item[(wF)] Suppose $(x, U, V)\models B\varphi$.  This means $V\subseteq \br{\varphi}^{U, V}$. Then, since $x\in U\subseteq cl(V)$, we obtain $x\in U\subseteq cl(\br{\varphi}^{U, V})$, meaning that $(x, U, V)\models \Diamond\varphi$. \qedhere
\end{itemize}

\end{proof}

The completeness result for  $\wLogB + \textrm{(wF)}$ follows similarly to the above case: the only key step we need to show is that the canonical epistemic scenario $(x, [x], R(x))$ of the system $\wLogB + \textrm{(D$_B$)}$ is dense.

\begin{lemma}
Let  $\cX^c=(X^c, \cT^c, \nu^c)$  be the canonical model of the system $\wLogB + \textrm{(wF)}$. Then, for all $x\in X^c$, we have that $R(x)$ is dense in $[x]$, i.e., that $[x]\subseteq \cl(R(x))$.  
\end{lemma}

\begin{proof}
Let $x\in X^c$ and $y\in [x]$. We want to show that $y\in \cl(R(x))$, i.e., for all $U\in \cB$ with $y\in U$, we should show that $U\cap R(x)\not =\emptyset$ holds.  Let $U\in\cB$ such that $y\in U$. By definition of $\cB$, the basic open neighbourhood $U$ can be of the following forms:
\begin{enumerate}
\item $U=R(z)\cap\widehat{\Box\varphi}$, for some $z\in X^c$ and $\varphi\in\LangKKB$;
\item $U=[z]\cap\widehat{\Box\varphi}$, for some $z\in X^c$ and $\varphi\in\LangKKB$.
\end{enumerate}

However,  since $y\in[x]$ and $y\in U$, we can simply replace the above cases by:
\begin{enumerate}
\item $U=R(x)\cap \widehat{\Box\varphi}$, for some $\varphi\in\LangKKB$;
\item $U=[x]\cap\widehat{\Box\varphi}$, for some $\varphi\in\LangKKB$, respectively.
\end{enumerate}

If (1) is the case, the result follows trivially since $y\in U=R(x)\cap  \widehat{\Box\varphi}=U\cap R(x)$.

If (2) is the case, $U\cap R(x)=([x]\cap\widehat{\Box\varphi})\cap R(x)=\widehat{\Box\varphi}\cap R(x)$ (by Lemma \ref{lemma:mcs}.\ref{lemma:mcs:3}). Therefore, we need to show that $R(x)\cap\widehat{\Box\varphi}\not=\emptyset$:


Consider the set $\{\psi \ | \ B\psi\in y\}\cup\{\Box\varphi\}$. This set is consistent, otherwise, there exists a finite subset $\Gamma \subseteq \{\psi \ | \ B\psi\in y\}$ such that $$\vdash \bigwedge_{\chi\in\Gamma}\chi \rightarrow\Diamond\neg \varphi.$$ Then, by normality of $B$, $$\vdash  \bigwedge_{\chi\in\Gamma}B\chi \rightarrow B\Diamond\neg\varphi.$$ We also have

\begin{table}[htp]
\begin{tabularx}{.75\textwidth}{>{\hsize=.1\hsize}X>{\hsize=1.5\hsize}X>{\hsize=1.0\hsize}X}
1. & $\vdash B\Diamond\neg\varphi \rightarrow \Diamond \Diamond\neg\varphi$ & (wF)\\
2. & $\vdash\Diamond \Diamond\neg\varphi\rightarrow \Diamond\neg\varphi$ & (4$_\Box$)\\
3. & $\vdash B\Diamond\neg\varphi \rightarrow \Diamond\neg\varphi$ & CPL: 1, 2\\

\end{tabularx}
\end{table}

\ \\

Hence, 

$$\vdash  \bigwedge_{\chi\in\Gamma}B\chi \rightarrow \Diamond\neg\varphi.$$

Therefore, since $B\chi\in y$ for all $\chi\in \Gamma$, we have $\Diamond\neg \varphi\in y$. But we know that $\Box \varphi (:=\neg\Diamond\neg\varphi)\in y$ (since $y\in U=[x]\cap\widehat{\Box\varphi}$), contradicting the maximal consistency of $y$. Therefore, $\{\psi \ | \ B\psi\in y\}\cup\{\Box\varphi\}$ is consistent.  Moreover, by Lindenbaum's Lemma, it can be extended to a maximally consistent set $z$. Therefore, as $\{\psi \ | \ B\psi\in y\}\subseteq z$, we have $z\in R(y)=R(x)$ (since $y\in [x]$,  we have $R(x)=R(y)$ (by Lemma \ref{lemma:mcs}.\ref{lemma:mcs:2})). Moreover, $\Box\varphi\in z$, i.e., $z\in \widehat{\Box\varphi}$. We therefore conclude that $z\in\widehat{\Box\varphi}\cap R(x)\not =\emptyset$.
\end{proof}

\begin{corollary}\label{cor:dense}
Let  $\cX^c=(X^c, \cT^c, \nu^c)$  be the canonical model of the system $\wLogB + \textrm{(wF)}$. Then, for all $x\in X^c$, the e-d scenario $(x, [x], R(x))$ is dense.
\end{corollary}

\begin{proposition}
$\wLogB + \textrm{(wF)}$ is a complete axiomatization of $\LangKKB$ with respect to the class of all topological subset spaces under e-d semantics for dense e-d scenarios. 
\end{proposition}
\begin{proof}
Follows from Corollary \ref{cor:dense} similarly to the proof  of Corollary \ref{cor:comp:wEL}.
\end{proof}


\begin{proposition} \label{pro:comparison}
For every $\phi \in \LangKKB$, 
\begin{enumerate}
\item if $\proves_{\wLogB + \textrm{(D$_B$)}} \phi$, then $\proves_{\wLogB + \textrm{(wF)}} \phi$,   
\item if $\proves_{\mathsf{KD45}_{B}} \phi$, then $\proves_{\wLogB + \textrm{(wF)}} \phi$.
\end{enumerate}
\end{proposition}

\begin{proof} \ \\
\begin{enumerate}
\item See the proof of  Proposition \ref{pro:emb}.
\item It suffices to show that $\wLogB + \textrm{(wF)}$ derives all the axioms and the rule of inference of $\mathsf{KD45}_{B}$ and all of them except for (5$_B$) are derivable as in the proof of Proposition \ref{pro:emb}. For (5$_B$), we have: 
\draft{To be supplied}

\end{enumerate}
\end{proof}
}

\subsection{Confident belief} \label{section:CB}

It turns out that the strong semantics for the belief modality presented in Section \ref{sec:rvs}, namely
$$
\begin{array}{lcl}
(\X,x,U) \models B \phi & \textrm{ iff } & U \subseteq \cl(\int(\val{\phi}^{U})),
\end{array}
$$
does \textit{not} arise as a special case of our new e-d semantics: there is no condition (e.g., density) one can put on the doxastic range $V$ so that these two interpretations of $B\phi$ agree in general.
Roughly speaking, this is because the formulas of the form $\Box \phi \lor \Box \lnot \Box \phi$ correspond to the open and dense sets, but in general one cannot find a (nonempty) open set $V$ that is simultaneously contained in every open, dense set. As such, one cannot hope to validate (CB) in the e-d semantics presented above without also validating $B\falsum$.

However, we can validate (CB) on topological subset spaces by altering the semantic interpretation of the belief modality so that, intuitively, it ``ignores'' \emph{nowhere dense sets}.\footnote{A subset $S$ of a topological space is called \emph{nowhere dense} if its closure has empty interior: $\int(\cl(S)) = \emptyset$.} Loosely speaking, this works because nowhere dense sets are exactly the complements of sets with dense interiors.

More precisely, we work with the same notion of e-d scenarios as before, but use the following semantics clauses:
$$
\begin{array}{lcl}
(\X,x,U,V) \amods p & \textrm{ iff } & x \in v(p)\\
(\X,x,U,V) \amods \lnot \phi & \textrm{ iff } & (\X,x,U,V) \notamods \phi\\
(\X,x,U,V) \amods \phi \land \psi & \textrm{ iff } & (\X,x,U,V) \amods \phi \textrm{ and } (\X,x,U,V) \amods \psi\\
(\X,x,U,V) \amods K \phi & \textrm{ iff } & U = \aval{\phi}^{U,V}\\
(\X,x,U,V) \amods \Box \phi & \textrm{ iff } & x \in \int(\aval{\phi}^{U,V})\\
(\X,x,U,V) \amods B \phi & \textrm{ iff } & V \subseteq^{*} \aval{\phi}^{U,V},
\end{array}
$$
where $\aval{\phi}^{U,V} = \{x \in U \: : \: (\X,x,U,V) \amods \phi\},$
and we write $A \subseteq^{*} B$ iff $A \mysetminus B$ is nowhere dense. In other words, we interpret everything as before with the exception of the belief modality, which now effectively quantifies over \textit{almost all} worlds in the doxastic range $V$ rather than over all worlds.\footnote{Given a subset $A$ of a topological space $X$, we say that a property $P$ holds for \emph{almost all} points in $A$ just in case $A \subseteq^{*} \{x \: : \: P(x)\}$.}


\shortv{
\begin{theorem}
$\wLogB + \emph{\textrm{(CB)}}$ is a sound and complete axiomatization of $\LangKKB$ with respect to the class of all topological subset spaces under e-d semantics using the semantics given above: for all formulas $\phi \in \LangKKB$, if $\amods \phi$, then $\proves_{\wLogB + \emph{\textrm{(CB)}}} \phi$. Moreover, $\LogB$ is sound and complete with respect to these semantics for e-d scenarios where $V = U$.
\end{theorem}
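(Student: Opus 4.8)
The plan is to verify soundness axiom-by-axiom, obtain completeness by adapting the canonical-model construction already used for $\wLogB$ under e-d semantics, and deduce the ``moreover'' clause from the known soundness and completeness of $\LogB$ via a small topological lemma. For soundness, the axioms not mentioning $B$ are validated exactly as in Theorem~\ref{theorem:bjo}, since the clauses for $K$ and $\Box$ are untouched. For the $B$-axioms the key point is that the nowhere dense sets form an ideal --- downward closed, and the union of two closed nowhere dense sets has empty interior by a one-line Baire-style argument --- so $\subseteq^{*}$ behaves like a reflexive, transitive ``almost-inclusion'' compatible with Boolean operations; granting this, (K$_{B}$) and (sPI) go through as in the soundness proof for $\wLogB$, and (KB) is immediate from $V\subseteq U$. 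For (RB), if $N := V\mysetminus\aval{\phi}^{U,V}$ is nowhere dense then, using that $V$ is open, one checks $V\mysetminus\int(\aval{\phi}^{U,V})\subseteq\cl(N)$, which is nowhere dense. For (CB), I would reuse the computation from the soundness proof of $\LogB$ that $\aval{\Box\phi\vee\Box\lnot\Box\phi}^{U,V}$ is an open set dense in $U$; the part of $U$ lying outside a dense-in-$U$ open set is contained in that set's (closed, nowhere dense) boundary, so $V$ minus this extension is nowhere dense, i.e.\ (CB) holds.

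For completeness I would build the canonical model exactly as in the $\wLogB$ case: $X^{c}$ the set of maximal $(\wLogB+\textrm{(CB)})$-consistent sets, with $\sim$, $R$, the basis $\cB$ and the topology $\cT^{c}$ defined verbatim; adding (CB) affects none of these definitions, so $\cB$ is a basis, each $(x,[x],R(x))$ is an e-d scenario, the lemma linking $\sim$ and $R$ (in particular $R(x)\subseteq[x]$, and $R(x)=R(y)$ whenever $x\sim y$) carries over, and so does the Truth Lemma for the $\LangKK$-fragment. The only new ingredient is the $B$-clause of the Truth Lemma, $B\phi\in x$ iff $R(x)\subseteq^{*}\aval{\phi}^{[x],R(x)}$, where, by the inductive hypothesis together with the observation that $y\in[x]$ forces $[y]=[x]$ and $R(y)=R(x)$, we have $\aval{\phi}^{[x],R(x)}=\widehat{\phi}\cap[x]$. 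The forward direction is immediate: $B\phi\in x$ forces $\phi\in y$ for every $y\in R(x)$, so $R(x)\subseteq\widehat{\phi}$, hence a fortiori $R(x)\subseteq^{*}\widehat{\phi}\cap[x]$.

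The converse is the main obstacle, and the one place where (CB) is essential. Suppose $B\phi\notin x$. Since $\vdash\Box\phi\lthen\phi$, the derived necessitation rule for $B$ together with (K$_{B}$) give $\vdash B\Box\phi\lthen B\phi$, so $B\Box\phi\notin x$; hence (by the usual canonical-relation fact) some $z_{0}\in R(x)$ has $\lnot\Box\phi\in z_{0}$, and (CB) then forces $\Box\lnot\Box\phi\in z_{0}$. Set $W := R(x)\cap\widehat{\Box\lnot\Box\phi}$, a nonempty basic open set; it suffices to show $W\subseteq\cl(R(x)\cap\widehat{\lnot\phi})$, for this makes $R(x)\mysetminus(\widehat{\phi}\cap[x])$ somewhere dense, i.e.\ $(\cX^{c},x,[x],R(x))\notamods B\phi$. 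To prove the inclusion I would fix $z\in W$ and a basic open neighbourhood $O$ of $z$, reduce $O$ (exactly as in the $\Box$-case of the Truth Lemma, using $z\in R(x)\subseteq[x]$) to the form $R(x)\cap\widehat{\Box\theta}$ with $\Box\theta\in z$, and then show that $\{\chi:B\chi\in x\}\cup\{\Box\theta,\lnot\phi\}$ is consistent; a maximal consistent extension $y$ then lies in $R(x)$ and witnesses $O\cap R(x)\cap\widehat{\lnot\phi}\neq\emptyset$. Consistency is forced: otherwise some finite $\Gamma\subseteq\{\chi:B\chi\in x\}$ satisfies $\vdash\bigwedge_{\chi\in\Gamma}\chi\lthen(\Box\theta\lthen\phi)$, whence by $\mathsf{S4}_{\Box}$ (distributing $\Box$ and using $4_{\Box}$) $\vdash\bigwedge_{\chi\in\Gamma}\Box\chi\lthen(\Box\theta\lthen\Box\phi)$; but (RB) yields $B\Box\chi\in x$ for each $\chi\in\Gamma$, hence $\Box\chi\in z$ since $z\in R(x)$, hence $\Box\phi\in z$, contradicting $\Box\lnot\Box\phi\in z$ via (T$_{\Box}$). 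This establishes the Truth Lemma, and completeness follows in the usual way.

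For the ``moreover'' clause I would first prove the topological lemma that for an open $U$ and any $A\subseteq U$, $\ U\mysetminus A\ $ is nowhere dense iff $\ U\subseteq\cl(\int(A))$, each direction being a short argument exploiting that the boundary of an open set is nowhere dense and that $\int(A)$ is dense in $U$ iff every nonempty open subset of $U$ meets $\int(A)$. Granting this, a routine induction shows that on any e-d scenario of the form $(x,U,U)$ the present semantics agree with the $\LogB$-semantics of Section~\ref{sec:rvs}: the clauses for $K$ and $\Box$ coincide outright, and $U\subseteq^{*}\aval{\phi}^{U,U}$ iff $U\mysetminus\aval{\phi}^{U,U}$ is nowhere dense iff $U\subseteq\cl(\int(\aval{\phi}^{U,U}))$, which by the inductive hypothesis is exactly the $\LogB$-clause for $B$. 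Since the epistemic scenarios $(x,U)$ are precisely the e-d scenarios with $V=U$, a formula is valid under $\amods$ on these scenarios iff it is valid under the $\LogB$-semantics on topological subset models, so soundness and completeness of $\LogB$ with respect to the former follow from its soundness and completeness with respect to the latter.
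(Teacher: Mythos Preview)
Your argument is correct, but it takes a genuinely different route from the paper's. The paper does not rebuild the canonical model for $\wLogB+\textrm{(CB)}$ under $\amods$. Instead it defines the translation $\alpha\colon\LangKKB\to\LangKKB$ that replaces every $B$ by $B\Diamond\Box$, proves the topological lemma (for open $A$, $A\subseteq^{*}B$ iff $A\subseteq\cl(\int(B))$), and uses it to show $(\X,x,U,V)\amods\phi$ iff $(\X,x,U,V)\models\alpha(\phi)$ under the \emph{ordinary} e-d semantics. Since $\wLogB+\textrm{(CB)}$ proves $B\chi\leftrightarrow B\Diamond\Box\chi$ (the forward direction is (RB) plus $\mathsf{S4}_{\Box}$; the backward direction uses (CB) together with the propositional fact $\Diamond\Box\chi\land(\Box\chi\lor\Box\lnot\Box\chi)\to\chi$), one gets $\proves_{\wLogB+\textrm{(CB)}}\phi\leftrightarrow\alpha(\phi)$, and completeness follows immediately from the already-established completeness of $\wLogB$ under $\models$. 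This is considerably shorter and more modular: soundness also drops out of the translation, with no axiom-by-axiom verification.

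Your direct canonical-model argument, by contrast, makes the role of (CB) in the Truth Lemma fully explicit: it is exactly what lets you locate a basic open set $W=R(x)\cap\widehat{\Box\lnot\Box\phi}$ inside $\cl(R(x)\cap\widehat{\lnot\phi})$ and thereby witness that $R(x)\setminus\aval{\phi}^{[x],R(x)}$ is somewhere dense. Two minor points to tighten. First, in the reduction of the basic open $O\ni z$, you do not literally obtain $O=R(x)\cap\widehat{\Box\theta}$ in the case $O=[x]\cap\widehat{\Box\theta}$; what you get (and what suffices) is $O\cap R(x)=R(x)\cap\widehat{\Box\theta}$, since the target intersection is with $R(x)\cap\widehat{\lnot\phi}$ anyway. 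Second, for (CB) in soundness, $\aval{\Box\phi\lor\Box\lnot\Box\phi}^{U,V}$ is open and dense in $U$, and one then uses that a subset of an open $U$ which is nowhere dense in the subspace $U$ is nowhere dense in $X$; this is true but worth stating.

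For the ``moreover'' clause your argument coincides with the paper's: both rest on the same lemma specialized to $A=U$, yielding that on scenarios with $V=U$ the $\amods$-clause for $B$ reduces to $U\subseteq\cl(\int(\aval{\phi}^{U,U}))$, i.e., to the Section~\ref{sec:rvs} semantics.
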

}

\fullv{
I claim that $\wLogB + \textrm{(CB)}$ is sound and complete with respect to these semantics, and moreover that $\LogB$ is sound and complete with respect to these semantics when we further insist that $V = U$ (since in this case we recover the semantics given in the previous section, namely that $B\phi$ holds just in case its complement is nowhere dense in $U$).

\begin{lemma}
Let $X$ be a topological space and $A$ an open subset of $X$. Then for any $B \subseteq X$, we have $A \subseteq^{*} B$ iff $A \subseteq \cl(\int(B))$.
\end{lemma}

Let $\alpha: \LangKKB \to \LangKKB$ be the map that replaces every occurence of $B$ with $B\Diamond\Box$.

\begin{lemma} \label{lem:eq1}
For all topological subset models $\X$ and every e-d scenario $(x,U,V)$ therein, we have
$$(\X,x,U,V) \amods \phi \textrm{ iff } (\X,x,U,V) \models \alpha(\phi).$$
\end{lemma}

\begin{lemma} \label{lem:eq2}
For all $\phi \in \LangKKB$, if $\proves_{\wLogB} \alpha(\phi)$, then $\proves_{\wLogB + \emph{\textrm{(CB)}}} \phi$.
\end{lemma}

\begin{theorem}
$\wLogB + \emph{\textrm{(CB)}}$ is a complete axiomatization of $\LangKKB$ with respect to the class of all topological subset spaces under e-d semantics using the semantics given above: for all formulas $\phi \in \LangKKB$, if $\amods \phi$, then $\proves_{\wLogB + \emph{\textrm{(CB)}}} \phi$.
\end{theorem}

\begin{proof}
Suppose that $\amods \phi$. Then by Lemma \ref{lem:eq1} we know that $\models \alpha(\phi)$. By Corollary \ref{cor:comp:wEL}, then, we can deduce that $\proves_{\wLogB} \alpha(\phi)$, and so by Lemma \ref{lem:eq2} we obtain $\proves_{\wLogB + \textrm{(CB)}} \phi$, as desired.
\end{proof}
}

\section{Conclusion and Discussion} \label{sec:dis}

When we think of knowledge as what is entailed by the ``available evidence'', a tension between two foundational principles proposed by Stalnaker emerges. First, that whatever the available evidence entails is believed ($K \phi \lthen B \phi$), and second, that what is believed is believed to be entailed by the available evidence ($B \phi \lthen B K \phi$). In the former case, it is natural to interpret ``available'' as, roughly speaking, ``currently in hand'', whereas in the latter, intuition better accords with a broader interpretation of availability as referring to any evidence one could potentially access.

Being careful about this distinction leads to a natural division between what we might call ``knowledge'' and ``knowability''; the space of logical relationships between knowledge, knowability, and belief turns out to be subtle and interesting. We have examined several logics meant to capture some of these relationstips, making essential use of \textit{topological} structure, which is ideally suited to the representation of evidence and the epistemic/doxastic attitudes it informs. In this refined setting, belief can also be defined in terms of knowledge and knowability, provided we take on two additional principles, ``weak factivity'' (wF) and ``confident belief'' (CB); in this case, the semantics for belief have a particularly appealing topological character: roughly speaking, a proposition is believed just in case it is true in \textit{most} possible alternatives, where ``most'' is interpreted topologically as ``everywhere except on a nowhere dense set''.

This interpretation of belief  first appeared in the topological belief semantics presented in \cite{wollicpaper}: Baltag et al.~take the believed propositions to be the sets with \emph{dense interiors} in a given \emph{evidential} topology.
Interestingly, however,
although these semantics essentially coincide with those we present in Section \ref{sec:rvs},
the motivations and intuitions behind the two proposals are quite different. Baltag et al.~start with a subbase model in which the (subbasic) open sets represent pieces of evidence that the agent has \textit{obtained directly} via some observation or measurement. They do not distinguish between evidence-in-hand and evidence-out-there as we do; moreover, the notion of belief they seek to capture is that of \emph{justified belief}, where ``justification'', roughly speaking, involves having evidence that cannot be defeated by any other available evidence. 
(They
also
consider
a weaker, defeasible type of knowledge,
{\em correctly justified belief}, and obtain topological semantics
for it under which
Stalnaker's original system $\mathsf{Stal}$
is
sound and complete.)
The fact that two rather different conceptions of belief correspond to essentially the same topological interpretation is, we feel, quite striking, and deserves a closer look.


Despite the elegance of this topological characterization of belief, our investigation of the interplay between knowledge, knowability, and belief naturally leads to consideration of weaker logics in which belief is not interpreted in this way. In particular, we focus on the principles (wF) and (CB) and what is lost by their omission. Again we rely on topological subset models to interpret these logics, proposing novel semantic machinery to do so. This machinery includes the introduction of the \emph{doxastic range} and, perhaps more dramatically, a modification to the semantic satisfaction relation $(\amods)$ that builds the topological notion of ``almost everywhere'' quantification directly into the foundations of the semantics. We believe this approach is an interesting area for future research, and in this regard our soundness and completeness results may be taken as proof-of-concept.

\section*{Acknowledgements}\label{sec:Acknowledgements}
We thank Alexandru Baltag for feedback and discussions on the topic of this paper.~We also thank the anonymous reviewers of TARK 2017 for their valuable comments.~Ayb\"{u}ke \"{O}zg\"{u}n acknowledges financial support from European Research Council grant EPS 313360. Much of this work was conducted during a research visit funded in part by the Berkman Faculty Development Fund at Carnegie Mellon University and the Center for Formal Epistemology.

\commentout{
\ayComment{In this work, we studied notions of knowledge, knowability and (evidence-based) belief in a topological framework.  
As mentioned in Section \ref{sec:rvs}, our strongest belief semantics yielding \emph{belief as truth at most points} coincides with the interpretation of belief.

Our initial investigation is inspired by Stalnaker's work \cite{StalnakerDB} and aimed to improve his bi-modal epistemic-doxastic logic by distinguishing the notion of knowledge from knowability with the help of topological subset spaces.

We therefore adapted his system into a tri-modal logic that leads to a notion of belief definable in terms of knowledge (based on the  evidence-in-hand) and knowability (based on the available evidence the agent possesses). We derived our strongest belief semantics from this definition and provided soundness and completeness results for the tri-modal logic $\LogB$ and its belief fragment $\mathsf{KD45}_B$.

Our main contribution in this paper can be seen two-fold: on the one hand we refined Stalnaker's combined system by distinguishing knowledge from knowability and proposed a notion of belief definable in terms these notions.  
On the other hand we proposed three different topological semantics for belief in the style of subset space semantics: while the strongest of these derived from the belief definition we obtained from $\LogB$, the last two, to the best of our knowledge, constitute novel topological semantics for belief.

\textcolor{red}{Summary of our work}

In this

\begin{itemize}
\item inspired by Stalnaker's approach and the richness of topological spaces

\item three different semantics we proposed

\item one derived from a Stalnaker-like principles, coincide with some other topological approches in the literature

\item the last two are completely novel, while the e-d semantics provides a general framework as to how to interpret belief on topological subset spaces (this is novel), the last one is very special, even the satisfaction relation is defined inherently topologically. 

\item we have presented soundness and completeness results for several systems we work with.

\end{itemize}

In this work, we introduced three different topological semantics for in the style of subset space semantics. While the first and the strongest semantics we worked with are derived from a cafeful

\textcolor{red}{Connection to existing literature, some detailed discussions}

\textcolor{red}{Future Work}
}

\subsection{Knowability as potential knowledge and Fitch's paradox}

Especially due to the influence of Fitch's Paradox (a.k.a the knowability paradox), the notion of knowability,  both as an abstract concept and a formal modality, has become a much investigated topic in epistemology. While most proposals, in particular the ones that emerged in response to the knowability paradox,  take knowability to be (metaphysical) \emph{possibility of knowledge} (see, e.g., \cite{sep-fitch-paradox} for a survey on responses to the knowability paradox), a few proposals provide alternative interpretations of knowability.  Among the latter category, we mention \cite{balbiani08,moss92} where knowability, roughly speaking, is interpreted dynamically as \emph{known after receving some further truthful information}.  One of the common features of the aforementioned interpretations of knowability is that they are formalized as compositions  of  two modalities: possibility and knowledge in the former and a dynamic modality and knowledge in the latter.

Our topological notion of knowability in terms of the interior operator, on the other hand, does not seem to resemble either of the above interpretations. On the one hand, we formalize knowability in terms of a single modality, one that is not decomposable into different modalities (at least  not in  a straightforward way), on the other hand its meaning is concerned with evidence-based \emph{potential} knowledge. In these respects, the notion of knowability we consider in this work shares many common features with Fuhrmann's proposal of knowability as potential knowledge \cite{Fuhrmann2014}. While his work puts forward an unorthodox response to the knowability paradox and his formal analysis is based on relational structures, our mention of the paradox of knowability is merely a remark as to how our topologically interpreted notion of knowability relates to Fuhrmann's proposal  and how it circumvents the paradox. 

\fullv{\aybuke{maybe add a couple of more word about Fuhrmann's setting}}

To recall, the opens of a topological subset model $\cX=(X, \cT, \nu)$ are considered to be the evidence pieces available to the agent that she can in principle, i.e., \emph{potentially} discover. The knowable modality $\Box$, informally speaking, picks out one of the available (and truthful) evidence pieces that entail the proposition in question: 
$$
\begin{array}{lcl}
(\X,x,U) \models \Box \phi & \textrm{ iff } & (\exists V\in\cT)(x\in V \textrm{ and }  V\subseteq\br{\phi}^{U}).
\end{array}
$$

More precisely, $\varphi$ is knowable (to the agent) with respect to the actual state $x$ and evidence-in-hand $U$ iff there exists a more \emph{refined, truthful} piece of evidence \emph{available} that entails $\varphi$. This modality therefore can be conceptualized as potential knowledge simply because the proposition is entailed by some evidence in $\cT$ that could in principle be discovered but it is not necessarily entailed by the evidence the agent has in-hand. 

\aybuke{we can add a couple of more sentences explaining the paradox}

We conclude this section by showing that our notions of knowability and  knowledge do not fall into the knowability paradox. Briefly speaking, the paradox of knowability challenges the so-called Verification Thesis (VT) which states  that every truth is knowable. We can formalize (VT) in our language $\LangKK$ as 
\begin{equation} \label{ver}
\varphi\rightarrow \Box\varphi \tag{VT}
\end{equation}
The paradox stems from the fact that (VT), together with some innocuous principles about knowledge, implies that every truth is known, i.e., $\varphi\rightarrow K\varphi$ (Omniscience Principle (OP)),  which is an absurdly strong statement.

Neither (VT) nor (OP) are theorems of our basic logic of knowledge $\LogKK$. Moreover, even if we extend the logic of knowledge by (VT) $\varphi\rightarrow \Box\varphi$,  which corresponds to working with discrete spaces, (OP) still does not follow.

\begin{proposition}
$\varphi\rightarrow \Box\varphi$ is valid in $\cX=(X, \cT, \nu)$ iff $(X, \cT)$ is a discrete space.
\end{proposition}
\begin{proof}
\begin{itemize}
\item[$(\Rightarrow)$] Suppose $(X, \cT)$ is not a discrete space. This means that there is a subset $A\subseteq X$ such that $\int(A)\not =A$. Thus, there exists $y\in A$ but $y\not\in\int(A)$. We can easily define a valuation function $\nu$ on $(X, \cT)$ such that $\cX\not\models p \rightarrow \Box p$ for some $p\in \textsc{prop}$. Set $\nu$ to be $\nu (p)=A$ and $\nu(q)=\emptyset$  for all $q\in  \textsc{prop}\setminus \{ p\}$ and consider the epistemic scenario $(y, X)$. We then have $(y, X)\models p$ since $y\in \nu(p)=\br{p}^X=A$. However, $(y, X)\not \models\Box p$ since $y\not\in\int(\br{p}^X)=\int(A)$. Therefore, $(\cX, y, X)\not \models p\rightarrow \Box p$.
\item[$(\Leftarrow)$] Suppose $(X, \cT)$ is not a discrete space and let $\cX=(X, \cT, \nu)$ be an arbitrary model on $(X, \cT)$  and  $(x, U)\in ES(\cX)$ such that $(\cX, x, U)\models \varphi$. This means, $x\in \br{\varphi}^U$.  As $(X, \cT)$ is a discrete space, $ \br{\varphi}^U=\int( \br{\varphi}^U)$. Therefore, we obtain $x\in \int( \br{\varphi}^U)$, i.e., $(\cX, x, U)\models \Box\varphi$.

\end{itemize}

\end{proof}

However, the class of discrete spaces  does not make the formula $\varphi\rightarrow K\varphi$ valid.  Therefore, similar to the case in \cite{Fuhrmann2014}, our setting does not support the knowability paradox: acceptance of  (VT) formulated in terms of the knowable modality $\Box$  does not lead to the conclusion that every truth is known.



}

\ayComment{******************************************************************************************

\subsection{From topological spaces to brushes}

\aybuke{This connection is a bit more complicated than I thought, OR maybe I misunderstood what we talked last time. Let's discuss this in the next meeting.}

It is well-known that there is a one-to-one connection between the reflexive and transitive Kripke frames and Alexandroff spaces: every reflexive and transitive frame $(X, R)$ correspond to an Alexandroff space, namely to the Alexandroff space $(X, \tau_R)$ constructed from $(X, R)$ by  putting all the upward closed set of $(X, R)$ in $\tau_R$. On the reverse direction, we can also construct  a reflexive and transitive Kripke frame $(X, R_\tau)$ from an arbitrary space $(X, \tau)$ by taking $$xR_\tau y \ \mbox{iff} \ x\in\Cl\{y\}.$$

In this section, in a way similar to the idea behind the above construction,  we explain how to construct a brush from a topological space and give the the necessary and sufficient conditions of a topological space in order to be able to build. 

\begin{definition}[Brush space]
A topological space $(X, \tau)$  is called a \textbf{brush space} if there exists a non-empty set $C\in\tau$ such that $\bigcap( \tau\setminus \{\emptyset\})=C$.
\end{definition}

This is not hard to see that any brush space is Alexandroff.

\begin{proposition}
For any brush space $(X, \tau)$, the Kripke frame $(X, R_\tau=X\times C)$ is a brush.
\end{proposition}
 \begin{proof}
 By definition of $R_\tau$, as $C\subseteq X$, $C$ is the unique final cluster of $(X, R_\tau)$. We only need to show that $X\setminus C$ is an irreflexive antichain. Let $x, y\in (X\setminus C)$ such that $xR_\tau y$. This means, by definition of $R_\tau$, that $y\in C$, contradicting $y\in  (X\setminus C)$. Therefore, for every $x, y\in (X\setminus C)$ we have $\neg (xR_\tau y)$ implying that  $(X\setminus C)$ is an irreflexive anti-chain.
 \end{proof}
   
For any topo-model $(X, \tau, \nu)$ based on a brush space, $M_\cX=(X, R_\tau, \nu)$ denotes the corresponding Kripke model.

\begin{proposition}
For any topo-model $\cX=(X, \tau, \nu)$ based on a brush space, any $(x, U)\in ES(\cX)$ and any formula $\varphi\in\LangB$, we have $$\cX, (x, U)\models \varphi \ \mbox{iff} \ M_\cX, x\models\varphi$$
\end{proposition}

\begin{proof}
The proof follows by induction on the structure of $\varphi$ and cases for the propostional variables and Booleans are elementary. 

\textbf{IH:} For any $\psi\in Sub(\varphi)$,  $\cX, (x, U)\models \psi \ \mbox{iff} \ M_\cX, x\models\psi$.

\textbf{Case:} $\varphi=B\psi$

\begin{align}
\cX, (x, U) \models B\psi & \ \mbox{iff} \ \Cl_U\Intr_U\br{\psi}^U=U\notag\\
&  \ \mbox{iff} \ C\subseteq \br{\psi}^U \tag{otherwise, $\Intr_U\br{\psi}^U=\Intr\br{\psi}^U=\emptyset$} \\
&  \ \mbox{iff} \ C\subseteq \|\psi\|_{M_\cX} \tag{by (IH)}\\
&  \ \mbox{iff} \ R(x) \subseteq \|\psi\|_{M_\cX}  \tag{since $C=R(x)$}\\
&  \ \mbox{iff} \  \ M_\cX, x\models B\psi\notag
\end{align}

\end{proof}

\ayComment{

\aybuke{By a ``{\bf KD45} Kripke frame'', we mean bunch of brushes put together in a disconnected way. Explain this further if we decide to keep Prop. \ref{prop.brushspace} as it is!}

It is well-known that there is a one-to-one connection between the reflexive and transitive Kripke frames and Alexandroff spaces: every reflexive and transitive frame $(X, R)$ correspond to an Alexandroff space, namely to the Alexandroff space $(X, \tau_R)$ constructed from $(X, R)$ by  putting all the upward closed set of $(X, R)$ in $\tau_R$. On the reverse direction, we can also construct  a reflexive and transitive Kripke frame $(X, R_\tau)$ from an arbitrary space $(X, \tau)$ by taking $$xR_\tau y \ \mbox{iff} \ x\in\Cl\{y\}.$$  However, in general, $(X, \tau)\not = (X, \tau_{R_\tau})$. For this to be the case, we should start the construction with an Alexandroff space:

\begin{proposition}[\draft{REF}]\label{prop.Alexandroff}
Given a topological space $(X, \tau)$,   $$(X, \tau)= (X, \tau_{R_\tau}) \ \mbox{iff} \ (X,\tau) \ \mbox{is Alexandroff}.$$
\end{proposition}

 In this section,  we prove an analogous result to Proposition \ref{prop.Alexandroff} for {\bf KD45} frames. We already have explained how to construct a topological space from a {\bf KD45}-frame $(X, R)$: we simply take the upward closed sets with respect to $(X, R^+)$ as the opens of the corresponding topological spaces. The main question  therefore is the followings: what are the necessary and sufficient conditions of a topological space in order to be able to build a {\bf KD45} frame by using the above construction?
   
     
\ayComment{\begin{definition}[Brush space]
A topological space $(X, \tau)$  is called a \textbf{brush space} if there exists a non-empty set of disjoint opens $\fC\subseteq \tau$ such that for any $\cA\subseteq \tau$ with $\bigcap \cA\not =\emptyset$ and $|\cA|\geq 2$, there exists $C\in \fC$ such that $\bigcap \cA= C$.
\end{definition}}
\begin{definition}[Brush space]
A topological space $(X, \tau)$  is called a \textbf{brush space} if there exists a non-empty set of disjoint opens $\fC\subseteq \tau$ such that for any $U, V\in\tau$ with $U\not =V$, either $U\cap V=\emptyset$ or $U\cap V=C$ for some $C\in\fC$. 
\end{definition}

\begin{proposition}\label{prop.brush-Alexandroff}
Every brush space is Alexandroff.
\end{proposition}
\begin{proof}
Let $(X, \tau)$ be a brush space and $\cA\subseteq \tau$. We want to show that $\bigcap\cA\in \tau$. We will in fact show a stronger result. 

\underline{Claim:} for every subset  $\cA\subseteq \tau$, either $\bigcap\cA=\emptyset$ or $\bigcap\cA=C$ for some $C\in\fC$.

Suppose  $\bigcap\cA\not=\emptyset$. Also suppose, w.l.o.g., that $\cA$ is an infinite subset of $\tau$. Since $\cA$ has a non-empty intersection and $\tau$ is a brush topology, for any $U,V \in \cA$ with $U\not =V$  we have $U\cap V=C$ for some $C\in\fC$. Therefore, $\bigcap\cA\subseteq C$.  Moreover,  $C\subseteq W$ for any $W\in\cA$. Otherwise, i.e. if $W\subset C$, it would be the case that $W\cap (U\cap V)=W\in \fC$, thus, $\fC$ would have two non-disjoint opens $W$ and $C$ in it. 
Thus,  we have $C\subseteq\bigcap \cA$.  Therefore, $\bigcap\cA=C$.
\end{proof}

\begin{proposition}
For any brush space $(X, \tau)$ and $x\in X$
\begin{enumerate}
\item if $x\in C$ for some $C\in\fC$, then $\Cl\{x\}=\bigcup\{U\in\tau \ | \ C\subseteq U\}$; NOT TRUE!, only $\subseteq$-direction is true!
\item if $x\not\in \bigcup\fC$, then $\Cl\{x\}=\{x\}$.
\end{enumerate}
\end{proposition}

\begin{proposition}\label{prop.brushspace}
For any brush space $(X, \tau)$, 
\begin{enumerate}
\item $(X, \tau)= (X, \tau_{R_\tau})$.
\item $(X, R_\tau)$ is a reflexive {\bf KD45} frame;
\end{enumerate}
\end{proposition}
\begin{proof} \
\begin{enumerate}
\item Follows from Proposition \ref{prop.Alexandroff} and Proposition \ref{prop.brush-Alexandroff}.
\item By Proposition \ref{prop.brush-Alexandroff}, we know that  $R_\tau$ is reflexive and transitive.  We moreover need to show
\begin{enumerate}
\item $\forall x\in X \exists C\in\fC (y\in C \ \mbox{implies} \  xRy)$
\item $\forall x,y \in X (xRy  \ \mbox{implies} \   x=y \ \mbox{or} \ (\exists C\in\fC \ \mbox{and} \ y\in C))$
\end{enumerate}
\end{enumerate}

\end{proof}

}

}

%
%
%
%
%


\bibliographystyle{eptcs}
\bibliography{Ref}

\commentout{

\pagebreak

\appendix

\shortv{
\section{Proofs} \label{app:prf}

\subsection{Soundness and completeness of $\LogB$} 

Let $e \colon \LangKKB \to \LangKK$ be the map that replaces each instance of $B$ with $K\Diamond\Box$.

\begin{lemma} \label{lem:eqv}
For all $\phi \in \LangKKB$, we have $\proves_{\LogKK^{+}} \phi \liff e(\phi)$.
\end{lemma}

\begin{proof}
This is a straightforward induction on the structure of $\phi$ using (EQ).
\end{proof}

\begin{proposition} \label{pro:sam}
$\LogKK^{+}$ and $\LogB$ prove the same theorems.
\end{proposition}

\begin{proof}
In light of Proposition \ref{pro:eqv}, it suffices to show that $\LogKK^{+}$ proves everything in Table \ref{tbl:axi}. By Lemma \ref{lem:eqv}, then, it suffices to show that for every $\phi$ that is an instance of an axiom scheme from Table \ref{tbl:axi}, we have $\proves_{\LogKK} e(\phi)$. And for this, by Theorem \ref{theorem:bjo}, we need only show that each such $e(\phi)$ is valid in all topological subset models.

Let $\cX = \Model$ be a topological subset model and $(x,U) \in ES(\cX)$.

\begin{itemize}

\item[(K$_{B}$)]
Suppose $(x,U) \models K\Diamond\Box(\phi \lthen \psi)$ and $(x,U) \models K\Diamond\Box\phi$. Then $U \subseteq \cl(\int(\val{\phi \lthen \psi}^{U})) \cap \cl(\int(\val{\phi}^{U}))$. Let $y \in U$ and let $V$ be an open set containing $y$. Then we must have $V \cap \int(\val{\phi \lthen \psi}^{U}) \neq \emptyset$ and so, since this set is also open,
\begin{eqnarray*}
V \cap \int(\val{\phi \lthen \psi}^{U}) \cap \int(\val{\phi}^{U}) & \neq & \emptyset\\
\therefore \qquad V \cap \int(\val{\phi \lthen \psi}^{U} \cap \val{\phi}^{U}) & \neq & \emptyset\\
\therefore \qquad V \cap \int(\val{\psi}^{U}) & \neq & \emptyset,
\end{eqnarray*}
which establishes that $y \in \cl(\int(\val{\psi}^{U}))$. This shows that $U \subseteq \cl(\int(\val{\psi}^{U}))$, and therefore $(x,U) \models K\Diamond\Box\psi$.

\item[(sPI)]
Suppose $(x,U) \models K\Diamond\Box\phi$. Then $U = \val{\Diamond\Box\phi}^{U}$, and so for all $y \in U$ we have $(y,U) \models K\Diamond\Box\phi$. This implies that $U = \val{K\Diamond\Box\phi}^{U}$, hence $(x,U) \models KK\Diamond\Box\phi$.

\item[(KB)]
Suppose $(x,U) \models K\phi$. Then $U = \val{\phi}^{U}$, and so (since $U$ is open), $U \subseteq \cl(\int(\val{\phi}^{U}))$, which implies $(x,U) \models K\Diamond\Box\phi$.

\item[(RB)]
Suppose $(x,U) \models K\Diamond\Box\phi$. Then $U \subseteq \cl(\int(\val{\phi}^{U}))$, so $U \subseteq \cl(\int(\int(\val{\phi}^{U})))$, hence $U = \val{\Diamond\Box\Box\phi}^{U}$, which implies that $(x,U) \models K\Diamond\Box\Box\phi$.

\item[(wF)]
Suppose $(x,U) \models K\Diamond\Box\phi$. Then $x \in U \subseteq \cl(\int(\val{\phi}^{U})) \subseteq \cl(\val{\phi}^{U})$, which implies that $(x,U) \models \Diamond \phi$.

\item[(CB)]
Observe that
$$\val{\Box \phi \lor \Box \lnot \Box \phi}^{U} = \int(\val{\phi}^{U}) \cup \int(X \mysetminus \int(\val{\phi}^{U}))$$
is an open set. Moreover, it is dense in $U$; to see this, let $y \in U$ and let $V$ be an open neighbourhood of $y$. Then either $V \cap \int(\val{\phi}^{U}) \neq \emptyset$ or, if not, $V \subseteq X \mysetminus \int(\val{\phi}^{U})$, hence $V \subseteq \int(X \mysetminus \int(\val{\phi}^{U}))$. We therefore have
$$U \subseteq \cl(\int(\val{\Box \phi \lor \Box \lnot \Box \phi}^{U})),$$
whence $(x,U) \models K\Diamond\Box(\Box \phi \lor \Box \lnot \Box \phi)$. \qedhere

\end{itemize}
\end{proof}

\begin{proposition} \label{pro:snd}
$\LogKK^{+}$ is a sound axiomatization of $\LangKKB$ with respect to the class of topological subset models: for every $\phi \in \LangKKB$, if $\phi$ is provable in $\LogKK^{+}$ then $\phi$ is valid in all topological subset models.
\end{proposition}

\begin{proof}
This follows from the soundness of $\LogKK$ (Theorem \ref{theorem:bjo}) together with the fact that the semantics for the $B$ modality ensures that (EQ) is valid is all topological subset models.
\end{proof}

\begin{corollary} \label{cor:snd}
$\LogB$ is a sound axiomatization of $\LangKKB$ with respect to the class of topological subset models.
\end{corollary}

\begin{proof}
Immediate from Propositions \ref{pro:sam} and \ref{pro:snd}.
\end{proof}

\begin{theorem}
$\LogB$ is a complete axiomatization of $\LangKKB$ with respect to the class of topological subset models: for every $\phi \in \LangKKB$, if $\phi$ is valid in all topological subset models then $\phi$ is provable in $\LogB$.
\end{theorem}

\begin{proof}
We show the contrapositive. Let $\varphi \in \LangKKB$ be such that $\not\proves_{\LogB} \phi$. By Lemma \ref{lem:eqv} and Proposition \ref{pro:sam} we have $\proves_{\LogB} \phi \liff e(\phi)$, and so also $\not\proves_{\LogB} e(\phi)$. Since $e(\phi) \in \LangKK$ and $\LogB$ is an extension of $\LogKK$, we know that $\not\proves_{\LogKK} e(\phi)$. Thus, by Theorem \ref{theorem:bjo}, there exists a topological subset model $\cX$ and $(x,U) \in ES(\cX)$ such that $(\cX, x, U) \not \models e(\phi)$ and so, by the soundness of $\LogB$, we obtain $(\cX, x, U)\not\models\varphi$.
\end{proof}

\subsection{$\mathsf{KD45}_{B}$ and the doxastic fragment $\LangB$}

\begin{proposition} \label{pro:emb}
For every $\phi \in \LangB$, if $\proves_{\mathsf{KD45}_{B}} \phi$, then $\proves_{\LogB} \phi$.
\end{proposition}

\begin{proof}
It suffices to show that $\LogB$ derives all the axioms and the rule of inference of $\mathsf{KD45}_{B}$.
(K$_{B}$) is itself an axiom of $\LogB$.
It is not hard to see, using (wF) and $\mathsf{S4}_{\Box}$, that $\proves_{\LogB} \lnot B \falsum$; given this, (D$_{B}$) follows from (K$_{B}$) with $\psi$ replaced by $\falsum$.
(4$_B$) follows easily from (sPI) and (KB).
To derive (5$_B$), first observe that by (5$_{K}$) we have $\proves_{\LogB} \lnot K \Diamond \Box \phi \lthen K \lnot K \Diamond \Box \phi$; from Proposition \ref{pro:eqv} it then follows that $\proves_{\LogB} \lnot B \phi \lthen K \lnot B \phi$, and so from (KB) we can deduce (5$_{B}$).
Lastly, (Nec$_{B}$) follows directly from (Nec$_{K}$) together with (KB).
\end{proof}

\begin{theorem} \label{theorem:sac}
$\mathsf{KD45}_{B}$ is a sound and complete axiomatization of $\LangB$ with respect to the class of all topological subset spaces: for every $\phi \in \LangB$, $\phi$ is provable in $\mathsf{KD45}_{B}$ if and only if $\phi$ is valid in all topological subset models.
\end{theorem}

Soundness follows immediately from Proposition \ref{pro:emb} together with the soundness of $\LogB$ (Corollary \ref{cor:snd}). The remainder of this section is devoted to developing the tools needed to prove completeness. Our proof relies crucially on the standard Kripke-style interpretation of $\LangB$ in relational models and the completeness results pertaining thereto. We therefore begin with a brief review of these notions (for a more comprehensive overview, we direct the reader to \cite{blackburn01,zak97}).

A \defin{relational frame} is a pair $(X,R)$ where $X$ is a nonempty set and $R$ is a binary relation on $X$. A \defin{relational model} is a relational frame $(X,R)$ equipped with a \emph{valuation} function $v: \textsc{prop} \to 2^{X}$. The language $\LangB$ is interpreted in a relational model $M = (X,R,v)$ by extending the valuation function via the standard recursive clauses for the Boolean connectives together with the following:
$$
\begin{array}{lcl}
(M,x) \models B \phi & \textrm{ iff } & (\forall y\in X)(xRy \ \mbox{implies} \ (M, y)\models \phi).
\end{array}
$$
Let $\sval{\phi}_{M} = \{x \in X \: : \: (M,x) \models \phi\}$. A \defin{belief frame} is a frame $(X,R)$ where $R$ is serial, transitive, and Euclidean.\footnote{A relation is \emph{serial} if $(\forall x)(\exists y)(xRy)$; it is \emph{transitive} if $(\forall x,y,z)((xRy \; \& \; yRz) \rimp xRz)$; it is \emph{Euclidean} if $(\forall x,y,z)((xRy \; \& \; xRz) \rimp yRz)$.}

\begin{theorem} \label{theorem:bel} 
$\mathsf{KD45}_{B}$ is a sound and complete axiomatization of $\LangB$ with respect to the class of belief frames.
\end{theorem}

\begin{proof}
See, e.g., \cite[Chapter 5]{zak97} or \cite[Chapters 2, 4]{blackburn01}.
\end{proof}

A frame $(X,R)$ is called a \defin{brush} if there exists a nonempty subset $C \subseteq X$ such that $R = X \times C$. If such a $C$ exists, clearly it is unique; call it the \emph{final cluster} of the brush. A brush is called a \defin{pin} if $|X \mysetminus C| = 1$. It is not hard to see that every brush is a belief frame. Conversely, the following Lemma shows that every belief frame $(X,R)$ is a disjoint union of brushes.\footnote{A frame $(X,R)$ is said to be a \emph{disjoint union} of frames $(X_{i},R_{i})$ provided the $X_{i}$ partition $X$ and the $R_{i}$ partition $R$.}

\begin{lemma} \label{lem:dis}
Let $(X,R)$ be a belief frame, and define
$$x \sim y \textrm{ iff } (\exists z \in X)(xRz \textrm{ and } yRz).$$
Then $\sim$ is an equivalence relation extending $R$. Moreover, if $[x]$ denotes the equivalence class of $x$ under $\sim$, then $([x],R|_{[x]})$ is a brush, and $(X,R)$ is the disjoint union of all such brushes.
\end{lemma}

\begin{proof}
Reflexivity of $\sim$ follows from seriality of $R$, and symmetry is immediate. To see that $\sim$ is transitive, suppose $x \sim x'$ and $x' \sim x''$. Then there exist $y,z \in X$ such that $xRy$, $x'Ry$, $x'Rz$ and $x''Rz$. Because $R$ is Euclidean, it follows that $yRz$; because $R$ is transitive, we can deduce that $xRz$; it follows that $x \sim x''$. To see that $\sim$ extends $R$, suppose $xRy$. Then because $R$ is Euclidean, we have $yRy$, which implies $x \sim y$.

The fact that $\sim$ is an equivalence relation tells us that the sets $[x]$ partition $X$; furthermore, since $xRy$ implies $[x] = [y]$, we also know that the sets $R|_{[x]}$ partition $R$. Thus $(X,R)$ is the disjoint union of the frames $([x],R|_{[x]})$.

Finally we show that each such frame $([x],R|_{[x]})$ is a brush. Set $C_{x} = \{y \in [x] \: : \: yRy\}$; that $C_{x} \neq \emptyset$ follows easily from $R$ being serial and Euclidean. Let $y \in C_{x}$. Then for all $x' \in [x]$ we have $x' \sim y$, so there is some $z \in X$ with $x'Rz$ and $yRz$; now because $R$ is Euclidean, we can deduce that $zRy$, so by transitivity $x'Ry$. It follows that $[x] \times \{y\} \subseteq R$, hence $[x] \times C_{x} \subseteq R$. On the other hand, if $y \notin C_{x}$, then for every $x' \in [x]$ we have $\lnot(x'Ry)$, or else the Euclidean property would imply $yRy$, a contradiction. Thus, $R|_{[x]} = [x] \times C_{x}$, so $([x],R|_{[x]})$ is a brush with final cluster $C_{x}$.
\end{proof}

\begin{corollary}
$\mathsf{KD45}_{B}$ is a sound and complete axiomatization of $\LangB$ with respect to the class of brushes and with respect to the class of pins.
\end{corollary}


There is a close connection between the relational semantics for $\LangB$ presented above and our topological semantics for this language. For any frame $(X,R)$, let $R^{+}$ denote the \emph{reflexive closure} of $R$:
$$R^{+} = R \cup \{(x,x) \: : \: x \in X\}.$$
Given a transitive frame $(X, R)$, the set $\cB_{R^{+}}=\{R^+(x) \: : \: x \in X\}$ constitutes a topological basis on $X$; denote by $\cT_{R^+}$ the topology generated by $\cB_{R^+}$ (see, e.g., \cite{BvdH13,vanbenthemSPACE} for a more detailed discussion of this construction). It is well-known that $(X, \cT_{R^+})$ is an Alexandroff space and, for every $x \in X$, the set $R^+(x)$ is the smallest open neighborhood of $x$.

\begin{lemma} \label{lem:pre}
Let $(X,R)$ be a belief frame. For each $x \in X$, let $C_{x}$ denote the final cluster of the brush $([x], R|_{[x]})$ as in Lemma \ref{lem:dis}, and let $\int$ and $\cl$ denote the interior and closure operators, respectively, in the topological space $(X, \cT_{R^+})$. Then for all $x \in X$ and every $A \subseteq X$:
\begin{enumerate}
\item \label{lem:pre1}
$[x] \in \cT_{R^+}$, and so $(x, [x]) \in ES(\cX_{M})$;
\item \label{lem:pre2}
$R(x) = C_{x} \in \cT_{R^+}$;
\item \label{lem:pre3}
$\int(A) \cap C_{x} \neq \emptyset$ if and only if $A \supseteq C_{x}$;
\item \label{lem:pre4}
$\cl(A) \supseteq [x]$ if and only if $A \cap C_{x} \neq \emptyset$.
\end{enumerate}
\end{lemma}

\begin{proof}
$ $\newline
\vspace{-5mm}
\begin{enumerate}
\item
This follows from the fact that $y \in [x]$ implies $R^+(y) \subseteq [x]$, which in turn follows from the fact that $\sim$ extends $R$ (Lemma \ref{lem:dis}).
\item
That $R(x) = C_{x}$ follows from the fact that $R|_{[x]} = [x] \times C_{x}$ (Lemma \ref{lem:dis}). To see that $C_{x}$ is open, observe that if $y \in C_{x}$, then $R^{+}(y) = R(y) = C_{y} = C_{x}$.
\item
Since $C_{x}$ is open, it follows immediately that if $A \supseteq C_{x}$ then $\int(A) \supseteq C_{x}$, so in particular $\int(A) \cap C_{x} \neq \emptyset$. Conversely, if $y \in \int(A) \cap C_{x}$ then $R^{+}(y) \subseteq A$, since $R^{+}(y)$ is the smallest open neigbourhood of $y$; therefore, since $R^{+}(y) = R(y) = C_{x}$, we have $A \supseteq C_{x}$.
\item
First suppose that $y \in A \cap C_{x}$ and let $z \in [x]$. By part \ref{lem:pre2}, $R^{+}(z) \supseteq R(z) = C_{x}$, and so since $R^{+}(z)$ is the smallest open neighbourhood of $z$ and $y \in C_{x}$, it follows that $z \in \cl(\{y\}) \subseteq \cl(A)$, hence $[x] \subseteq \cl(A)$. Conversely, suppose that $A \cap C_{x} = \emptyset$. Then since $C_{x}$ is open it follows that $C_{x} \cap \cl(A) = \emptyset$, which shows that $[x] \not\subseteq \cl(A)$. \qedhere
\end{enumerate}
\end{proof}

Given a transitive model $M = (X,R,v)$, let $\cX_{M}$ denote the topological subset model constructed from $M$, namely $(X,\cT_{R^+},v)$.

\begin{lemma} \label{lem:bru}
Let $M = (X, R, v)$ be a belief frame. Then for every formula $\phi \in \LangB$, for every $x \in X$ we have
$$(M, x) \models \phi \ \mbox{ iff } \ (\cX_{M}, x, [x]) \models \phi.$$
\end{lemma}

\begin{proof}
The proof follows by induction on the structure of $\phi$; cases for the primitive propositions and the Boolean connectives are elementary. So assume inductively that the result holds for $\phi$; we must show that it holds also for $B\phi$. Note that the inductive hypothesis implies that $\val{\phi}^{[x]}= \sval{\phi}_M \cap [x]$, since by Lemma \ref{lem:dis}, $y \in [x]$ implies $[y] = [x]$.
\begin{align}
(M, x) \models B \phi & \ \mbox{ iff } \ R(x) \subseteq \|\phi\|_M\notag\\
&  \ \mbox{ iff } \ C_{x} \subseteq \|\phi\|_M \tag{Lemma \ref{lem:pre}.\ref{lem:pre2}}\\
&  \ \mbox{ iff } \ C_{x} \subseteq \sval{\phi}_M \cap [x] \tag{since $C_{x} \subseteq [x]$}\\
&  \ \mbox{ iff } \ C_{x} \subseteq \val{\phi}^{[x]} \tag{inductive hypothesis}\\
&  \ \mbox{ iff } \ \int(\val{\phi}^{[x]}) \cap C_{x} \neq \emptyset \tag{Lemma \ref{lem:pre}.\ref{lem:pre3}}\\
&  \ \mbox{ iff } \ \cl(\int(\val{\phi}^{[x]})) \supseteq [x] \tag{Lemma \ref{lem:pre}.\ref{lem:pre4}}\\
&  \ \mbox{ iff } \ (\cX_{M}, x, [x]) \models B\phi. \notag \hspace{3cm} \qedhere
\end{align}
\end{proof}

Completeness is an easy consequence of this lemma: if $\phi \in \LangB$ is such that $\not\proves_{\mathsf{KD45}_{B}} \phi$, then by Theorem \ref{theorem:bel} there is a belief frame $M$ that refutes $\phi$ at some point $x$. Then, by Lemma \ref{lem:bru}, $\phi$ is also refuted in $\cX_{M}$ at the epistemic scenario $(x, [x])$. This completes the proof of Theorem \ref{theorem:sac}.


\subsection{Soundness and completeness of $\wLogB$}

\begin{theorem}\label{theorem:sound:wEL}
$\wLogB$ is a sound axiomatization of $\LangKKB$ with respect to the class of all topological subset spaces under e-d semantics.
\end{theorem}

\begin{proof}
The validity of the axioms without the modality $B$ follows as in Theorem \ref{theorem:bjo}, since the only difference here lies in the semantic clause for $B$. Let $\cX=\Model$ be a topological subset model, $(x, U, V)$ an e-d scenario, and $\varphi, \psi \in \LangKKB$.

\begin{itemize}
\item[(K$_{B}$)] Suppose $(x, U, V)\models B(\varphi\rightarrow \psi)$ and $(x, U, V)\models B\varphi$. This means $V\subseteq \br{\varphi\rightarrow \psi}^{U, V} = (U\setminus\br{\varphi}^{U, V})\cup \br{\psi}^{U, V}$ and $V\subseteq \br{\varphi}^{U, V}$, from which we obtain $V\subseteq \br{\psi}^{U, V}$, i.e., $(x, U, V)\models B\psi$.

\item [(sPI)] Suppose $(x, U, V)\models B\varphi$. This means $V\subseteq \br{\varphi}^{U, V}$.
As such, for every $y \in U$ we have $(y,U,V) \models B\phi$, which
implies that $\br{B\varphi}^{U, V}=U$, so $(x, U, V)\models KB\varphi$.


\item [(KB)] Suppose $(x, U, V)\models K\varphi$. This means $\br{\varphi}^{U, V}=U$. As $V\subseteq U$ (by definition of $(x, U, V)$), we obtain $(x, U, V)\models B\varphi$.

\item [(RB)] Suppose $(x, U, V)\models B\varphi$. This means $V\subseteq \br{\varphi}^{U, V}$. Thus, since $V$ is open, we obtain $V\subseteq  \int(\br{\varphi}^{U, V})$. As $\int(\br{\varphi}^{U, V})=\br{\Box\varphi}^{U, V}$, we have $V\subseteq  \br{\Box\varphi}^{U, V}$, i.e., $(x, U, V)\models B\Box\varphi$. \qedhere

\end{itemize}
\end{proof}

Completeness follows from a fairly straightforward canonical model construction, similar to the completeness proof of $\LogKK$ in \cite{bjorndahl}. Roughly speaking, we extend the canonical model in \cite{bjorndahl} in order to be able to prove the truth lemma for the belief modality $B$.

Let $X^c$ be the set of all maximal $\wLogB$-consistent sets of formulas. Define binary relations $\sim$ and $R$ on $X^c$ by
$$x\sim y \ \mbox{iff} \ (\forall\varphi\in \LangKKB)(K\varphi\in x \ \Leftrightarrow \ K\varphi\in y)\footnote{In fact, this is equivalent to  $(\forall\varphi\in\LangKKB)(K\varphi\in x \ \Rightarrow \ \varphi\in y)$, since $K$ is an $\mathsf{S5}$ modality.}$$ and
$$xR y \ \mbox{iff} \ (\forall\varphi\in \LangKKB)(B\varphi\in x \ \Rightarrow \ \varphi\in y).$$
It is not hard to see that $\sim$ is an equivalence relation, hence, it induces equivalence classes on $X^c$. Let $[x]$ denote the equivalence class of $x$ induced by the relation $\sim$ and let $R(x) = \{y\in X^c \ | \ xRy\}$. Define $\widehat{\varphi}=\{y\in X^c \ | \ \varphi\in y\}$, so $x\in\widehat{\varphi}$ iff $\varphi\in x$.

The axioms of $\wLogB$ that relate $K$ and $B$ induce the following important links between $\sim$ and $R$:

\begin{lemma}\label{lemma:mcs}
For any $x, y\in X^c$, the following holds:
\begin{enumerate}
\item \label{lemma:mcs:1} if $x\sim y$ then $(\forall\varphi\in \LangKKB)(B\varphi\in x \ \mbox{iff} \ B\varphi \in y)$;
\item \label{lemma:mcs:2} if $x\sim y$ then $R(x)=R(y)$;
\item \label{lemma:mcs:3} $R(x)\subseteq [x]$;
\item  \label{lemma:mcs:4} either $R(x)\cap R(y)=\emptyset$ or $R(x)=R(y)$.
\end{enumerate}
\end{lemma}

\begin{proof}
Let $x, y\in X^c$.
\begin{enumerate}
\item Suppose $x\sim y$ and let $\varphi\in\LangKKB$ such that $B\varphi\in x$. By (sPI), we have $KB\varphi\in x$. As $x\sim y$, we have  $KB\varphi\in y$. Thus, by (T$_K$), we conclude $B\varphi\in y$. The other direction follows analogously.

\item Suppose $x\sim y$ and take $z\in R(x)$; let $\varphi\in\LangKKB$ be such that $B\varphi\in y$. Since $x\sim y$, by Lemma \ref{lemma:mcs}.\ref{lemma:mcs:1}, we have $B\varphi\in x$. Therefore,  $z\in R(x)$ implies that $\varphi\in z$. This shows that $z\in R(y)$, hence $R(x)\subseteq R(y)$. The reverse inclusion follows similarly.

\item Let $z\in R(x)$ and $\varphi\in\LangKKB$; we will show that $K\varphi\in x$ iff $K\phi\in z$. Suppose $K\varphi\in x$. Then, by (4$_K$), we have $KK\varphi\in x$. This implies, by (KB), that $BK\varphi\in x$. Hence, since $z\in R(x)$, we obtain $K\varphi\in z$. For the converse, suppose $K\phi\not\in x$, i.e., $\neg K\phi\in x$. Then, by (5$_K$), we have $K\neg K\varphi\in x$. Again by (KB), we obtain $B\neg K\varphi\in x$. Thus, since $z\in R(x)$, we obtain $\neg K\phi\in z$, i.e., $K\phi \not\in z$. We therefore conclude that $z\in [x]$, hence $R(x)\subseteq [x]$.
 

\item Suppose $R(x)\cap R(y)\not=\emptyset$. This means there is $z\in X^c$ such that $z\in R(x)$ and $z\in R(y)$. Then, by Lemma \ref{lemma:mcs}.\ref{lemma:mcs:3}, we have $x\sim z$ and $y\sim z$. Thus, by Lemma \ref{lemma:mcs}.\ref{lemma:mcs:2}, $R(x)=R(z)=R(y)$. \qedhere

\end{enumerate}
\end{proof}

Let $\cT^c$ be the topology on $X^{c}$ generated by the collection
$$\cB=\{[x]\cap \widehat{\Box\varphi} \ | \ x\in X^c, \varphi\in\LangKKB\}\cup \{R(x)\cap \widehat{\Box\varphi} \ | \ x\in X^c, \varphi\in\LangKKB\}.$$
It is not hard to prove that $\cB$ is in fact a basis for $\cT^c$. Define the \emph{canonical model} $\cX^c$ to be the tuple $(X^c, \cT^c, v^c)$, where $v^{c}(p) = \widehat{p}$. Observe that since $\widehat{\Box\T}=X^c$, we have $[x], R(x)\in \cT^c$ for all $x\in X^c$; therefore, by Lemma \ref{lemma:mcs}.\ref{lemma:mcs:3}, for each $x\in X^c$ the tuple $(x, [x], R(x))$ is an e-d scenario.

\begin{lemma}[Truth Lemma] \label{truth:lemma}
For every $\varphi\in\LangKKB$ and for each $x\in X^c$, $$\varphi\in x \ \mbox{iff} \ (\cX^c, x, [x], R(x))\models \varphi.$$
\end{lemma}

\begin{proof}
The proof proceeds as usual by induction on the structure of $\phi$; cases for the primitive propositions and the Boolean connectives are elementary and the case  for $K$ is presented in \cite[Theorem 1, p. 16]{bjorndahl}. So assume inductively that the result holds for $\phi$; we must show that it holds also for $\Box\varphi$ and $B\phi$. 

\begin{itemize}
\item [] Case for $\Box\phi$: 

\begin{itemize}
\item[($\Rightarrow$)] Let $\Box\phi\in x$. Then, observe that  $x\in \widehat{\Box\varphi}\cap [x]\subseteq \{y\in[x] \ | \ \phi\in y\}$ (by $x\in[x]$ and (T$_\Box$)). Since $\widehat{\Box\varphi}\cap [x]$ is open, it follows that 

\begin{equation}\label{eqn:int}
x\in\int \{y\in[x] \ | \ \phi\in y\}
\end{equation} By (IH), we also have 
\begin{align*}
\{y\in[x] \ | \ \phi\in y\} & = \{y\in[x] \ | \ (y, [y], R(y))\models\phi\} \notag\\
&  = \{y\in[x] \ | \ (y, [x], R(x))\models \phi\}  \tag{Lemma  \ref{lemma:mcs}}\\
& = \br{\phi}^{[x], R(x)} \notag
\end{align*}
Therefore, by (\ref{eqn:int}), we conclude that  $x\in \int (\br{\phi}^{[x], R(x)})$, i.e., $(x, [x], R(x))\models \Box\phi$.

\item[($\Leftarrow$)] Now suppose  that $(x, [x], R(x))\models \Box\phi$. This means, by the semantics, that $x\in\int (\br{\phi}^{[x], R(x)})$. As above, this is equivalent to $x\in\int \{y\in[x] \ | \ \phi\in y\}$. It then follows that there exists $U\in \cB$ such that $$x\in U\subseteq \{y\in[x] \ | \ \phi\in y\}.$$  
By definition of $\cB$, the basic open neighbourhood $U$ can be of the following forms:
\begin{enumerate}
\item $U=[z]\cap\widehat{\Box\psi}$, for some $z\in X^c$ and $\psi\in\LangKKB$;
\item $U=R(z)\cap\widehat{\Box\psi}$, for some $z\in X^c$ and $\psi\in\LangKKB$.
\end{enumerate}

However,  since $x\in U$, we can simply replace the above cases by:
\begin{enumerate}
\item $U=[x]\cap\widehat{\Box\psi}$, for some $\psi\in\LangKKB$;
\item $U=R(x)\cap \widehat{\Box\psi}$, for some $\psi\in\LangKKB$, respectively.
\end{enumerate}
The case for $U=[x]\cap\widehat{\Box\psi}$ follows similarly as in \cite[Theorem 1, p. 16]{bjorndahl}. 
We here only prove the case for $U=R(x)\cap \widehat{\Box\psi}$. We therefore have 
\begin{equation}\label{eqn:case2}
x\in R(x)\cap \widehat{\Box\psi}\subseteq \{y\in[x] \ | \ \phi\in y\}
\end{equation}

This means that for every $y\in R(x)$, if $\Box\psi\in y$ then $\varphi\in y$. Thus, we obtain that $\{\chi \ | \ B\chi\in x\}\cup \{\neg(\Box\psi\rightarrow \varphi)\}$ is an inconsistent set. Otherwise, it could be extended to a maximally consistent set $y$ such that $y\in R(x)$, $\Box\psi\in y$ and $\phi\not\in y$, contradicting (\ref{eqn:case2}). Thus, there exists a finite subset $\Gamma\subseteq \{\chi \ | \ B\chi\in x\}$ such that  $$\vdash  \bigwedge_{\chi\in\Gamma}\chi \rightarrow (\Box\psi\rightarrow\varphi),$$ 
which implies by $\mathsf{S4}_\Box$ that 

$$\vdash  \bigwedge_{\chi\in\Gamma}\Box\chi \rightarrow \Box(\Box\psi\rightarrow\varphi).$$ 

Observe that, since $x\in R(x)$, we have $\{\chi \ | \ B\chi\in x\}\subseteq x$. Moreover, by (RB), we also obtain that $\{\Box\chi \ | \ B\chi\in x\}\subseteq\{\chi \ | \ B\chi\in x\}\subseteq x$. We therefore obtain that $\bigwedge_{\chi\in\Gamma}\Box\chi \in x$, thus, that $\Box(\Box\psi\rightarrow\varphi)\in x$. Then, by $\mathsf{S4}_\Box$, we have $\Box\psi\rightarrow \Box\varphi \in x$. As $x\in\widehat{\Box\psi}$, we conclude $\Box\varphi\in x$.

\end{itemize}

\item []Case for $B\phi$:
\begin{itemize}
\item[($\Rightarrow$)] Let $B\varphi\in x$. Then, by defn. of $R$, we have $\varphi\in y$ for all $y\in R(x)$. Then, by (IH), we obtain $(\forall y\in R(x))(y, [y], R(y))\models \varphi$. By Lemma \ref{lemma:mcs}.\ref{lemma:mcs:3}, $y\in R(x)$ implies $x\sim y$. Thus, as $[y]=[x]$ and $R(x)=R(y)$ (Lemma \ref{lemma:mcs}.\ref{lemma:mcs:2}), we obtain, $(\forall y\in R(x))(y, [x], R(x))\models \varphi$. This means, $R(x)\subseteq \br{\varphi}^{[x], R(x)}$, thus, $(x, [x], R(x))\models B\varphi$.

\item[($\Leftarrow$)]  Let $B\varphi\not\in x$. This implies, $\{\psi \ | \ B\psi\in x\}\cup\{\neg\varphi\}$ is consistent. Otherwise, there exists a finite subset $\Gamma \subseteq \{\psi \ | \ B\psi\in x\}$ such that $$\vdash \bigwedge_{\chi\in\Gamma}\chi \rightarrow\varphi.$$ Then, by normality of $B$, $$\vdash  \bigwedge_{\chi\in\Gamma}B\chi \rightarrow B\varphi.$$
Since $B\chi\in x$ for all $\chi\in \Gamma$, we have $B\varphi\in x$, contradicting the fact that $x$ is a consistent set.

Then, by Lindenbaum's Lemma, $\{\psi \ | \ B\psi\in x\}\cup\{\neg\varphi\}$ can be extended to a maximally consistent set $y$. $\neg\varphi\in y$ means that $\varphi\not\in y$. Thus, by IH, $(y, [y], R(y))\not\models\varphi$.  Since $\{\psi \ | \ B\psi\in x\}\subseteq y$, we have $y\in R(x)$. This means, by Lemma \ref{lemma:mcs}.\ref{lemma:mcs:3} and Lemma \ref{lemma:mcs}.\ref{lemma:mcs:2}, $[y]=[x]$ and $R(x)=R(y)$. Therefore,  as $[y]=[x]$ and $R(x)=R(y)$, we have $(y, [x], R(x))\not\models\varphi$.  Thus, $y\in R(x)$ but $y\not\in \br{\varphi}^{[x], R(x)}$ implying that $(x, [x], R(x))\not \models B\varphi$. \qedhere
\end{itemize}

\end{itemize}
\end{proof}

Moreover, Lemma \ref{lemma:mcs}.\ref{lemma:mcs:3} guarantees that the evaluation tuple $(x, [x], R(x))$ is of desired kind (more precisely, the construction of the canonical  model guarantees that $R(x)\in \cT^c$, for all $x\in X^c$ and the aforementioned lemma makes sure that $R(x)\subseteq [x]$.) 

\begin{corollary}\label{cor:comp:wEL}
$\wLogB$ is a complete axiomatization of $\LangKKB$ with respect to the class of all topological subset spaces under e-d semantics.
\end{corollary}

\begin{proof} 
Let $\varphi\in\LangKKB$ such that $\not \vdash_{\wLogB} \varphi$. Then, $\{\neg \varphi\}$ is consistent and can be extended to a maximally consistent set $x\in X^c$. Then, by Lemma \ref{truth:lemma}, we obtain that $(\cX^c, x, [x], R(x))\not\models\varphi$.
\end{proof}

\subsection{Consistent belief and weak factivity}


\begin{proposition}
$\wLogB + \textrm{(D$_B$)}$ is a sound axiomatization of $\LangKKB$ with respect to the class of all topological subset spaces under e-d semantics for consistent e-d scenarios. 
\end{proposition}
\begin{proof}
The validity of the axioms of $\wLogB$ follows as in Theorem \ref{theorem:sound:wEL}, we only need to prove the validity of (D$_B$) for consistent e-d scenarios.  Let $\cX=\Model$ be a topological subset model, $(x, U, V)$ a consistent e-d scenario, and $\varphi \in \LangKKB$.

\begin{itemize}
\item[(D$_{B}$)] Suppose $(x, U, V)\models B\varphi$.  This means $V\subseteq \br{\varphi}^{U, V}$. Then, since $V\not =\emptyset$, we have $V\not\subseteq  U\setminus \br{\varphi}^{U, V}$, therefore, $(x, U, V)\models \neg B\neg \varphi$. \qedhere

\end{itemize}

\end{proof}

The completeness proof follows similarly to the completeness proof of $\wLogB$ and the only difference lies in the requirement of a \emph{consistent} e-d scenario in the corresponding Truth Lemma. We therefore only need to prove that the canonical epistemic scenario $(x, [x], R(x))$ of the system $\wLogB + \textrm{(D$_B$)}$ is consistent, i.e., we need to show that $R(x)\not =\emptyset$ for any maximally consistent sets of $\wLogB + \textrm{(D$_B$)}$. The canonical model for the system $\wLogB + \textrm{(D$_B$)}$ is constructed as usual, exactly the same way as the one for $\wLogB$.

\begin{lemma}\label{lemma:serial}
The relation $R$ of the canonical model $\cX^c=(X^c, \cT^c, \nu^c)$  for the system $\wLogB + \textrm{(D$_B$)}$ is serial.
\end{lemma}
\begin{proof}
For any $x\in X^c$, the set $\{\psi \ | \ B\psi\in x\}$ is consistent. Otherwise, there is a finite subset $\Gamma \subseteq \{\psi \ | \ B\psi\in x\}$ and $\varphi\in \{\psi \ | \ B\psi\in x\}$  such that $$\vdash \bigwedge_{\chi\in\Gamma}\chi \rightarrow \neg\varphi.$$ Then, by normality of $B$, $$\vdash  \bigwedge_{\chi\in\Gamma}B\chi \rightarrow B\neg\varphi.$$ Since $B\chi\in x$ for all $\chi\in \Gamma$, we have $B\neg \varphi\in x$. On the other hand, since $B\varphi\in x$ and $\vdash B\varphi\rightarrow \neg B\neg \varphi$ ((D$_B$)-axiom), we obtain $\neg B\neg\varphi\in x$, contradicting the fact that $x$ a maximally consistent set. Therefore, $\{\psi \ | \ B\psi\in x\}$ can be extended to a maximally consistent set $y$ and, since $\{\psi \ | \ B\psi\in x\}\subseteq y$, we have $xRy$.
\end{proof}

\begin{corollary}\label{cor:serial}
Let  $\cX^c=(X^c, \cT^c, \nu^c)$  be the canonical model of the system $\wLogB + \textrm{(D$_B$)}$. Then, for all $x\in X^c$, we have $R(x)\not =\emptyset$. 
\end{corollary}

\begin{proposition}
$\wLogB + \textrm{(D$_B$)}$ is a complete axiomatization of $\LangKKB$ with respect to the class of all topological subset spaces under e-d semantics for consistent e-d scenarios. 
\end{proposition}
\begin{proof}
Follows from Corollary \ref{cor:serial} similarly to the proof  of Corollary \ref{cor:comp:wEL}.
\end{proof}


\begin{proposition}
$\wLogB + \textrm{(wF)}$ is a sound axiomatization of $\LangKKB$ with respect to the class of all topological subset spaces under e-d semantics for dense e-d scenarios. 
\end{proposition}
\begin{proof}
The validity of the axioms of $\wLogB$ follows as in Theorem \ref{theorem:sound:wEL}, we only need to prove the validity of (wF) for dense e-d scenarios.  Let $\cX=\Model$ be a topological subset model, $(x, U, V)$ a dense e-d scenario, and $\varphi \in \LangKKB$.

\begin{itemize}
\item[(wF)] Suppose $(x, U, V)\models B\varphi$.  This means $V\subseteq \br{\varphi}^{U, V}$. Then, since $x\in U\subseteq cl(V)$, we obtain $x\in U\subseteq cl(\br{\varphi}^{U, V})$, meaning that $(x, U, V)\models \Diamond\varphi$. \qedhere
\end{itemize}

\end{proof}

The completeness result for  $\wLogB + \textrm{(wF)}$ follows similarly to the above case: the only key step we need to show is that the canonical epistemic scenario $(x, [x], R(x))$ of the system $\wLogB + \textrm{(D$_B$)}$ is dense.

\begin{lemma}
Let  $\cX^c=(X^c, \cT^c, \nu^c)$  be the canonical model of the system $\wLogB + \textrm{(wF)}$. Then, for all $x\in X^c$, we have that $R(x)$ is dense in $[x]$, i.e., that $[x]\subseteq \cl(R(x))$.  
\end{lemma}

\begin{proof}
Let $x\in X^c$ and $y\in [x]$. We want to show that $y\in \cl(R(x))$, i.e., for all $U\in \cB$ with $y\in U$, we should show that $U\cap R(x)\not =\emptyset$ holds.  Let $U\in\cB$ such that $y\in U$. By definition of $\cB$, the basic open neighbourhood $U$ can be of the following forms:
\begin{enumerate}
\item $U=R(z)\cap\widehat{\Box\varphi}$, for some $z\in X^c$ and $\varphi\in\LangKKB$;
\item $U=[z]\cap\widehat{\Box\varphi}$, for some $z\in X^c$ and $\varphi\in\LangKKB$.
\end{enumerate}

However,  since $y\in[x]$ and $y\in U$, we can simply replace the above cases by:
\begin{enumerate}
\item $U=R(x)\cap \widehat{\Box\varphi}$, for some $\varphi\in\LangKKB$;
\item $U=[x]\cap\widehat{\Box\varphi}$, for some $\varphi\in\LangKKB$, respectively.
\end{enumerate}

If (1) is the case, the result follows trivially since $y\in U=R(x)\cap  \widehat{\Box\varphi}=U\cap R(x)$.

If (2) is the case, $U\cap R(x)=([x]\cap\widehat{\Box\varphi})\cap R(x)=\widehat{\Box\varphi}\cap R(x)$ (by Lemma \ref{lemma:mcs}.\ref{lemma:mcs:3}). Therefore, we need to show that $R(x)\cap\widehat{\Box\varphi}\not=\emptyset$:


Consider the set $\{\psi \ | \ B\psi\in y\}\cup\{\Box\varphi\}$. This set is consistent, otherwise, there exists a finite subset $\Gamma \subseteq \{\psi \ | \ B\psi\in y\}$ such that $$\vdash \bigwedge_{\chi\in\Gamma}\chi \rightarrow\Diamond\neg \varphi.$$ Then, by normality of $B$, $$\vdash  \bigwedge_{\chi\in\Gamma}B\chi \rightarrow B\Diamond\neg\varphi.$$ We also have

\begin{table}[htp]
\begin{tabularx}{.75\textwidth}{>{\hsize=.1\hsize}X>{\hsize=1.5\hsize}X>{\hsize=1.0\hsize}X}
1. & $\vdash B\Diamond\neg\varphi \rightarrow \Diamond \Diamond\neg\varphi$ & (wF)\\
2. & $\vdash\Diamond \Diamond\neg\varphi\rightarrow \Diamond\neg\varphi$ & (4$_\Box$)\\
3. & $\vdash B\Diamond\neg\varphi \rightarrow \Diamond\neg\varphi$ & CPL: 1, 2\\

\end{tabularx}
\end{table}

Hence, 

$$\vdash  \bigwedge_{\chi\in\Gamma}B\chi \rightarrow \Diamond\neg\varphi.$$

Therefore, since $B\chi\in y$ for all $\chi\in \Gamma$, we have $\Diamond\neg \varphi\in y$. But we know that $\Box \varphi (:=\neg\Diamond\neg\varphi)\in y$ (since $y\in U=[x]\cap\widehat{\Box\varphi}$), contradicting the maximal consistency of $y$. Therefore, $\{\psi \ | \ B\psi\in y\}\cup\{\Box\varphi\}$ is consistent.  Moreover, by Lindenbaum's Lemma, it can be extended to a maximally consistent set $z$. Therefore, as $\{\psi \ | \ B\psi\in y\}\subseteq z$, we have $z\in R(y)=R(x)$ (since $y\in [x]$,  we have $R(x)=R(y)$ (by Lemma \ref{lemma:mcs}.\ref{lemma:mcs:2})). Moreover, $\Box\varphi\in z$, i.e., $z\in \widehat{\Box\varphi}$. We therefore conclude that $z\in\widehat{\Box\varphi}\cap R(x)\not =\emptyset$.
\end{proof}

\begin{corollary}\label{cor:dense}
Let  $\cX^c=(X^c, \cT^c, \nu^c)$  be the canonical model of the system $\wLogB + \textrm{(wF)}$. Then, for all $x\in X^c$, the e-d scenario $(x, [x], R(x))$ is dense.
\end{corollary}

\begin{proposition}
$\wLogB + \textrm{(wF)}$ is a complete axiomatization of $\LangKKB$ with respect to the class of all topological subset spaces under e-d semantics for dense e-d scenarios. 
\end{proposition}
\begin{proof}
Follows from Corollary \ref{cor:dense} similarly to the proof  of Corollary \ref{cor:comp:wEL}.
\end{proof}

}

%
%
%
%
%
%
%
%
%
%
\fullv{
\section{From topological spaces to belief frames}

While the one-way construction from belief frames to topological subset models presented in Section ?? is sufficient to prove the completeness of $\mathsf{KD45}_B$ with respect to the topological subset models, it also raises the question of whether it is possible to build a belief frame from a topological space. In this section, we present a reverse construction from a subclass of topological spaces to belief frames and prove a modal equivalence result for the language $\LangB$, analogous to Lemma \ref{lem:bru}.


\ayComment{It is well-known that,  for an arbitrary topological spaces $(X, \cT)$, the relation $R_\cT$ defined by 
\begin{equation}
xR_\cT y \ \mbox{iff} \ x\in \cl(\{y\}) \label{eqn.relation}
\end{equation} is reflexive and transitive. $(X, R_\cT)$ constructed from $(X, \cT)$ in the above defined way therefore constitutes a reflexive and transitive relational frame. Moreover, the connection is even stronger between the class of Alexandroff spaces in particular and the reflexive and transitive relational frames.

\begin{proposition}[\cite{vanbenthemSPACE}]\label{prop:alex}
$\cT=\cT_{R_\cT}$ iff $(X, \tau)$ is an Alexandroff space.
\end{proposition}

Therefore, Proposition  \ref{prop:alex}  implies that the reflexive and transitive relational frames corresponds exactly to the class of Alexandroff spaces. This correspondence further leads to modal equivalence between reflexive and transitive relational models and topological models based on Alexandroff spaces with respect to the McKinsey-Tarski style interior based topological semantics \cite{tarski-mckinsey} (see, e.g., \cite{vanbenthemSPACE,BvdH13,moss} for further discussion and proofs.).

In this section, we build a similar connection between belief frames and a subclass of topological spaces. The construction from belief frames to  topological spaces is presented in Section ??.  We here focus on the reverse direction and prove results analogous to Lemma \ref{lem:bru} and Proposition \ref{prop:alex}.} 

Given a topological space $(X, \cT)$ and $x\in X$, the set $\mathcal{N}_x=\{U\in\cT \ | \ x\in U\}$ is defined to be the set of all open neighbourhoods of $x$ and we define $f: X\rightarrow \mathcal{P}(\cT)$ such that $$f(x)=\{C\in\cT \ | \  C\not =\emptyset, (\forall  U\in \mathcal{N}_x) (C\subseteq U) \}.$$ 

In other words, $f(x)$ is the set of all non-empty lower bounds of $\cN_x$ in $\cT$ with respect to the inclusion relation $\subseteq$. The topological spaces corresponding to belief frames are those having a unique non-empty lower bound for each $\cN_x$:

\aybuke{change the name of these spaces, maybe belief spaces?}

\begin{definition}[Brush Space]
A \emph{brush space} $(X, \tau)$ is a topological space such that for every $x\in X$, $f(x)$ has exactly one element, i.e., $|f(x)|=1$.  
\end{definition}

For any brush frame $(X, \cT)$ and $x\in X$, we let $C_x$ to denote the unique element of $f(x)$. Moreover, given a brush space, define $$|x|=\{y\in X \ | \ f(x)=f(y)\}.$$ It is not hard to see that the sets $|x|$ partitions $X$.  We can now construct  a belief frame $(X, R_\cT)$, from a brush space $(X, \cT)$ by defining $R_\cT$ as $$R_\tau:=\bigcup_{x\in X} (|x|\times C_x).$$
In fact, each disjoint component $(|x|, |x|\times C_x)$ of $(X, R_\cT)$  is a brush frame. Therefore, each $C_x$  plays the role of the final cluster in the corresponding brush frame.


\begin{lemma}\label{lemma.helpful}
Let $(X, \cT)$ be a brush space. For each $x\in X$ and $A\subseteq X$: 
\begin{enumerate}
\item \label{lemma.helpful2} $C_x\subseteq |x|$ and  $|x|= \cl(C_x)$,
\item \label{lemma.helpful3} $\cl(A) \supseteq |x|$ if and only if $A \cap C_{x} \neq \emptyset$.

\item \label{lemma.helpful4}
$\int(A) \cap C_{x} \neq \emptyset$ if and only if $A \supseteq C_{x}$;
\end{enumerate}
\end{lemma}

\begin{proof}
Let $(X, \tau)$ be  a brush space, $x\in X$ and $A\subseteq X$. 
\begin{enumerate}

\item Let $(X, \cT)$ be a brush space, $x\in X$ and suppose $C_x\not\subseteq |x|$. This implies that there is $y\in C_x$ such that $y\not \in |x|$. This means  $C_y\not =C_x$. As  $y\in C_x\in\cT$, $C_x\in\mathcal{N}_y$, and therefore $C_y\subsetneq C_x$. Thus, $C_y\in f(x)$ and, since $C_y\not=C_x$,  we have $| f(x)|>1$, contradicting $(X, \cT)$ being a brush space.
 
 Suppose $y\in |x|$.  This means  $f(y)=f(x)=\{C_x\}$.  Thus, for all $U\in\mathcal{N}_y$, $C_x\subseteq U$, and since $C_x\not =\emptyset$, we obtain that for all $U\in\mathcal{N}_y$, $C_x\cap U=C_x\not=\emptyset$, i.e., $y\in  \cl(C_x)$. For the opposite direction, suppose $y\in \cl(C_x)$ implying that for all $U\in\cN_y$, $U\cap C_x\not =\emptyset$. Then, since $(X, \cT)$ is a brush space,  $U\cap C_x=C_x$ for all $U\in\cN_y$. Otherwise, $U\cap C_x\in f(x)$ contradicting $(X, \cT)$ being a brush space. We then have  $C_x\subseteq U$ for all $U\in\cN_y$, therefore,  $f(y)=\{C_x\}=f(x)$, i.e., $y\in |x|$.

\item Suppose $\cl(A) \supseteq |x|$. Then, by Lemma \ref{lemma.helpful}.\ref{lemma.helpful2}, we obtain $\cl(A) \supseteq C_x$. Then, as $C_x\in\cT$, we have $A\cap C_x\not =\emptyset$ by definition of $\cl$. For the opposite direction, suppose $A\cap C_x\not =\emptyset$ and let $y\in|x|$. This mean, by definition of $|x|$, that $C_x\subseteq U$ for all $U\in \cN_y$.  Therefore, by the assumption, we obtain $A\cap U\not =\emptyset$ for all $U\in\cN_y$, i.e., $y\in \cl(A)$. 

\item Suppose $\int(A) \cap C_{x}\not=\emptyset$. Then, since $(X, \cT)$ is a brush space, $\int(A) \cap C_{x}\in\cT$ and $\int(A) \cap C_{x}\subseteq C_x$, we obtain  $\int(A) \cap C_{x}=C_x$. Therefore, $C_x\subseteq \int(A)\subseteq A$. For the other direction, suppose $A \supseteq C_{x}$. Since $C_x\in\cT$, we have $\int(A) \supseteq C_{x}$ implying that  $\int(A) \cap C_{x}=C_x\not =\emptyset$.

\end{enumerate}
\end{proof}

\ayComment{
\begin{lemma}\label{brush}
For any brush space $(X, \tau)$ and any $C, C'\in\bigcup_{x\in X}f(x)$ with $C\not =C'$, we have $C\cap C'=\emptyset$.
\end{lemma}

\begin{proof}
Let $(X, \tau)$ be a brush space and $C, C'\in\bigcup_{x\in X}f(x)$ such that  $C\not =C'$. We then have $C\in f(x)$ and $C'\in f(y)$ for some $x, y\in X$ with $x\not =y$. Suppose also, toward contradiction, that $C\cap C'\not =\emptyset$. This implies, as $C, C'\in \tau$, the set $C\cap C'\in \tau$. Moreover, as $C\cap C'\subseteq C$ and $C\cap C'\subseteq C'$, we obtain $C\cap C' \in f(x)$ and $C\cap C'\in f(y)$. Thus, since $|f(x)|=|f(y)|=1$, we have $C\cap C'=C$ and $C\cap C'=C'$ implying that $C=C'$, which contradicts the assumption that   $C\not =C'$.
\end{proof} }

 



\ayComment{\begin{lemma}\label{lemma.18}
For any brush space $(X, \tau)$ and any $x\in X$, we have $C_x\subseteq |x|$.
\end{lemma}
\begin{proof}
Let $(X, \tau)$ be a brush space, $x\in X$ and suppose $C_x\not\subseteq |x|$. This implies that there is $y\in C_x$ such that $y\not \in |x|$. This means that $C_y\not =C_x$. As  $y\in C_x\in\tau$, $C_x\in\mathcal{N}_y$, and therefore $C_y\subsetneq C_x$. However, this means $C_y\in f(x)$ thus, since $C_y\not=C_x$,  $| f(x)|>1$, contradicting $(X, \tau)$ being a brush space.

\end{proof}

\begin{proposition}
For any brush space $(X, \cT)$ and any $x\in X$, the relational frame $(|x|, |x|\times C_x)$ is a brush.  Moreover, $(X, R_\cT)=(X,  \bigcup_{x\in X}(|x|\times C_x))$ is a belief frame and the disjoint union of all such brushes. 
\end{proposition}

\begin{proof}
Let $(X, \tau)$ be a brush space and $x\in X$.  By Lemma \ref{lemma.helpful}.\ref{lemma.helpful2},  we have that $C_x\subseteq |x|$, thus  $C_x$ is the unique final cluster of  $(|x|, |x|\times C_x)$. 
\end{proof}
}


\begin{proposition}
Let $\cX=(X, \cT, \nu)$ be a topological subset model based on a brush space. Then for every formula $\phi \in \LangB$, for every $(x, U)\in ES(\cX)$  with $U\subseteq |x|$  we have
$$(\cX, x, U)\models \varphi \ \mbox{iff} \ M_\cX, x\models\varphi,$$ where $M_\cX=(X, R_\tau, \nu)$.

\end{proposition}

\begin{proof}
The proof follows by induction on the structure of $\phi$; cases for the primitive propositions and the Boolean connectives are elementary. So assume inductively that the result holds for $\phi$; we must show that it holds also for $B\phi$.

\begin{align}
(\cX, x, U) \models B\psi  & \ \mbox{iff} \ U\subseteq\cl(\int(\br{\psi}^U))\notag\\
& \ \mbox{iff} \ |x| \subseteq\cl(\int(\br{\psi}^U))\tag{Lemma \ref{lemma.helpful}.\ref{lemma.helpful2}}\\
& \ \mbox{iff} \ \int(\br{\psi}^U)\cap C_x\not= \emptyset\tag{Lemma \ref{lemma.helpful}.\ref{lemma.helpful3}}\\
&  \ \mbox{iff} \ C_x\subseteq \br{\psi}^U \tag{Lemma \ref{lemma.helpful}.\ref{lemma.helpful4}} \\
&  \ \mbox{iff} \ C_x\subseteq \|\psi\|_{M_\cX} \tag{inductive hypothesis}\\
&  \ \mbox{iff} \ R(x) \subseteq \|\psi\|_{M_\cX}  \tag{since $C_x=R(x)$}\\
&  \ \mbox{iff} \  \ M_\cX, x\models B\psi\notag
\end{align}

\end{proof}
}

\subsection{Confident belief}

\begin{lemma} \label{lem:top}
Let $X$ be a topological space and $A$ an open subset of $X$. Then for any $B \subseteq X$, we have $A \subseteq^{*} B$ iff $A \subseteq \cl(\int(B))$.
\end{lemma}

\begin{proof}
First suppose that $A \not\subseteq \cl(\int(B))$. Then there is some $x \in A$ and some open set $U$ with $x \in U$ and $U \cap \int(B) = \emptyset$. Since $A$ is open, so is $U \cap A$. In fact, $U \cap A \subseteq \cl(A \mysetminus B)$; to see this, take any $y \in U \cap A$ and any open $V$ containing $y$ and observe that if $V \cap (A \mysetminus B) = \emptyset$, then it follows that $V \cap A \subseteq B$, and therefore $y \in V \cap U \cap A \subseteq \int(B)$, so $y \in U \cap \int(B)$, a contradiction. We have therefore shown that $\int(\cl(A \mysetminus B)) \neq \emptyset$, so $A \not\subseteq^{*} B$.

Conversely, suppose that $A \not\subseteq^{*} B$. Then there is some nonempty open set $U$ with $U \subseteq \cl(A \mysetminus B)$. Note that this implies that $U \cap (A \mysetminus B) \neq \emptyset$, so in particular there is some $x \in U \cap A$. Observe that $(A \cap \int(B)) \cap (A \mysetminus B) = \emptyset$; as such, $(A \cap \int(B)) \cap \cl(A \mysetminus B) = \emptyset$, so we must have $U \cap A \cap \int(B) = \emptyset$. It then follows that $(U \cap A) \cap \cl(\int(B)) = \emptyset$; this shows that $x \notin \cl(\int(B))$, so since $x \in A$, we have $A \not\subseteq \cl(\int(B))$.
\end{proof}

Let $\alpha: \LangKKB \to \LangKKB$ be the map that replaces every occurence of $B$ with $B\Diamond\Box$.

\begin{lemma} \label{lem:eq1}
For all topological subset models $\X$ and every e-d scenario $(x,U,V)$ therein, we have
$$(\X,x,U,V) \amods \phi \textrm{ iff } (\X,x,U,V) \models \alpha(\phi).$$
\end{lemma}

\begin{proof}
This follows from Lemma \ref{lem:top} using structural induction on $\phi$.
\end{proof}

\begin{lemma} \label{lem:eq2}
For all $\phi \in \LangKKB$, if $\proves_{\wLogB} \alpha(\phi)$, then $\proves_{\wLogB + \emph{\textrm{(CB)}}} \phi$.
\end{lemma}

\begin{proof}
This follows by structural induction on $\phi$ using the easy fact that $\proves_{\wLogB + \emph{\textrm{(CB)}}} B\phi \liff B\Diamond\Box \phi$.
\end{proof}

\begin{theorem}
$\wLogB + \emph{\textrm{(CB)}}$ is a complete axiomatization of $\LangKKB$ with respect to the class of all topological subset spaces under e-d semantics using the semantics given above: for all formulas $\phi \in \LangKKB$, if $\amods \phi$, then $\proves_{\wLogB + \emph{\textrm{(CB)}}} \phi$.
\end{theorem}

\begin{proof}
Suppose that $\amods \phi$. Then by Lemma \ref{lem:eq1} we know that $\models \alpha(\phi)$. By Corollary \ref{cor:comp:wEL}, then, we can deduce that $\proves_{\wLogB} \alpha(\phi)$, and so by Lemma \ref{lem:eq2} we obtain $\proves_{\wLogB + \textrm{(CB)}} \phi$, as desired.
\end{proof}

}

\end{document}